\title[Braided finite automata ]
{Braided finite automata and representation theory}
\author[Anastasia Doikou]{Anastasia Doikou}
\address[Anastasia Doikou] {Department of Mathematics, Heriot-Watt University,
Edinburgh EH14 4AS $\&$ Maxwell Institute for Mathematical Sciences, Edinburgh EH8 9BT, UK}
\email{a.doikou@hw.ac.uk}
 \newcolumntype{2}{D{.}{}{2.0}}
\DeclareMathOperator{\EEnd}{End}
\newcommand{\hiddenpower}[2] { \ifnum \numexpr#2=1 #1 \else #1^#2 \fi }
\numberwithin{equation}{section}
\def\be{\begin{equation}}
\def\ee{\end{equation}}
\def\ba{\begin{eqnarray}}
\def\ea{\end{eqnarray}}
\def\non{\nonumber}
\newcommand{\cal}{\mathcal}
\newcommand{\End}{\mathrm{End}}
\newcounter{diff_order}
\newcounter{diff_power}
\newcommand{\rawdiff}[3]
{
	\setcounter{diff_order}{0}
	\clist_map_inline:nn{#3}{\stepcounter{diff_order}}
	
	\frac{\hiddenpower{#1}{\thediff_order} #2}
	{
		\def\old_var{DefaultValue}
		\setcounter{diff_power}{0}
		
		\clist_map_inline:nn{#3}
		{
			\def\new_var{##1}
			\ifnum \thediff_power=0
				\stepcounter{diff_power}
			\else
				\tl_if_eq:NNTF \new_var \old_var
				{\stepcounter{diff_power}}
				{
					#1 \hiddenpower{\old_var}{\thediff_power}
					\setcounter{diff_power}{1}
				}
			\fi

			\def\old_var{##1}
		}
		
		#1 \hiddenpower{\old_var}{\thediff_power}
	}
}
\def\Label#1{\label{#1}\ifmmode\llap{[#1] }\else 
  \marginpar{\smash{\hbox{\tiny [#1]}}}\fi} 
  \def\Label{\label} 
\newlength{\bibitemsep}\setlength{\bibitemsep}{.2\baselineskip plus .05\baselineskip minus .05\baselineskip}
\newlength{\bibparskip}\setlength{\bibparskip}{0pt}
\let\oldthebibliography\thebibliography
\renewcommand\thebibliography[1]{%
  \oldthebibliography{#1}%
  \setlength{\parskip}{\bibitemsep}%
  \setlength{\itemsep}{\bibparskip}%
}
\newtheorem{thm}{Theorem}[section]
\newtheorem{lemma}[thm]{Lemma}
\newtheorem{pro}[thm]{Proposition}
\newtheorem{defn}[thm]{Definition}
\newtheorem{rem}[thm]{Remark}
\newtheorem{exa}[thm]{Example}
\newtheorem{conj}[thm]{Conjecture}
\newcommand{\id}{\operatorname{id}}
\newcommand{\red}[1]{\textcolor{red}{#1}}
\newcommand{\blue}[1]{\textcolor{blue}{#1}}
\newenvironment{widegather }{\wideregion[-9mm]\gather}{\endgather\endwideregion}
\begin{document}
\tikzstyle{level 1}=[level distance=22mm, sibling distance=50mm]
\tikzstyle{level 2}=[level distance=22mm, sibling distance=22mm]
\tikzstyle{level 3}=[level distance=20mm]






 



 \begin{abstract} 
We introduce classical and non-deterministic finite automata associated with representations of the braid group. After briefly reviewing basic definitions on finite automata, Coxeter's groups and the associated word problem, we turn to the Artin presentation of the braid group and its quotients. We present various representations of the braid group as deterministic or non-deterministic finite state automata and discuss connections with $q$-Dicke states, as well as Lusztig and crystal bases. We propose the study of the eigenvalue problem of the $\mathfrak{U}_q(\mathfrak{gl}_n)$ invariant spin-chain like ``Hamiltonian'' as a systematic means for constructing canonical bases for irreducible representations of $\mathfrak{U}_q(\mathfrak{gl}_n).$  This is explicitly proven for the algebra $\mathfrak{U}_q(\mathfrak{gl}_2).$ Special braid representations associated with self-distributive structures are also studied as finite automata. These finite state automata organize clusters of eigenstates of these braid representations.
\end{abstract}
\maketitle


\tableofcontents

\section{Introduction} 
\noindent The aim of this study is to utilize specific types of finite-state automata called braided finite automata in order to study irreducible representations of quantum algebras \cite{FRT, Drinfeld, Jimbo}, associated with certain representations of the braid group.  Finite automata are in general mathematical models that describe computational ``machines''. They were first studied in the 50's by Kleene who also found significant applications to computer theory \cite{Kleene} (for a concise pedagogical review on the subject, see for instance \cite{Lawson}). A finite automaton consists of a finite number of abstract states, an input alphabet and transition functions. There are two basic types of finite automata, the deterministic or classical (combinatorial) and the non-deterministic \cite{RabinScott} automata. Two fundamental special cases of non-deterministic automata are the probabilistic \cite{Rabin} and the quantum automata \cite{quantum1, quantum2} (see also \cite{review} on a detailed account on both probabilistic and quantum automata). We consider in this study both deterministic and non-deterministic automata and we apply the idea of ``linearization'' on sets, in order to map abstract finite automata to finite vector spaces $V_n$ with dimension equal to the cardinality $n$ of the set of states (in this manuscript $V_n$ is either ${\mathbb C}^n$ or ${\mathbb R}^n$). Specifically, we map the abstract states to basis vectors in $V_n$ and the transition functions to $n\times n$ matrices, called transition matrices. To describe then combinatorial or classical automata it is enough to consider the elements of a basis of the corresponding finite dimensional vector space, whereas in order 
to describe non-deterministic automata  we extend our framework to the full vector space.
We note that throughout this manuscript the characterizations deterministic, classical, set-theoretic and combinatorial are used equivalently.
Specifically, we use the name combinatorial because the matrices associated with classical automata are combinatorial (a precise definition is given later, see Definition \ref{combmat}).

As noted our main objective is to employ finite state automata to study finite irreducible representations of certain quantum groups.
Quantum algebras (or quantum groups) are special cases of Hopf algebras introduced by Jimbo and Drinfel'd \cite{Drinfeld, Drinfeld2, Jimbo} independently and may be seen as deformations of the usual Lie algebras or their infinite dimensional extensions, the Kac-Moody algebras \cite{Kac}. From the point of view of representation theory Lusztig \cite{Lusztig} introduced canonical bases of such quantum groups using both algebraic and geometrical considerations, whereas Kashiwara \cite{Kashi} showed independently that  modules of quantum groups have ``crystal'' bases with important combinatorial properties (see also a recent review on crystals \cite{crystal}). Numerous studies also explore the eigenvalue problem of periodic quantum spin chain ``Hamiltonians'' and Bethe ansatz techniques \cite{Bethe1, Bethe2}, especially in the thermodynamic limit, in connection with representation theory, combinatorics and cellular automata (see for instance \cite{Paths, KirRes1, KirRes2, Kir3, Kuniba0, Kuniba1, Kuniba2}). In this article we study the eigenvalue problem of open finite quantum spin chain Hamiltonians \cite{Sklyanin}, which are invariant under the action of the said quantum groups. The use of finite automata theory facilitates such a study providing the general structure of eigenstates. Furthermore, we prove that degenerate  eigenstates of such Hamiltonians constitute  bases for irreducible representations of the associated quantum algebra. 

The material and the key results presented in this article are outlined as follows.

The article synthesises three distinct research themes: finite automata theory, quantum algebras and associated braid group representations, and the spectral decomposition of special integrable Hamiltonians, which are specific linear combinations of all length-one words of the braid group. Consequently, it is addressed to a general mathematical physics audience, algebraically inclined researchers, and those interested in quantum computing applications. Given these diverse target areas, the introductory review material presented in Sections 2 - 5, although extended, is necessary to help readers from different fields understand the core results in Sections 6 - 7.
Specifically, in Sections 2 and 3 we review basic ideas about finite automata, alphabets, words, languages as well as the definitions of deterministic and non-deterministic automata (see also \cite{Lawson} and references therein). A brief description of probabilistic and quantum automata is also presented.  Various simple examples are presented throughout these sections.
In Section 4 we review basic definitions on Coxeter groups and recall the notion of weak order of sets and the associated word problem, which will be useful for the rest of the manuscript (see also \cite{Comb-book}).
In Section 5, and specifically in Subsection 5.1 we recall the definition of Artin braid groups and Hecke algebras and we also introduce the so called ``shuffle'' element of the Hecke algebra that yields all possible reduced words of the Hecke algebra in line with Matsumoto's solution of the word problem for Coxeter groups \cite{Matsu} (see also for instance \cite{Comb-book} and references therein). In Subsection 5.2 we recall the definition of the algebra ${\mathfrak U}_q(\mathfrak{gl}_n),$ \cite{Drinfeld, Drinfeld2, Jimbo, Jimbo2} and briefly discuss the duality between the Hecke algebra ${\cal H}_N(q)$ and ${\mathfrak U}_q(\mathfrak{gl}_n).$  A short review on Young tableaux and the Schur-Weyl duality is presented in Subsection 5.3 (see for instance \cite{FultonHarris, Fulton} for a detailed exposition on these subjects).

In Section 6 and in particular in Subsection 6.1 we introduce specific braided automata and show using tensor representations of the ${\cal H}_N(q)$ Hecke algebra how these automata act on certain tensor product states. 
We recall the ``shuffle'' operator, which is an element of $\EEnd(V_n^{\otimes N}),$ and prove that it yields all possible permutations of any state  $ \hat e_{i_1} \otimes \hat e_{i_2} \otimes \ldots \otimes \hat e_{i_N},$ where $i_1 \leq i_2 \ldots \leq i_N$ and $\{\hat e_j\},$  $j \in \{1,2 \ldots, n\}$ 
is the standard basis of $V_n$ (Theorem \ref{shuffle1}). The action of two special cases of the shuffle operator, called the $q$-symmetrizer and $q$-antisymmetrizer, on specific tensor product states yields all $q$-(anti)symmetric states. The $q$-(anti)symmetric states are elements of a $q$-deformed Fock space, which is defined in Subsection 6.1 (see relevant construction for instance in \cite{MisMiw, Lechner}, see also connection to Nichols algebras in \cite{Andru}). The $q$-symmetric states in particular are also known as qudit $q$-Dicke states in the framework of quantum computing and quantum entanglement \cite{Nepo1, Nepo2}. These are $q$-deformed, high rank generalizations of the qubit Dicke state first introduced in \cite{Dicke1}. In Subsection 6.2 we prove that the $q$-symmetric states form an orthogonal basis for an irreducible representation of $\mathfrak{U}_q(\mathfrak{gl}_n).$ These results are presented in Theorem \ref{basicpro} and Proposition \ref{basiclemma}. Finite irreducible representations of $\mathfrak{U}_q(\mathfrak{gl}_n)$ can also be easily interpreted as finite automata. The crystal limit ($q\to 0$) \cite{crystal, Kashi} is also briefly discussed.

In Section 7 we study the eigenvalue problem for the $\mathfrak{U}_q(\mathfrak{gl}_n)$ invariant quantum spin chain Hamiltonian \cite{Sklyanin, Pasquier, Kulish, MeNe, DoiNep, Doikous}. We first prove that the $q$-symmetric states are all eigenstates of the open spin chain Hamiltonian with the same eigenvalue (Proposition \ref{sypro}). We claim that sets of eigenstates of the Hamiltonian form orthogonal bases of irreducible representations of $\mathfrak{U}_q(\mathfrak{gl}_n).$ We note that the $\mathfrak{U}_q(\mathfrak{gl}_n)$ invariant Hamiltonian is nothing but the sum of all length-one words of the Hecke algebra ${\cal H}_N(q).$ 
 In general, we claim that the decomposition of the space $V_n^{\otimes N},$ on which the Hamiltonian acts, in terms of eigenspaces is given as follows:
\begin{equation}
    V_n^{\otimes N} = \bigoplus_{\lambda \vdash N} m_{\lambda}V_n^{(\Lambda_{\lambda})}, \label{deco1}
\end{equation}
where $\Lambda_{\lambda}$ are the Hamiltonian's eigenvalues that correspond to a $\lambda$-shaped Young-tableau, $V_n^{(\Lambda_{\lambda})}$ are the corresponding eigenspaces, $\mbox{dim}V_n^{(\Lambda_{\lambda})} = d_{\lambda,n}.$ Also, $m_\lambda$ is the dimension of the $\lambda$-shaped standard Young tableau and $d_{\lambda,n}$ is the dimension of the $\lambda$-shaped semi-standard Young-Tableau. As already noted a brief review of Young tableaux and related definitions are presented in Subsection 5.3. The decomposition  (\ref{deco1}) is a general claim for the algebra $\mathfrak{U}_q(\mathfrak{gl}_n),$ but we explicitly prove this statement for $\mathfrak{U}_q(\mathfrak{gl}_2),$ see  Proposition \ref{systema} and Theorem \ref{thetheorem}. Crucially, these findings establish an explicit link between pure representation theory of quantum algebras and the spectral decomposition of spin-chain Hamiltonians. Instead of using Bethe ansatz techniques, our results in Section 7 rely primarily on combinatorial and linear algebraic arguments.

In Section 8 we focus on non-involutive combinatorial or set-theoretic solutions of the braid equation. The word problem associated with braid groups is only solved when a Hecke type or involution condition also holds. Therefore, studying the eigenvalue problem of open, quantum spin-chain-like Hamiltonians for non-involutive braid solutions represents a completely new area of research. To our knowledge, no systematic techniques—such as the Bethe ansatz or highest-weight arguments—are currently available for this study. We first introduce racks and quandles \cite{Jo82, Matv, Deho}, which are algebraic structures satisfying a self-distributivity condition used to derive non-involutive, invertible combinatorial solutions. We then focus on a specific self-distributive structure, the dihedral quandle, to analyze the eigenvalue problem of its corresponding braid equation solution. To organize the eigenstates of this solution, we define rack and quandle automata, focusing specifically on the dihedral quandle automaton. Finally, we discuss the centralizers of these rack and quandle braid solutions and present preliminary results on their finite representations.

\section{Deterministic or combinatorial finite automata}
\noindent In this section, we review the foundational definitions and standard notation governing finite automata. We limit our discussion to the core concepts necessary for our subsequent analysis; for a comprehensive treatment of advanced core topics—such as the pumping lemma or the explicit proof of Kleene's Theorem (which is only stated herein)—the reader is referred to \cite{Lawson} and the references therein.

We first introduce the {\it alphabet} $\Sigma$, which is a finite non-empty set. The elements of $\Sigma$ are called \emph{letters}\index{letter}, and a finite sequence of letters is called a \emph{word}\index{word} or {\it string}.  Words are created by concatenating letters, for example, if $\Sigma=\big \{a,b,c\big \}$, then $aabaca$ is a word over $\Sigma$.  The \emph{empty sequence}\index{word!empty} is considered a word, and is denoted $\epsilon.$  The set of all words over $\Sigma$ is denoted $\Sigma^*$, and the set of all non-empty words is denoted $\Sigma^+$. The \emph{length}\index{length} of the word $w$, that is, the number of letters in $w$, is denoted $|w|$.
If $u,v\in \Sigma^*$ then we can form a new word $uv$ by concatenating the two sequences. Concatenation of words is obviously an associative and non-commutative operation on $\Sigma^*$ (i.e. order matters!), also
\begin{align*}
|uv| &= |u|+|v|,\textrm{ and}\\
u\,\epsilon &= \epsilon\,u = u.
\end{align*}
Any subset of $\Sigma^*$ is called a \emph{language}\index{language} over $\Sigma$. Also, $\Sigma^*$ is a free monoid on $\Sigma,$ whereas $\Sigma^+$ is a free semigroup.

We define a collection of basic operations on languages over $\Sigma$. 
The product operation on words can be naturally extended to languages: if $K$ and $L$ are languages over $\Sigma$, we define their \emph{concatenation product}\index{concatenation product} $KL$ to be the set of all products of a word in $K$ followed by a word in $L$: 
$KL = \big \{uv \mid u\in K\textrm{ and }v\in L\big \}.$
 The union and intersection of two languages $K, L$ over $\Sigma$ are defined as  $ K\cup  L = \big \{x| x \in K ~ \mbox{or} ~ x \in L\big \}$ and 
  $ K\cap  L = \big \{x| x \in K ~ \mbox{and} ~ x \in L\big \}$ respectively. The complement of language $L$ is defined as $L^c = \big \{x| x \notin L \big\}.$
 We define for any language $L$ the power notation: $L^0 = \big \{\epsilon\big \}$ and $L^{n+1} = L^n \cdot L.$ For $n> 0$ the language $L^n$ consists of all strings $u$ of the form $u= w_1 w_2 \ldots w_n,$ where $w_i\in L.$
We finally define the \emph{Kleene star}\index{star (Kleene)} of a language $L$ denoted $L^*$ as $L^* = \underset{n\geq 0}{\bigcup} L^n,$ we also define $L^+ =\underset{n\geq 1}{\bigcup} L^n.$ 

\begin{rem} \label{tree} {\bf (Left tree order).} It is useful to have a standard way to list strings over an alphabet. This can be achieved using the so-called tree order on $\Sigma^*,$ also known as the {\it length-plus lexicographic order} (see also, for instance, \cite{Lawson}). 
Let $\Sigma = \big \{a_1, a_2, \ldots, a_n \big \}$ be an alphabet. Choose a fixed order for the elements of the alphabet, e.g. the standard ordering: $a_1 <a_2 < \ldots < a_n,$ or any other order can be chosen (all possible permutations of the elements of the alphabet). If a non-standard ordering is chosen, it should be stated. 
We may now grow a tree on $\Sigma^*,$ whose root is $\epsilon$ and whose vertices are labelled by elements of $\Sigma^*$ according to the following rules: if $w$ is a vertex, the vertices growing from $w$ are $a_1w, a_2 w, \ldots, a_n w.$ The tree order on $\Sigma^*$ is obtained as follows: 
\begin{eqnarray}
x<y ~~\mbox{if} ~~ |x| < |y|, ~~\mbox{or} ~~|x| = |y|~~ \mbox{and the string $x$ is located to the left of the string $y$.} \nonumber   \end{eqnarray}
This ordering means that a string precedes all strictly longer strings, while all strings of the same length are listed lexicographically, that is, they are listed in a dictionary (or {\it lexicon} in Greek) based on the ordering of the corresponding alphabet.
\end{rem}

\begin{exa} \label{extree}
 We consider a simple example of an alphabet and construct the associated tree order. Let $\Sigma = \big \{a,b,c\big \}$  and consider the standard order $a<b<c.$

\begin{center}
$$\vdots$$
\begin{tikzpicture}[shorten >=0.8pt, node distance=2cm, grow=up,->,>=angle 60]
\begin{scope}[yshift=-4cm]
  \node {$\epsilon$}
   child {node {$c$}
    child {node {$cc$}
      }      child {node {$bc$}
      }
      child {node {$ac$}
      }
    }    child {node {$b$}
    child {node {$cb$}
      }        child {node {$bb$}
      }
      child {node {$ab$}
      }    
    }
    child {node {$a$}
    child {node {$ca$}
      }      child {node {$ba$}
      }
      child {node {$aa$}
      }
    };
\end{scope}
\end{tikzpicture}
\end{center}
We can proceed to construct strings of length three and so on. In the tree diagram above the string ordering reads as follows: $\epsilon, a, b, c, aa,ba,ca, ab,bb,cb, ac, bc, cc \ldots.$ 
\end{exa}

After the brief review on alphabets, strings (words) and languages we provide a formal definition of a deterministic finite automaton. 
\begin{defn}
A deterministic finite automaton is a tuple $(Q, \Sigma, \delta_a, q_0, F),$ 
where $a \in \Sigma$ and:
\begin{enumerate}
\item $Q$ is a finite set called the states
\item $\Sigma$ is a finite set called the alphabet
\item $\delta_a : Q \to Q$ called the transition functions
\item $q_0 \in  Q$  is the start state
\item  $F \subseteq Q$ is the set of accepting (or terminal) states.
\end{enumerate}
\end{defn}
We can also define the composition of transition maps in finite automata. If $\delta_b(q_i) = q_j$ and $\delta_a(q_j) = q_k,$ 
$q_i,q_j,q_k\in  Q,$ then $\delta_a(\delta_b(q_i)) = : \delta_{ab}(q_i) = q_k.$
Also, we say that an automaton is {\it incomplete} when some of the transitions are not defined, i.e. certain states are not mapped to new states via these transitions. In this case, an obvious choice would be to send all the ``un-mapped'' states to an extra added state denoted $\hat q$ (see more below in the text when we introduce the linearization of an automaton).

Some useful definitions of accepted words, language recognition and regular languages follow.
\begin{defn} (Word acceptance)
Let $\Sigma$ be our alphabet and let $A = (Q, \Sigma, \delta_a, q_0, F),$ be our finite
automaton. A finite sequence $w_1, w_2, \ldots, w_n,$ where each $w_i\in \Sigma$ 
is accepted by $A$ if and only if there exists a
sequence of states $r_0, r_1,\ldots, r_n \in Q$ such that:
\begin{enumerate}
\item $r_0 = q_0,$ we begin from the starting state
\item for each $i\in \big \{0,1, \ldots, n-1\big \},$ $\delta_{w_{i+1}}(r_i) = r_{i+1},$ i.e. the
computation follows exactly the word
\item $r_n \in F$, i.e. we end up in an accepting state.
\end{enumerate}
\end{defn}
{That is to say, after processing a valid input string from a recognized language, the deterministic finite automaton terminates in a designated accepting state. Any state reached during a transition that is neither an initial state nor an accepting state is defined as an intermediate state.}

\begin{defn} (Language Recognition)
We say that a deterministic finite automaton $A$ recognizes a
language $L$ if and only if $L = \big \{w | w ~\mbox{is accepted by} ~A\big \}.$
\end{defn}

\begin{defn} (Regular language, Kleene's Theorem)
A language is called regular if and only if it is recognized by some deterministic
finite automaton.
\end{defn}

In fact, we can apply certain regular operations on
languages, such that the language regularity is preserved, i.e. if we start
with a regular language, no matter how many times we
will apply these operations, we will still have a regular
language. The operations of {\it concatenation product}, {\it union}, {\it intersection}, {\it complement} 
and {\it Kleene's star} defined earlier are all regular operations.
We also note that two automata are said to be equivalent if they accept the same language.

Below, we illustrate directed graphs of finite-state automata with distinguished accepting states and labelled transitions, where each action labels exactly one outgoing arrow. The start state is indicated by a free incoming arrow on the left, whereas an accepting (final) state is represented by a double circle. {This double circle indicates that the string processed thus far belongs to the language recognized by the machine; any other state is represented by a single circle. Any state within a finite automaton can be designated as accepting, and accepting states may feature outgoing transitions; an automaton may pass through an accepting state mid-string, continuing to subsequent states as long as characters remain to be read. }
Note that in deterministic automata it is impossible for two arrows to leave the same state carrying the same label, 
i.e. the diagram in Figure \ref{fig1} is {\it forbidden}:
\begin{center}
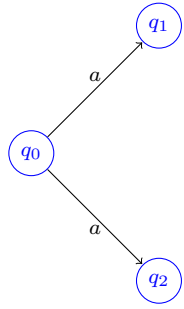
\begin{figure}[ht]
\footnotesize
\begin{tikzpicture}[shorten >=0.3pt,node distance=2.4cm,on grid,auto] 
   \node[state,inner sep=0, minimum size=2.0em, blue] (q_0)   {$q_0$}; 
   \node[state,inner sep=0, minimum size=2.0em, blue] (q_1) [above right=of q_0] {$q_1$}; 
   \node[state,inner sep=0, minimum size=2.0em, blue] (q_2) [below right=of q_0] {$q_2$}; 
    \path[->] 
    (q_0) edge[right, above]  node{$a$} (q_1)
    (q_0) edge[right, below] node{$a$} (q_2);
\end{tikzpicture}
\caption{Forbidden diagram in deterministic automata.} \label{fig1}
\end{figure}
\end{center}
\begin{exa} \label{exa01}
Our first example is a 3-state automaton $Q = \big \{q_1, q_2, q_3\big \}$ with $q_1$ 
being the start state, the alphabet is $\Sigma =\big \{a,b\big \}$ and we have chosen $q_2$ as the final state (Figure 2).
\begin{center}
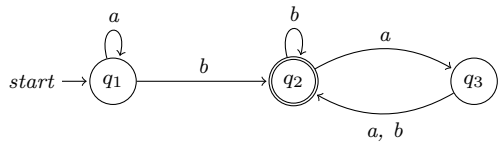
\begin{figure}[ht]
\footnotesize
\begin{tikzpicture}[shorten >=1pt,node distance=2.4cm,on grid,auto] 
\node[state,inner sep=0, minimum size=2.0em, initial] (q1) {$q_1$};
\node[state,inner sep=0, minimum size=2.0em, accepting, right of=q1] (q2) {$q_2$};
\node[state,inner sep=0, minimum size=2.0em, right of=q2] (q3) {$q_3$};
\path[->] (q1) edge[loop above] node{a} (q1)
(q1) edge[above] node{b} (q2)
(q2) edge[loop above] node{b} (q2)
(q2) edge[bend left, above] node{a} (q3)
(q3) edge[bend left, below] node{a, b} (q2);
\end{tikzpicture}
\caption{A 3-state automaton} \label{3s}
\end{figure}
\end{center}
The transition table for any automaton consists of rows and columns; 
the rows are labeled by the states and the columns are labeled by the input letters.
The transition table for the automaton is
\begin{center}
\begin{tabular}{ |c|c|c|c| } 
\hline
 & $a$ &  $b$  \\
\hline
\multirow{3}{2em}
{\ $q_1$ \\ \ $q_2$\\ \ $q_3$  } 
& $q_1$ & $q_2$  \\ 
& $q_3$ & $q_2$  \\ 
& $q_2$ & $q_2$  \\
\hline
\end{tabular}
\end{center}
\begin{center}
{Table 1}
\end{center}
\end{exa}
\begin{exa}\label{e1}
Our second example is a 4-state automaton $Q=\big \{q_1,q_2,q_3,q_4\big \},$ $\Sigma = \big \{a,b\big \},$ $q_1$ is 
the start state and $q_4$ is an accepting state.
\begin{center}
\footnotesize
\begin{tikzpicture}[shorten >=1pt,node distance=2.5cm,on grid,auto] 
   \node[state,inner sep=0, minimum size=2.0em, initial] (q_0)   {$q_1$}; 
   \node[state, inner sep=0, minimum size=2.0em] (q_1) [above right=of q_0] {$q_2$}; 
   \node[state, inner sep=0, minimum size=2.0em] (q_2) [below right=of q_0] {$q_3$}; 
   \node[state,inner sep=0, minimum size=2.0em, accepting](q_3) [below right=of q_1] {$q_4$};
    \path[->] 
    (q_0) edge  node {a} (q_1)
          edge  node [swap] {b} (q_2)
    (q_1) edge  node  {a} (q_3)
          edge [loop above] node {b} ()
    (q_2) edge  node [swap] {a} (q_3) 
          edge [loop below] node {b} ();
\end{tikzpicture}
\end{center}
\end{exa}
In both examples above, given the transition tables, various other choices of starting and accepting states can be made.

\noindent{\bf Combinatorial automata.}
Throughout this manuscript, we consider maps of abstract finite automata on ${\mathbb R}^{n}$ (or ${\mathbb C}^n$ depending on the type of automaton we consider), i.e. we consider the 
{\it linearization} of the automaton.
\begin{rem} \label{rem1} {\bf (Linearization.)}
 Let $Q = \big \{q_1, q_2, \ldots, q_n\big \}$ be the set of states for some automaton. Let also $\Sigma =\big \{w_1,w_2, \ldots, w_n\big \}$ be the alphabet and $\delta_a: Q \to Q,$ $a\in \Sigma$ be the transition maps, such that $q_i \mapsto \delta_a(q_i) = q_j\in Q.$ 
 Via the linearization process, we will be able to express the states as vectors and the maps $\delta_a:  Q \to Q$ as $n \times n$ matrices. Specifically, consider the vector space $V= \mathbb{C}Q$  of dimension equal to the cardinality of $Q$. 

 Let ${\mathbb B}_n = \big  \{b_x \big  \},~x\in Q$ be a basis\footnote{We always consider orthonormal bases, given that any basis can be made  orthonormal by means of the Gram-Schmidt process.} of the $n$-dimensional vector space ${\mathbb C}^n,$ i.e. $b_x$ are in general $n$-dimensional linearly independent column vectors, such that $b_x^{\dagger} b_y = \delta_{x,y},$ $x,y\in Q,$ where $\delta_{x,y}$ is Kronecker's $\delta$ and $^{\dagger}$ denotes transposition and complex conjugation. In this article  we shall be primarily considering the standard canonical basis of ${\mathbb C}^n$ (or ${\mathbb R}^n$) given by $n$-dimensional column vectors $\hat e_{q_j},$ (or just $\hat e_j$) $q_j \in Q,$ such that they have one non-zero entry in the $j^{th}$ entry of the column vector that takes the value 1.
Let also ${\mathbb B}_n^*= \big  \{\hat e_x^T   \big  \},~x\in Q$ ( $^T$ denotes transposition) be the dual basis: $\hat e_x^T \hat e_y= \delta_{x,y},$ also $e_{x,y} := \hat e_x  \hat e_y^T,$ which form a basis of $\End({\mathbb C}^n),$ $x,y \in Q.$ 

Specifically, via linearization:  $Q \to {\mathbb C}^n,$ such that 
$q_i \mapsto  \hat e_{q_i}$ and any transition function $\delta_a: Q \to Q,$ $a \in \Sigma,$ is expressed as $n \times n$ matrix, $\delta_a \mapsto M_a \in \End({\mathbb C}^n)$:
$M_a = \underset{x,y \in Q}{\sum} (M_a)_{x,y}e_{x,y},$ such that $M_a \hat e_{q_i} = \hat e_{q_j}, ~~ q_i, q_j \in Q.$ Moreover, strings are created via matrix  multiplications: $M_{ab} =  M_a M_b,$ $a, b \in \Sigma,$ and this can be extended to elements in $\Sigma^*,$ that is $M_aM_w = M_{aw}$ for $a\in \Sigma$ and $w \in \Sigma^*,$ also $M_{\epsilon}$ is the identity matrix. In summary, for any finite state automaton of $n$ states, the transitions between states are represented by $n\times n$ matrices, called the transition matrices, whereas the states are represented by the standard basis of ${\mathbb C}^n$ as $n$-column vectors.

In the special case where a transition $\delta_a(y)$ is not defined for some $y \in Q$ and some $a \in \Sigma$, i.e. $y$ is not mapped to any state via $\delta_a,$  then in the linearized version we consider $M_a \hat e_y =0,$ i.e. $(M_a)_{x,y} =0,$ for the given $a, y$ and for all $x\in Q.$ That is to say, when there are undefined transitions, the corresponding transition matrix has zero columns. To conclude, all undefined transitions are mapped to the zero column vector.
\end{rem}
\begin{exa}

$ $

\begin{enumerate}
\item The linearization of the $3$-state automaton of Example \ref{exa01}:
\begin{equation}
    q_i \mapsto   \hat e_{q_i}, ~~i\in \big \{1,2,3\big \} ~~\mbox{and} ~~ 
    M_a = \begin{pmatrix} 1 &0 &0 \\ 0 &0 &1\\ 0 &1 & 0 \end{pmatrix}, ~~~M_b = \begin{pmatrix} 0 &0 &0 \\ 1 &1 &1\\ 0 &0 & 0 \end{pmatrix} \nonumber\end{equation}

\item The linearization of the $4$-state automaton of Example \ref{e1}:
\begin{equation}
     q_i \mapsto   \hat e_{q_i}, ~~i\in \big \{1,2,3,4\big \} ~~\mbox{and} ~~ M_a = \begin{pmatrix} 0 &0 &0 &0\\ 1 &0 &0 &0\\ 0 &0 & 0 &0\\ 0&1&1&0 \end{pmatrix}, ~~~M_b = \begin{pmatrix} 0 &0 &0 &0\\ 0 &1 &0&0\\ 1 &0 & 1& 0\\ 0&0&0&0 \end{pmatrix} \nonumber\end{equation}
\end{enumerate}
\end{exa}

In this manuscript, we distinguish three types of finite automata, depending on the type of transition matrices:
\begin{enumerate}
    \item Combinatorial automata:  the transition matrices are combinatorial.
    \item Probabilistic automata:  the transition matrices are stochastic. 
    \item Quantum automata: transition matrices are unitary.
\end{enumerate}
Precise definitions of combinatorial and probabilistic vectors and matrices are given later (see Definitions \ref{combmat} and \ref{probmat}). If the transition matrices are not combinatorial, stochastic or unitary then the automaton is simply characterized as non-deterministic. In any case, the probabilistic and quantum automata are special cases of non-deterministic automata.

We start by introducing the definitions of combinatorial vectors and matrices.
\begin{defn} \label{combmat} The column vector with all its entries being zero except one, which takes the value 1, is called a combinatorial vector.
   A matrix with columns that are combinatorial or zero vectors is called a combinatorial matrix. An $n\times n$ matrix with columns being $n$ distinct combinatorial vectors is called fully combinatorial. The $n \times n$ matrices $e_{i,j},$ $i,j \in [n]$ defined in Remark \ref{rem1} are called elementary combinatorial matrices.
   \end{defn}
We may now define the {\it combinatorial (or classical) finite automaton}, adapted to the purposes of the present analysis.
\begin{defn} Let $Q= \big \{q_1, q_2, \ldots, q_n\big \}$ be a finite set of abstract states.
A combinatorial finite automaton is a tuple $({\mathbb B}_n,\hat 0, \Sigma, M_a, q_0, F),$ 
where $a \in \Sigma$ and :
\begin{enumerate}
\item ${\mathbb B}_n = \big  \{\hat e_{q_1}, \hat e_{q_2}, \ldots, \hat e_{q_n} \big  \}$ is the standard canonical basis of ${\mathbb R}^n$. 
\item $\hat 0$ is the $n$-dimensional zero column vector
\item $\Sigma$ is a finite set called the alphabet
\item $M_a: {\mathbb B}_n \to {\mathbb B}_n \cup \hat 0$ are $n\times n$ combinatorial matrices, called transition matrices
\item $q_0 \in  {\mathbb B}_n$ is the start state
\item  $F \subseteq {\mathbb B}_n \cup \hat 0$ is the set of accepting (or terminal) states
\end{enumerate}
\end{defn}
According to Remark (\ref{rem1}) any deterministic abstract automaton can be mapped to a combinatorial automaton.

We shall now introduce the definition of isomorphic combinatorial automata.
\begin{defn} (Combinatorial isomorphisms) 
Two combinatorial automata $A:= ({\mathbb B}_n,\hat 0, \Sigma, M_a, q_0, F),$ $A':=({\mathbb B}_n,\hat 0, \Sigma, M'_a, q'_0, F'),$  
are isomorphic if there exists a combinatorial $n\times n$ matrix $S,$ such that for a bijective function $f: Q \to Q,$ $x \mapsto f(x),$ $S:= \underset{x\in Q}\sum e_{f(x), x},$ i.e. $\hat e_{f(x)} = S \hat e_x,$ $x \in Q$ and $M_a' = SM_aS^{-1},$ $a\in \Sigma.$
\end{defn}
Such combinatorial transformations basically reshuffle the elements of the basis, i.e., $\hat e_x \mapsto \hat e_{f(x)},$ for all $x\in Q,$ but the basis does not change. Henceforth, when we say deterministic automaton we refer to a combinatorial automaton. 
Note also that in principle, the set of states and the alphabet can be infinite sets; however, 
in this analysis we will be focusing on finite sets of states and finite alphabets.

Before we discuss non-deterministic automata in the next section we introduce the semi-combinatorial (or semi-deterministic)
automaton which will be used in our present analysis.
\begin{defn} Let $Q= \big \{q_1, q_2, \ldots, q_n\big \}$ be a finite set of abstract states.
A semi-combinatorial finite automaton is a tuple $({\mathbb B}_n,\hat 0, \Sigma, M_a, q_0, F),$ 
where $a \in \Sigma$ and :
\begin{enumerate}
\item ${\mathbb B}_n =\big \{\hat e_{q_1}, \hat e_{q_2}, \ldots, \hat e_{q_n}\big \}$ is the standard canonical basis of ${\mathbb C}^n$. 
\item $\hat 0$ is the $n$-dimensional zero column vector
\item $\Sigma$ is a finite set called the alphabet
\item $M_a: {\mathbb B}_n \to {\mathbb B}_n \cup \hat 0$ are $n\times n$ matrices, called transition matrices, such that $M_a  =\underset{x \in Q}{\sum} m_{x}^{(a)} e_{f(x), x},$ where $f:Q\to Q,$ $x \mapsto f(x),$ and  $m_{x}^{(a)} \in {\mathbb C}.$
\item $q_0 \in  {\mathbb B}_n$ is the start state
\item  $F \subseteq {\mathbb B}_n \cup \hat 0$ is the set of accepting (or terminal) states
\end{enumerate}
\end{defn}
Any transition from $\hat e_x$ to $\hat e_{f(x)},$ (or from $x$ to $f(x)$) $x\in Q$ in an automaton graph is represented as
\begin{center}
\begin{figure}[ht]
\footnotesize
\begin{tikzpicture}[shorten >=0.8pt,node distance=2.5cm,on grid,auto] 
   \node[state,inner sep=0, minimum size=2.0em, blue] (q_0)   {$x$}; 
   \node[state,inner sep=0, minimum size=2.0em, blue] (q_1) [right=of q_0] {$f(x)$}; 
    \path[->] 
    (q_0) edge[]  node{$a$;$m^{(a)}_x$} (q_1);
\end{tikzpicture}
\end{figure}
\end{center}

A typical example of a semi-combinatorial automaton follows.
Note that in this manuscript, we indicate in the automaton diagram either the elements of the alphabet and the elements of the basis or the elements of the alphabet and the abstract states. In the graph of the example below for instance we just indicate the elements of the basis and the elements of the alphabet. It is also convenient to define the shorthand notation for any $n\in {\mathbb Z}^+,$ $[n]:= \big \{1,2,\ldots,n \big \}$
\begin{exa} \label{exafirst}
(The $\mathfrak{U}_q(\mathfrak{sl}_2)$ automaton.) We first recall the 
algebra $\mathfrak{U}_q(\mathfrak{sl}_2)$ \cite{Jimbo} (see also Definition \ref{defq}), which is the unital associative algebra over ${\mathbb C}$ (or ${\mathbb R}$) generated by $e$, $f$,
$q^{\pm \frac{h}{2}}$, and relations: 
\begin{equation}
q^{\frac{h}{2}}\ f=qf\ q^{\frac{h}{2}}\, \qquad q^{\frac{h}{2}}\ e
= q^{-1}e\ q^{\frac{h}{2}}, \qquad \big [f,\ e \big ] = \frac{q^h-q^{-h}} {q-q^{-1}},
\label{10} 
\end{equation}
where $\big [ ,  \big ]:  {\mathfrak U}_q(\mathfrak{sl}_2) \times{\mathfrak U}_q(\mathfrak{sl}_2) \to{\mathfrak U}_q(\mathfrak{sl}_2),$ such that $\big [a,b \big ] = ab -ba,$ $a,b \in {{\mathfrak U}_q(\mathfrak{sl}_2)}.$
In this manuscript we consider $q = e^{\mu}, \mu\in {\mathbb R}.$

We recall the standard canonical basis of ${\mathbb C}^d,$ ${\mathbb B}_d =  \big  \{\hat e_k \big  \},$ $k\in [d],$ $d\in {\mathbb Z}^+$ (Remark \ref{rem1}).  We also recall the $d$-dimensional irreducible representation ${\mathfrak U}_q(\mathfrak{sl}_2),$ 
$\rho: {{\mathfrak U}_q({\mathfrak{sl}_2})} \to \End({\mathbb C}^d),$ such that $q^{h} \mapsto q^{\mathrm{h}},$ $e \mapsto {\mathrm E}$ and $f\mapsto {\mathrm F}:$ ${\mathrm F} \hat e_1 =0,$ ${\mathrm E}\hat e_d=0$ and
\begin{eqnarray}
&&q^{\mathrm{h}}\ \hat e_k = q^{\hat a_k} \hat e_{k}, ~~k\in[d] \label{a1}\\
&&{\mathrm E}\ \hat e_{k}= \hat c_ke_{k+1}, ~~~{\mathrm F}\ \hat e_{k+1} = \hat c_k\hat e_{k},~~~k\in[d-1], \label{a2}
\end{eqnarray}
where $\hat a_k = d+1-2k,$ $\hat c_k = \sqrt{[k]_q[d-k]_q}$ and $[k]_q = \frac{q^k -q^{-k}}{q-q^{-1}}.$ 

There are other $d$-dimensional irreducible representations of ${\mathfrak U}_q(\mathfrak{sl}_2),$ up to an algebra homomorphism. 
Indeed, let ${\mathfrak d} \in \mathfrak{U}_q(\mathfrak{sl}_2)$ be an invertible element such that,
\begin{equation}
\big [ {\mathfrak d}, q^{h}\big ] = \big [{\mathfrak d},\ FE \big] =0. \nonumber
\end{equation}
And consider the map ${\mathfrak h}: \mathfrak{U}_q(\mathfrak{sl}_2) \to \mathfrak{U}_q(\mathfrak{sl}_2),$ such that
\begin{equation}
 e\mapsto e':= e {\mathfrak d}^{-1}, \quad f \mapsto f' : = {\mathfrak d} f, \quad q^{h} \mapsto q^{h}.\nonumber
\end{equation}
Then ${\mathfrak h}$ is an algebra homomorphism.

Recall the representation $\rho: {\mathfrak U}_q(\mathfrak{sl}_2) \to \End({\mathbb C}^d),$  (\ref{a1}), (\ref{a2}) and $ {\mathfrak d} \mapsto {\cal D} = \underset{0\leq k\leq d-1}{\sum} \hat c^*_{k}e_{k+1,k+1} \in \End({\mathbb C}^d),$ $\hat c^*_0 =1$ and $\hat c^*_k = \hat c_k,$ $k \in [d-1]$ (in general ${\cal D}$ can be any diagonal $d\times d$ matrix) then, 
\begin{equation}
    {\mathrm E} \mapsto {\mathrm E}' := {\mathrm E} {\cal D}^{-1}, \quad {\mathrm F} \mapsto {\mathrm F}':={\cal D}{\mathrm F},\quad q^{{\mathrm h}} \mapsto q^{\mathrm h}, \label{homo0}
\end{equation}
and 
\begin{eqnarray}
&&q^{\mathrm{h}}\ \hat e_k = q^{\hat a_k} \hat e_{k}, ~~k\in[d] \label{a1b}\\
&&{\mathrm E}'\ \hat e_{k}=\hat  e_{k+1}, ~~~{\mathrm F}'\ \hat e_{k+1} = \hat \kappa_k \hat e_{k},~~~k\in[d-1], \label{a2b}
\end{eqnarray}
where $\hat a_k = d+1-2k,$ $\hat \kappa_k = [k]_q[d-k]_q.$ 

The ${\mathfrak U}_q(\mathfrak{sl}_2)$ automaton: $Q = [d],$ $\Sigma = \big \{q^{h}, e,f\big \}$ 
and the transition matrices are given in (\ref{a1}), (\ref{a2}) (or (\ref{a1b}), (\ref{a2b})); this is obviously a semi-combinatorial automaton. 
The automaton is graphically depicted in Figure 3, if $\hat e_1$ is chosen as a start state.
\begin{center}
\begin{figure}[ht]
\tiny
\begin{tikzpicture}[shorten >=1pt,node distance=2.5cm, on grid,auto] 
\node[state, inner sep=0, minimum size=2.0em, initial] (q1) {${\hat e_1}$};
\node[state,inner sep=0, minimum size=2.0em, right of=q1] (q2) {$\hat e_2$};
\node[inner sep=0, minimum size=2.0em, right of=q2] (q3) {$\ldots$};
\node[state,inner sep=0, minimum size=2.0em, right of=q3] (q3b) {$\hat e_d$};
\node[state,inner sep=0, minimum size=2.0em] (q0) [below right= 3.5 and 3.5 of q1] {$\hat 0$};
\path[->] 
(q1) edge[bend right, left] node{$f$} (q0)
(q3b) edge[bend left, right] node{$e$} (q0)
(q1) edge[bend left,  above] node{$e;\hat c_1(1)$} (q2)
(q2) edge[bend left, below] node{$f;\hat c_1(\hat \kappa_1)$} (q1)
(q2) edge[bend left, above] node{$e; \hat c_2(1)$} (q3)
(q3) edge[bend left, below] node{$f;\hat c_{2} (\hat \kappa_2)$} (q2)
(q3) edge[bend left, above] node{$e; \hat c_{d-1}(1)$} (q3b)
(q3b) edge[bend left, below] node{$f; \hat c_{d-1} (\hat \kappa_{d-1})$} (q3)
(q2) edge[loop above] node{$q^{h}; \hat a_{2}$} (q2)
(q3b) edge[loop above] node{$q^{h}; \hat a_{d}$} (q3b)
(q1) edge[loop above] node{$q^{\mathrm h};\hat a_1$} (q1);
\end{tikzpicture}
\caption{A semi-combinatorial automaton}\label{comb}
\end{figure}
\end{center}
Henceforth, the undefined (zero) transitions are not depicted in the automaton diagrams.
\end{exa}

\section{Non-deterministic automata: probabilistic and quantum finite automata}

\noindent Non-deterministic finite automata
might include actions, labeled by a letter of the alphabet, that lead to different states
simultaneously (see e.g. Figure 1).
Every deterministic finite automaton is just a
special case of non-deterministic finite automata. 
Non-deterministic automata were introduced by Rabin and Scott \cite{RabinScott}, 
who showed their equivalence to deterministic automata. Recall, two automata are said to be equivalent if they accept the same language.

\begin{exa} \label{non} Consider the two state non-deterministic automaton, where $Q = \big \{q_1, q_2\big \}$ and $\Sigma = \big \{a,b \big \},$ 
and the associated transition matrices given as
$$M_a = \begin{pmatrix} x &0  \\ y &1  \end{pmatrix}, ~~~M_b = \begin{pmatrix} 1 &y  \\ 0 &x \end{pmatrix}, $$
and choose $q_1$ as the start state.
The corresponding graph for the non-deterministic automaton is shown in Figure 4 
(from now on we do not indicate accepting states in the automaton graph).
\begin{center}
\begin{figure}[ht]
\footnotesize
\begin{tikzpicture}[shorten >=1pt,node distance=2.8cm,on grid,auto] 
\node[state,inner sep=0, minimum size=2.0em, initial] (q1) {$q_1$};
\node[state,inner sep=0, minimum size=2.0em, right of=q1] (q2) {$q_2$};
\path[->] (q1) edge[loop above] node{a:x} (q1)
(q2) edge[loop below] node{b:x} (q2)
(q1) edge[loop below] node{b:1} (q1)
(q2) edge[loop above] node{a:1} (q2)
(q1) edge[bend left, above] node{a:y} (q2)
(q2) edge[bend left, below] node{b:y} (q1);
\end{tikzpicture}
\caption{A non-deterministic automaton} \label{nda}
\end{figure}
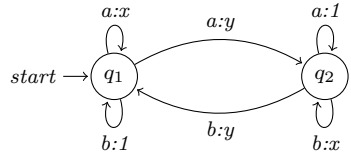
\end{center}
Notice that edges between states are labeled by the letter of the alphabet labeling the transition, and by the matrix element $(M_a)_{j,i}$ that corresponds to the transition from $q_i$ to $q_j$. In the non-deterministic automaton, contrary to the combinatorial automaton, some of the elements (or maybe all) $(M_a)_{j,i} \neq 1, 0.$ 
\end{exa}

We focus now on some special cases of non-deterministic automata, 
namely the probabilistic and quantum automata. Probabilistic automata, first introduced in \cite{Rabin}, are
finite Markov chains and can also be seen as random walks on directed graphs.
Quantum automata were more recently introduced in \cite{quantum1} and \cite{quantum2} independently 
and are prototypes of quantum computers (see also \cite{review} for more details on definitions, 
examples and historical information on the subject).

\subsection{Probabilistic automata}
\noindent Before we define the probabilistic automaton, we give a couple of necessary preliminary definitions.
For a more detailed exposition on probabilistic automata, see \cite{Rabin}.
Note that the definitions given in this section are based on the linearization of an automaton, 
given in Remark \ref{rem1}. Indeed, if $Q$ is a finite set of states of cardinality $n$, then this 
is mapped to the standard basis ${\mathbb B}_n$ of ${\mathbb R}^n.$
\begin{defn} \label{probmat}
A vector is stochastic if all its entries are non-negative real numbers and
sum to 1. A matrix is stochastic if all its column vectors are stochastic.
\end{defn}

\begin{defn}
A probabilistic automaton $A$ is a tuple $({\mathbb B}_n, \Sigma, M_a, q_0, F),$ $a\in \Sigma,$
where $\Sigma$ is some alphabet, ${\mathbb B}_n$ is the standard basis of ${\mathbb R}^n,$ $M_a$ are $n\times n$ stochastic matrices (transition matrices),
$q_0\in {\mathbb B}_n$ is the initial state and $F \subseteq {\mathbb B}_n$ is a set of accepting states.
\end{defn}

We recall that any transition matrix can be expressed as $M_a =\underset{x,y \in Q}{\sum}(M_a)_{x,y} e_{x,y}$
The value $(M_a)_{x,y}$ is the probability that the automaton moves from the
state $\hat e_y$ to the state $\hat e_x$ after reading the letter $a$. As in the deterministic case, $M_a M_b = M_{ab},$ $a,b \in \Sigma$ 
and we extend to $\Sigma^*:$ $M_{aw} = M_aM_w$ for $a\in \Sigma$ and $w \in \Sigma^*,$ recall that $M_{\epsilon}$ is the identity matrix.

We represent the initial and final states by $n$-dimensional column vector denoted $q_0, q_F \in {\mathbb B}_n$ respectively. 
Let $q_0 = \hat e_y,$ $q_F=\hat e_x,$ $x,y \in Q.$
The state distribution induced by a string $w \in \Sigma^*$ is
$P_{w} = M_w \hat e_y,$ such that $P_{w} = \underset{x\in X}{\sum} (M_w)_{x,y} \hat e_x$ is a stochastic vector and
 $(M_w)_{x,y}$ is the probability that the automaton moves to $\hat e_x$ after reading the
string $w$ with the initial state distribution $\hat e_y$.
The probability of the automaton accepting $w \in \Sigma^*$ is therefore $q_F^TP_w$. 
\begin{exa}
Consider the $2$-state probabilistic automaton $Q = \big  \{q_1, q_2 \big  \}$ and $\Sigma = \big \{a,b\big \}.$
\begin{equation} 
    M_a = \begin{pmatrix} p &1  \\ 1-p &0  \end{pmatrix}, ~~~M_b = \begin{pmatrix} 1 &1-p  \\ 0 &p \end{pmatrix},\nonumber \end{equation}
    where $0\leq p \leq  1.$ This is a special case of the non-deterministic automaton \ref{non}. If $p=0,$ or $p=1$ we obtain combinatorial (incomplete) automata. The zero vector can be added as an extra state.
\end{exa}

\begin{exa}
A 3-state probabilistic automaton: $\Sigma = \big \{a\big \},$ $Q= \big \{q_1, q_2, q_3\big \}$, let $q_1$ be the start state, and the transition matrix 
is given as
\begin{equation}
M_a =   \begin{pmatrix} 0 &\frac{1}{3} & \frac{2}{3} \\ 
\frac{1}{3} &\frac{1}{3}  & \frac{1}{3}\\ \frac{2}{3}& \frac{1}{3} &0 \end{pmatrix} \nonumber
\end{equation}
\begin{center}
\footnotesize
\begin{tikzpicture}[shorten >=2pt,node distance=2.5cm] 
   \node[state,inner sep=0, minimum size=2.0em, initial] (q_0)   {$q_1$}; 
   \node[state,inner sep=0, minimum size=2.0em] (q_1) [above right=of q_0] {$q_2$}; 
   \node[state,inner sep=0, minimum size=2.0em] (q_2) [below right=of q_1] {$q_3$}; 
    \path[->] 
    (q_0) edge[left, above]  node {$a;\frac{1}{3}~~~~$} (q_1)  
    (q_1) edge[left, below]  node {$ $} (q_0) 
    (q_1) edge[left, above]  node [swap]  {$~~~a; \frac{1}{3}$} (q_2)
    (q_2) edge[right, below]  node  [swap] {$ $} (q_1)
    (q_1) edge [loop below] node {$a;\frac{1}{3}$} (q_1)
    (q_0) edge[left, above]  node  {$ $} (q_2)
 (q_2) edge[left, below]  node  {$a;\frac{2}{3}$} (q_0);
\end{tikzpicture}
\end{center}
\end{exa}
We conclude our brief description of probabilistic automata by defining language equivalent probabilistic automata \cite{Rabin, Tzeng}.
\begin{defn} (Language equivalence). 
Two probabilistic automata $A_1$ and $A_2$ with the same alphabet
are said to be language equivalent (for short, we use only equivalent) if for all strings $w\in \Sigma^*$ 
the two automata accept $w$ with the same probability. 
\end{defn}

\subsection{Quantum automata}
Here we provide a generic definition of quantum automata, more specific definitions that describe the dynamics 
of quantum systems can be found for instance in \cite{review}.
Before we give the definition of the quantum automaton we are going to use in this manuscript, we recall that a complex valued $n\times n$ matrix $U$ is called unitary if $U^{-1} = U^{\dagger}$ (recall, $^\dagger$ denotes transposition and complex conjugation).
\begin{defn}
A quantum automaton $A$ is a tuple $({\mathbb B}_n,\Sigma, M_a, q_0, F),$ $a\in \Sigma,$
where $\Sigma$ is some alphabet, ${\mathbb B}_n$ is the standard basis of ${\mathbb C}^n,$ $M_a$ are $n\times n$ unitary matrices, the transition matrices,
$q_0\in {\mathbb B}_n$ is the initial state and $F \subseteq {\mathbb B}_n$ is a set of accepting states.
\end{defn}
In probabilistic and quantum automata, as opposed to classical (combinatorial) automata, a state obtained after the action of a transition matrix can be a superposition
of basis states. In this sense, any probabilistic or quantum automaton sends basis states to linear combinations of basis states. The
analogous combinatorial-machine sends only basis
states to basis states. Each quantum
automaton consists of basis states and 
the state of the automaton after the action of the transition matrix is a superposition over
them. However, in quantum automata the superposition over basis states
is {\it not} a probability distribution any more. A more precise interpretation of the superposition of basis states in quantum automata will be given towards the end of this subsection.
\begin{exa}
A 2-state quantum automaton: $\Sigma = \big \{1\big \},$ $Q = \big \{q_1, q_2,\big \}$, let $q_1$ be the start state, and the transition matrix is given as
\begin{equation}
M_1 =  \begin{pmatrix} a & b\\ -b^* &a^*   \end{pmatrix}, \nonumber \end{equation}
where $a,b \in {\mathbb C},$ $a^*, b^*$ the complex conjugates and $|a|^2 +|b|^2 =1,$ i.e. $M_1$ is a unitary matrix (Figure 5).
\begin{center}
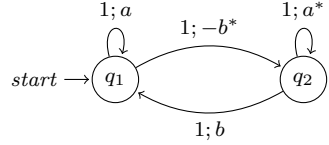
\begin{figure}[ht]
\footnotesize
\begin{tikzpicture}[shorten >=1pt,node distance=2.5cm,on grid,auto] 
\node[state,inner sep=0, minimum size=2.0em, initial] (q1) {$q_1$};
\node[state,inner sep=0, minimum size=2.0em, right of=q1] (q2) {$q_2$};
\path[->] (q1) edge[loop above] node{$1;a$} (q1)
(q1) edge[bend left, above] node{$1;-b^*$} (q2)
(q2) edge[loop above] node{$1;a^*$} (q2)
(q2) edge[bend left, below] node{$1;b$} (q1);
\end{tikzpicture}
\caption{A 2-state quantum automaton} \label{2s}
\end{figure}
\end{center}
\end{exa}

\begin{defn} (Isomorphic quantum automata) 
Two quantum automata $A:= ({\mathbb B}_n, \Sigma, M_a, q_0, F),$ $A':=({\mathbb B}_n', \Sigma, M'_a, q'_0, F'),$ 
are isomorphic if there exists a unitary $n\times n$ matrix $S$, such that $\hat e'_x = S \hat e_x,$ for all $x \in Q$ and $M_a' = SM_aS^{-1},$ for all $a\in \Sigma.$
\end{defn}
\begin{lemma} \label{lemma1}
Let $M$ be an $n \times n$ unitary matrix and ${\mathbb B}_n =\big  \{\hat e_{q_j}\big  \},$ $q_j\in Q,$ $j\in [n],$ be a canonical basis in ${\mathbb C}^n,$ i.e. $\hat e_a^{\dagger}\hat  e_b = \delta_{a,b},$ $a,b \in Q.$ If $\psi_a = M\hat e_a,$ then $ \psi_a^{\dagger}\psi_b = \delta_{a,b},$ $a,b \in Q.$ Also, if $M_1, M_2$ are both $n\times n$ unitary matrices, then ${\mathbb M}: =M_1M_2$ is also unitary.
\end{lemma}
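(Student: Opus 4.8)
The plan is to derive both assertions directly from the single defining property of a unitary matrix, namely that $M^{-1} = M^{\dagger}$, which is equivalent to the identities $M^{\dagger}M = MM^{\dagger} = I_n$. Since the statement is a routine verification in linear algebra, I would not introduce any auxiliary constructions; instead I would simply manipulate the relevant $\dagger$-products, recalling that $^{\dagger}$ denotes transposition together with complex conjugation and hence reverses the order of matrix products, $(AB)^{\dagger} = B^{\dagger}A^{\dagger}$, and that for column vectors $\psi^{\dagger}\phi$ is exactly the Hermitian inner product.

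For the first claim I would write $\psi_a = M\hat e_a$ and $\psi_b = M\hat e_b$ and compute
\begin{equation}
\psi_a^{\dagger}\psi_b = (M\hat e_a)^{\dagger}(M\hat e_b) = \hat e_a^{\dagger}M^{\dagger}M\hat e_b = \hat e_a^{\dagger}\hat e_b = \delta_{a,b}, \nonumber
\end{equation}
where the third equality uses unitarity $M^{\dagger}M = I_n$ and the final equality is the assumed orthonormality of ${\mathbb B}_n$. This shows that $M$ carries an orthonormal basis to an orthonormal set, which is precisely the asserted preservation of the inner product.

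For the second claim I would verify that the product ${\mathbb M} = M_1M_2$ satisfies the same characterizing identity. Using the order-reversing property of $^{\dagger}$ and the unitarity of each factor,
\begin{equation}
{\mathbb M}^{\dagger}{\mathbb M} = (M_1M_2)^{\dagger}(M_1M_2) = M_2^{\dagger}M_1^{\dagger}M_1M_2 = M_2^{\dagger}M_2 = I_n, \nonumber
\end{equation}
and symmetrically ${\mathbb M}{\mathbb M}^{\dagger} = M_1M_2M_2^{\dagger}M_1^{\dagger} = M_1M_1^{\dagger} = I_n$, so ${\mathbb M}^{-1} = {\mathbb M}^{\dagger}$ and ${\mathbb M}$ is unitary. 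I do not expect any genuine obstacle here: the only point deserving care is the order reversal under $^{\dagger}$, which is exactly what makes the two interior factors cancel, and the observation that either of $M^{\dagger}M = I_n$ or the full pair $M^{\dagger}M = MM^{\dagger} = I_n$ suffices for a square matrix since a one-sided inverse of a square matrix is automatically two-sided.
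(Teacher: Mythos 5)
Your proof is correct, and it is exactly the routine verification the paper has in mind: the paper itself dispenses with the lemma by stating that ``the proof is straightforward,'' and your computation $\psi_a^{\dagger}\psi_b = \hat e_a^{\dagger}M^{\dagger}M\hat e_b = \delta_{a,b}$ together with $(M_1M_2)^{\dagger}(M_1M_2) = I_n$ is the standard argument being alluded to. No gaps; your remark that a one-sided inverse of a square matrix is automatically two-sided is a correct and sufficient justification for the final step.
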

\begin{proof}
    The proof is straightforward.
\end{proof}
As in the deterministic and probabilistic case, $M_a M_b = M_{ab},$ $a,b \in \Sigma$ 
and we extend to $\Sigma^*:$ $M_{aw} = M_aM_w$ for $a\in \Sigma$ and $w \in \Sigma^*,$ recall that $M_{\epsilon}$ is the identity matrix.
Let $({\mathbb B}_n,\Sigma, M_a, q_0, F),$ be a quantum automaton and the transition matrix for a string $w \in \Sigma^*$ is expressed as
$M_w= \underset{x,y \in Q}{\sum}(M_w)_{x,y} e_{x,y}$. Let also $\psi_{w,y}:= M_w \hat e_y,$ then by Lemma \ref{lemma1} we deduce
\begin{equation}
\psi_{w,y} = \sum_{x\in Q} (M_w)_{x,y} \hat e_x, ~~~~\sum_{x\in Q}|(M_w)_{x,y}|^2 =1.\nonumber
\end{equation}
The amplitude $|(M_w)_{x,y}|^2$ is then interpreted as the probability for the quantum automaton to move in the final state $\hat e_x,$ after reading the string $w$ with initial state $\hat e_y,$ $x,y \in Q.$ In that sense, two quantum automata are said to be equivalent if they accept any given input string with the same probability.

\section{A brief review on Coxeter groups}

\noindent We give a brief review on Coxeter groups and the associated word problem. That is the problem of finding all possible words for a given Coxeter group. We provide for this purpose the definition of the {\it weak Bruhat order} (or  tree order) and give some concrete examples. This is a very brief overview on the subject presenting the basic definitions necessary for our analysis here, however for a more detailed description, see for instance \cite{Comb-book} and references therein.

A Coxeter group is a group with a certain presentation. 
\begin{defn}
Choose a finite generating set $S = \big  \{s_1, \ldots , s_p \big  \}$ and for every $i < j,$ choose
an integer $m(i , j) \geq 2,$ or $m(i , j) = \infty$. We define the associated Coxeter groups $W$:
\begin{equation}
W = \langle S \ |\ s_i^2 = 1, \ \forall i ~~\mbox{and}~~ (s_i s_j)^{m(i ,j)} = 1,\ \forall i < j \rangle \nonumber
\end{equation}
\end{defn}
In the special case where $m(i,j) \in \big \{2,3,4,6\big \}$ the Coxeter group is called a Weyl group. We consider below two basic examples.

\begin{exa} \label{defs0} The two basic examples we consider are the dihedral group and symmetric group.
   \begin{enumerate}
   \item {\it The dihedral group:} The group is defined as ($m(1,2) =4$)
\begin{equation}
b_2 = \langle 
S=\big \{s_1, s_2\big \}\ |\ s_1^2 =
1 = s_2^2, ~~\mbox{and} ~~
s_1s_2s_1{s_2}= s_2s_1s_2s_1 \rangle. \nonumber
\end{equation}
This is the Coxeter group associated with the root system $b_2$.
Its elements are
$1, s_1,$ $s_2,$ $s_1s_2,$ $s_2s_1,$ $s_1s_2s_1,$ $s_2s_1s_2,$ $s_1s_2s_1s_2 = s_2s_1s_2s_1.$

\item  The symmetric group $S_{p+1}$ ($A_p$):
This is the group of permutations. Let
$s_i = (i, i+1),$ the symmetric group is a Coxeter group defined as
\begin{equation}
S_{p+1} = \langle 
S= \big \{s_1, s_2, \ldots, s_p\big \}\ |\ s_j^2 = 
1, ~~\forall j \in \big \{1,2,\ldots,p\big \} ~~\mbox{and} ~~
s_1s_2s_1= s_2s_1s_2 \rangle. \nonumber
\end{equation}
\end{enumerate}
\end{exa}

We note that the set $S$ is called the set of {\it simple reflections}. The set, 
$T= \langle  wsw^{-1}: \ w\in W, \ s\in S \rangle, $
is called the set of {\it reflections}.
We also introduce the notion of {\it reduced words}. The group $W$ is generated by
$S$, each element $w$ of $W$ can be written
(in various ways) as a word in the ``alphabet'' $S$.
\begin{defn}
Given a Coxeter system $(W, S ),$ an expression $w = s_{i_1} \ldots s_{i_m} \in W$ is called
reduced if $w$ cannot be written as a product of $s_i$ with fewer terms, and
$m$ is called the length $|w|$ of $w$.
\end{defn}
That is to say, a word of minimal length, among words for $w$, 
is a {\it reduced word} for $w$.
The length $|w|$ of $w$ is the length of a reduced word for $w$ (solution to the word problem for $W$ 
by Matsumoto \cite{Matsu}):
any word for $w$ can be converted to a reduced word by a
sequence of
\begin{enumerate}
\item  braid moves: $s_i s_j s_i \ldots \leftrightarrow s_j s_i s_j \ldots $ ($m(i , j)$ letters)
\item  nil moves: delete $s_i^2$.
\end{enumerate}
Any two reduced words for $w$ are related by a sequence of braid
moves.

 We also introduce the set of {\it left inversions} (or just inversions) of $w \in W,$ denoted $inv(w),$ {which consists of any} reflection $t \in T,$ such that
$|tw| < |w|$. 
Specifically, if $a_1 \ldots a_k$ is a reduced word for $w$, then write $t_i = a_1 \ldots a_i \ldots a_1,$ and
$inv(w) = \big \{t_i : 1 \leq i\leq k\big \}.$
The sequence $t_1, \ldots, t_k$ is the reflection sequence for the reduced
word $a_1 \ldots a_k.$

One of the basic problems is the notion of order for any set (recall Remark \ref{tree} about the left tree order). 
We give below the definition of a weak Bruhat order.

\begin{defn}  The (left) weak order on a Coxeter group $W$ sets $u \leq w$ if and only if a
reduced word for $u$ occurs as a suffix of some reduced word for $w$.
The covers are $w < sw$ for $w \in W$ and $s \in S$ with $|w| < |sw|.$
Equivalently, $u \leq w$ if and only if $inv(u) \subseteq inv(w).$
\end{defn}

\begin{exa} \label{dih}
The  dihedral group $b_2$ weak Bruhat order: 
\begin{center}
\begin{tikzpicture}[shorten >=1pt,node distance=2.4cm,on grid,auto] 
   \node[] (q_1)   {$1$}; 
   \node[] (q_3) [above right=of q_1] {$s_2$}; 
   \node[] (q_2) [above left=of q_1] {$s_1$}; 
   \node[](q_4) [above=of q_2] {$s_2s_1$};
 \node[](q_5) [above=of q_3] {$s_1s_2$}; 
  \node[](q_6) [above =of q_4] {$s_1s_2s_1$};
  \node[](q_7) [above =of q_5] {$s_2s_1s_2$};
  \node[](q_9) [above left=of q_7] {$s_2s_1s_2s_1 =s_1s_2s_1s_2$};  
\path[->]
  (q_1) edge[left, below]  (q_2)
   (q_1) edge[right, below]  (q_3)
   (q_2) edge[left, above]  (q_4)   
    (q_3) edge[right, above]  (q_5)   
     (q_4) edge[left, above]  (q_6)
    (q_5) edge[right, above] (q_7)
    (q_7) edge[right, below]  (q_9)
    (q_6) edge[left, below] (q_9) ;    
    \end{tikzpicture}    
    \end{center}
\end{exa}

\begin{exa} \label{sym2} 

$ $

\begin{enumerate}
    \item  The symmetric group $S_2$ weak Bruhat order: 
\begin{center}
\begin{tikzpicture}[shorten >=1pt,node distance=2.5cm,on grid,auto] 
   \node[] (q_1)   {$1$}; 
   \node[] (q_2) [right=of q_1] {$s_1$}; 
\path[->]
  (q_1) edge[left, below]  (q_2);
    \end{tikzpicture}    
    \end{center}
    
    \item The symmetric group $S_3$ weak Bruhat order: 
\begin{center}
\begin{tikzpicture}[shorten >=1pt,node distance=2.5cm,on grid,auto] 
   \node[] (q_1)   {$1$}; 
   \node[] (q_3) [above right=of q_1] {$s_2$}; 
   \node[] (q_2) [above left=of q_1] {$s_1$}; 
   \node[](q_4) [above=of q_2] {$s_2s_1$};
 \node[](q_5) [above=of q_3] {$s_1s_2$}; 
  \node[](q_6) [above right=of q_4] {$s_1s_2s_1=s_2s_1s_2$};
\path[->]
  (q_1) edge[left, below]  (q_2)
   (q_1) edge[right, below]  (q_3)
   (q_2) edge[left, above]  (q_4)   
    (q_3) edge[right, above]  (q_5)   
     (q_4) edge[left, above]  (q_6)
    (q_5) edge[right, above] (q_6);
    \end{tikzpicture}    
    \end{center}
\end{enumerate}
\end{exa}
We observe in the weak Bruhat order in the examples above that each horizontal level contains words of equal length, whereas as we ascend levels the length of words is increased. In general, according to the weak-order definition, words at higher horizontal levels are larger than words of lower levels. We cannot, however, compare words that occur at the same level, with the exception of level one that contains $s_1,s_2,\ldots, s_p$ (level 0 contains the unit element only), where that standard lexicographic order is chosen, i.e. $s_1 < s_2 < \ldots < s_p$. That is, these are partially ordered sets (posets).

The main difference with the left tree order in Remark \ref{tree} is that here we make use of 1) nil moves, i.e. $s_i^2$ is deleted whenever occurs in the graphs and 2) braid moves, i.e. in Example \ref{dih}, $s_1s_2s_1s_2=s_2s_1s_2s_1$ and in Example \ref{sym2}, $s_1s_2s_1 = s_2s_1s_2$, and this is the reason that comparison of strings of the same level is not possible.

\section{Hecke and quantum algebras } 
\noindent In this section we review some background material necessary for the purposes of this study. Specifically, in Subsections 5.1 and 5.2 we recall  definitions regarding the braid group, Hecke algebras and the algebra $\mathfrak{U}_q(\mathfrak{gl}_n)$ \cite{Jimbo, Jimbo2}. In Subsection 5.3 we briefly discuss definitions and examples on
Young-tableaux and elements of representation theory (see also for instance \cite{Fulton, FultonHarris} for a more detailed exposition on these subjects).

\subsection{Hecke algebra and the shuffle element}
\noindent We recall the definitions of the $A$-type braid group and the Hecke algebra ${\cal H}_N(q)$ and also introduce the so-called ``shuffle'' element that generates all reduced words of the Hecke algebra.
\begin{defn}
The braid group associated with a Coxeter system $(W, S)$ is\\
$B_W = \langle  \sigma_i, \ i \in I \ |\ \sigma_ i \sigma_j \sigma_i \ldots
= \sigma_j \sigma_i \sigma_j \ldots, 
\ m(i ,j)\ \mbox{terms},\    m(i , j) < \infty \rangle.$
\end{defn}
Note that in general $\sigma_i^2 \neq 1.$ 

We will focus in this study on the Artin presentation of the braid group, i.e. 
the standard braid group on $N$ strands and its quotient, the Hecke algebra ${\cal H}_N(q).$
\begin{defn}  \label{Artin}
The $A$-type Artin braid group $B_{N}$ is defined by generators $\sigma_1,\ \sigma_2, \ldots, \sigma_{N-1}$ and relations
\begin{equation}
\sigma_i\sigma_{i+1}\sigma_i = \sigma_{i+1} \sigma_i \sigma_{i+1}, ~~ \mbox{and }~~\sigma_i\sigma_j=\sigma_j\sigma_i~~\mbox{if}~~|i-j|>1. \nonumber
\end{equation}
\end{defn}
Every braid on $N$ strands determines a permutation on $N$ elements. This assignment becomes a map $B_N \to S_N,$ 
such that $\sigma_i \in B_N$ is mapped to the transposition $s_i = (i, i+1) \in S_N$. These transpositions generate the symmetric group (see previous section), satisfy the braid group relations and in addition  $s_i^2 =1.$ This transforms the Artin presentation of the braid group into the Coxeter presentation of the symmetric group (see the Definition of the symmetric group in Example \ref{defs0} (2)). 

\begin{defn} \label{Hecke1}
The Hecke algebra ${\cal H}_N(q),$ $q \in {\mathbb C},$ is a unital associative algebra over ${\mathbb C}$, defined by generators $t_1, t_2, \ldots, t_{N-1}$ and relations
\begin{eqnarray}
t_it_{i+1}t_i = t_{i+1} t_i t_{i+1}, ~~ (t_i-q1)(t_i+q^{-1}1)=0 ~~\mbox{and}~~t_it_j=t_jt_i ~~\mbox{if} ~~|i-j|>1. \nonumber
\end{eqnarray}
\end{defn}

\begin{rem} \label{tree2} {\bf (Left tree order for the Hecke algebra ${\cal H}_N(q)$).}
Any word for $w$ can be converted to a reduced word by a
sequence of
\begin{enumerate}
\item  braid moves: $t_j t_{j+1} t_j \leftrightarrow t_j t_{j+1}t_j$  and $t_i t_j \leftrightarrow  t_jt_i,$ $|i-j|>1,$ for all 
$i,j \in [N-1].$
\item  nil moves: delete $t_i^2$.
\end{enumerate}
Any two reduced words for $w$ are related by a sequence of braid
moves and nil moves due to $t_i^2 = (q-q^{-1}) t_i + 1,$ (same logic as in \cite{Matsu}). Specifically, regarding the nil moves, let $w$ be a word of length $l$ and $t_iw$ be a word of length $l+1.$ Then from the condition $t_i^2 = (q-q^{-1})t_i +1,$ we deduce that $t_i^2w$ reduces to either $w$ or $t_i w,$ i.e. it reduces to words that already exist at level $l$ and $l+1$ respectively, and thus $t_i^2w$ is deleted from the left order tree diagram.
The left tree diagram of the Hecke algebra coincides with the weak order diagram of the symmetric group. For example, the diagram of the order of the left tree for $n=3$ is given in Example (\ref{sym2}) (2) ($s_i \leftrightarrow t_i$ in the diagram).
\end{rem}

\begin{rem} \label{pol} Consider a free unital, associative, algebra over some field $K,$ generated by the alphabet 
$\Sigma = \big \{ a_1,a_2,\ldots, a_{N} \big \},$ that is a non-commutative polynomial algebra over $K$ (in our analysis here the field is either 
${\mathbb C}$ or ${\mathbb R}$). The {sum of all} length-$k$ words are then generated by $G_{k,N}= (a_1 +a_2+ \ldots +a_{N})^k,$ i.e. we have $N^k$ words of length $k.$ The corresponding left order tree diagram is given in Remark \ref{tree} and at each horizontal level $k$ of the tree diagram we have $N^k$ words of length $k.$   In general, the generating function of a special linear combination of all words in the free associative algebra generated by $\Sigma = \big \{a_1, a_2, \ldots, a_{N}\big \}$ is
 \[ {\cal G}_N := \frac{1}{1- t\underset{1\leq j\leq N}{\sum} a_j} = 1 +\sum_{k=1}^{\infty} t^k (a_1 +a_2 +\ldots a_{N})^k = 1 + \sum_{k=1}^{\infty}t^k G_{k,N} \].

In the case of the Hecke algebra sums of words of length $k$ are still generated by $G_k$, but subject to the conditions of Remark \ref{tree2}.  In general, for any associative, unital algebra (quotient of the free algebra) the left (right) order tree diagram can be constructed as in Remark \ref{tree}, but subject to the algebra relations.  
\end{rem} 
{One can analyze in principle  any linear combination of the free associative algebra generators, however we focus here on their sum (see relevant comments below for the Hecke algebra)}. As an example to illustrate Remark \ref{pol} we consider the relatively simple case for $N=2,$ i.e.  $\Sigma = \big \{a,b\big \}$ and we distinguish two cases:

\noindent {\bf A. The free algebra:} Consider a free associative, unital, algebra generated by $\Sigma = \big \{a_1,a_2, \ldots, a_{N}\big \}$ over some field $K.$ In this case the left (right) tree order diagram, e.g. for $N=2,$ $\Sigma= \big \{a,b\big \},$ is shown below.
\begin{center}
$$\vdots$$
\begin{tikzpicture}[grow=up,->,>=angle 60]
\begin{scope}[yshift=-4cm]
  \node {$1$}
    child {node {$b$}
          child {node {$bb$}
      }
      child {node {$ab$}
      }    
    }
    child {node {$a$}
        child {node {$ba$}
      }
      child {node {$aa$}
      }
    };
\end{scope}
\end{tikzpicture}
\end{center}
and the sum of all length-$k$ words is indeed generated by $G_{k,2} = (a+b)^k$. It is clear, by construction the sums of words of different lengths commute.

\noindent {\bf B. The Hecke algebra:} Consider for instance the Hecke algebra ${\cal H}_3(q);$ in this case, the left tree order diagram is shown in Example \ref{sym2} (2) ($a=t_1, b =t_2$). The sum of length-$k$ words is still given by $(a+b)^k,$ but subject to the Hecke algebra conditions, as described in Remark \ref{tree2}:
\begin{enumerate}
    \item Reduced words of length one generated by $a+b,$ i.e. $a,b$
    \item Reduced words of length two generated by $(a+b)^2,$ but due to the Hecke condition $a^2 =b^2 =1,$ i.e. the only reduced words of length two are $ab, ba.$ 
    \item Reduced words of length three: one can proceed by considering $(a+b)^3,$ and use again the Hecke algebra relations. But having excluded $a^2$ and $b^2$ in the previous order, we can just consider $(a+b)(ab +ba)$ and use the Hecke algebra conditions, so the only reduced word of length three is $aba =bab.$ This process terminates, because if we keep going we produce already existing reduced words. This is in accordance to the left weak diagram of Example \ref{sym2} and as expected in this case all possible reduced words are: $1,a,b,ab,ba,aba.$  
\end{enumerate}
{While one can analyze any linear combination of Hecke algebra generators, this manuscript focuses specifically on their sum. As demonstrated later, sums of braid group generators correspond to spin-chain-like Hamiltonians with specific boundary conditions (see for instance the final two sections and \cite{Doikous}). Although the general case of arbitrary linear combinations warrants detailed study, we reserve that investigation for future publications.}

The process described in the example above, which is a consequence of the construction of the tree order diagram, is quite tedious for long alphabets, i.e. for big values of $N$. Fortunately, due to the solution of the word problem for Coxeter groups by Matsumoto \cite{Matsu}, one can define a generating function of all reduced words for the Hecke algebra ${\cal H}_N,$ which we call the ``shuffle element'', as follows (see also for instance \cite{Lechner}).
\begin{defn} \label{shuf} ({\bf The shuffle element.})
Let ${\cal H}_N(q)$ be the Hecke algebra generated by $t_1,t_2,\ldots, t_{N-1}$ (Definition \ref{Hecke1}) and let 
$${\cal S}_k(\hat z) := 1 +\hat z t_k + \hat z^2t_{k-1} t_k + \ldots + \hat z^k t_1 t_2 \ldots t_k \in {\cal H}_N(q), ~~\hat z \in {\mathbb C}.$$  
The shuffle element for ${\cal H}_N(q)$ is defined as 
\begin{equation}
{\mathfrak y}_N(\hat z) := {\cal S}_{N-1}(\hat z) {\cal S}_{N-2}(\hat z) \ldots {\cal S}_2(\hat z) {\cal S}_1(\hat z) \in {\cal H}_N(q). \label{shuffle}
\end{equation}
\end{defn}
The power of $\hat z$ in the expansion of ${\mathfrak y}_N(\hat z)$ indicates the length of the words of that order, where the maximum length of a word is $\frac{N(N-1)}{2}$ as follows from the definition of ${\mathfrak y}_N$ (see for instance the first few cases, $N=2,3,4$ below). Hence, we may write in a compact form:
$${\mathfrak y}_N(\hat z) = 1+ \sum_{l=1}^{\frac{N(N-1)}{2}} \hat z^l {\mathfrak s}_l,$$
where ${\mathfrak s}_l=\underset{w_l\in {\cal H}_N }{\sum} w_{l},$ $w_l$ denotes reduced words of length $l$ in ${\cal H}_N(q).$ 

For a certain representation of the Hecke algebra and for special values of  $\hat z,$ ${\mathfrak y}_N$ becomes the $q$-symmetrizer (or anti-symmetrizer) as this will become transparent later in the manuscript, when it is shown that ${\mathfrak y}_N,$ in a certain representation, produces all possible permutations in 
$[N]$. The shuffle element can be defined using the opposite order, but we will consistently use this definition in this manuscript. We will come back to the shuffle element when discussing the construction of $q$-(anti)symmetric states in Theorem \ref{shuffle1}.

The shuffle element generates all $N!$ reduced words including the empty word (unit element) (see also Theorem \ref{shuffle1}). One can easily check  a few examples: 
\begin{enumerate}
\item $N=2,$: ${\mathfrak y}_2(z) = 1 + \hat z t_1.$

\item $N=3:$ ${\mathfrak y}_3(z) = 1 + \hat z( t_1 +t_2)+  \hat z^2(t_1t_2+t_2t_1) + \hat z^3 t_1t_2t_1.$  

\item $N=4:$ ${\mathfrak y}_4(z) = \big (1+ \hat z t_3 + \hat z^2  t_2t_3 + \hat z^3 t_1t_2t_3 \big )\big(1 + \hat z( t_1 +t_2)+  \hat z^2(t_1t_2+t_2t_1) + \hat z^3 t_1t_2t_1\big ).$ 
\end{enumerate}
Note that the word of maximum length for any $N$ is $w_{max} : = t_{1}t_{2} \ldots t_{N-1} t_{1} t_{2} \ldots t_{N-2} \ldots t_1t_2t_1$ with length $l_{max} = \frac{N(N-1)}{2}.$

\begin{conj} \label{conj1} Let ${\mathfrak s}_{k}$ be the sum of all reduced words of length $k$ in ${\cal H}_N(q),$ then ${\mathfrak s}_k {\mathfrak s}_l = {\mathfrak s}_l {\mathfrak s}_k,$ $1\leq k,l\leq \frac{N(N-1)}{2},$ or equivalently ${\mathfrak y}_N(\hat z) {\mathfrak y}_N(\hat z') = {\mathfrak y}_N(\hat z') {\mathfrak y}_N(\hat z),$ $\hat z,\hat z'\in {\mathbb C}.$
\end{conj}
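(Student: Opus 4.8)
The plan is to pass to the standard basis of the Hecke algebra and reduce the statement to a symmetry of structure constants. By Matsumoto's theorem \cite{Matsu} together with the braid relations, the element $T_w := t_{i_1}\cdots t_{i_m}$ attached to any reduced word $s_{i_1}\cdots s_{i_m}$ for $w\in S_N$ is well defined, and $\{T_w\}_{w\in S_N}$ is a basis of ${\cal H}_N(q)$. Comparing with the first few cases of ${\mathfrak y}_N(\hat z)$, one checks that ${\cal S}_{N-1}(\hat z)\cdots{\cal S}_1(\hat z)=\sum_{w\in S_N}\hat z^{\ell(w)}T_w$, where $\ell(w)$ is the Coxeter length; this is exactly the length-graded expansion ${\mathfrak y}_N(\hat z)=1+\sum_l \hat z^l {\mathfrak s}_l$ with ${\mathfrak s}_l=\sum_{\ell(w)=l}T_w$. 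Hence ${\mathfrak y}_N(\hat z){\mathfrak y}_N(\hat z')={\mathfrak y}_N(\hat z'){\mathfrak y}_N(\hat z)$ for all $\hat z,\hat z'$ is equivalent to the pairwise commutativity ${\mathfrak s}_k{\mathfrak s}_l={\mathfrak s}_l{\mathfrak s}_k$ of the conjecture, and I would first establish this reformulation carefully, proving the product formula by induction from the recursion ${\cal S}_n(\hat z)=1+\hat z\,{\cal S}_{n-1}(\hat z)\,t_n$ and the minimal-coset-representative factorization $S_N=\bigsqcup_j S_{N-1}c_j$, along which $\ell$ is additive and $T$ is multiplicative.

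Next I would record the one genuinely automatic symmetry and isolate the real obstacle. The linear anti-automorphism $\ast$ of ${\cal H}_N(q)$ fixing each generator and reversing products sends $T_w\mapsto T_{w^{-1}}$; since $w\mapsto w^{-1}$ preserves length, each ${\mathfrak s}_k$ is $\ast$-invariant, so $({\mathfrak s}_k{\mathfrak s}_l)^\ast={\mathfrak s}_l{\mathfrak s}_k$, and the conjecture becomes the assertion that every product ${\mathfrak s}_k{\mathfrak s}_l$ is itself $\ast$-invariant. I want to \emph{emphasize} that this symmetry alone is circular: it only relates ${\mathfrak s}_k{\mathfrak s}_l$ to ${\mathfrak s}_l{\mathfrak s}_k$, and the same circularity defeats the diagram automorphism $t_i\mapsto t_{N-i}$ (conjugation by the longest element), which also fixes every ${\mathfrak s}_k$. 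The substantive step must therefore compute structure constants.

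The main strategy is then to expand ${\mathfrak s}_k{\mathfrak s}_l=\sum_u N^u_{k,l}(q)\,T_u$ and prove $N^u_{k,l}(q)=N^u_{l,k}(q)$ for all $u,k,l$. At $q=1$ this is the purely combinatorial claim that, for fixed $u\in S_N$, the joint distribution of $(\ell(a),\ell(a^{-1}u))$ as $a$ ranges over $S_N$ is symmetric under swapping its two coordinates, i.e. $\#\{a:\ell(a)=k,\ \ell(a^{-1}u)=l\}=\#\{a:\ell(a)=l,\ \ell(a^{-1}u)=k\}$. I would attack this by exhibiting a length-swapping bijection of $S_N$; the naive candidates $a\mapsto(a^{-1}u)^{-1}$ and $a\mapsto ua^{-1}$ satisfy only one of the two required length constraints, as direct checks in $S_3$ show, so I expect the bijection to be built by induction on $N$ through the coset factorization above, in which each ${\mathfrak s}_k$ appears as a convolution of an ${\cal H}_{N-1}$-graded piece with the graded sum of minimal coset representatives. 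To upgrade from $q=1$ to general $q$ I would track the Hecke structure constants directly, using $T_sT_w=T_{sw}$ when $\ell(sw)>\ell(w)$ and $T_sT_w=(q-q^{-1})T_w+T_{sw}$ otherwise, organizing the $q$-dependence of $N^u_{k,l}(q)$ by the inversion-symmetric $R$-polynomial/Bruhat-interval data.

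I expect the decisive obstacle to be precisely this $q$-dependent bookkeeping: a flat-deformation argument from $q=1$ is unavailable, since the $N^u_{k,l}(q)$ are Laurent polynomials in $q$ and vanishing of the commutator at $q=1$ does not propagate to all $q$, so one must match, term by term, the contributions of the quadratic relation (the $(q-q^{-1})$ corrections produced whenever a product fails to be length-additive) on the two sides ${\mathfrak s}_k{\mathfrak s}_l$ and ${\mathfrak s}_l{\mathfrak s}_k$. As an independent route that may sidestep this, I would try to realize ${\mathfrak y}_N(\hat z)$ as a Baxterized transfer matrix: the same-shape, different-parameter commutativity is exactly the signature of commuting transfer matrices, so constructing spectral-parameter $\check R$-operators from $1+\hat z t_i$ obeying a parametrized Yang--Baxter equation (compatible with ${\cal S}_n(\hat z)=1+\hat z\,{\cal S}_{n-1}(\hat z)t_n$) would give the result directly; there the obstacle is verifying that ${\mathfrak y}_N(\hat z)$ is genuinely of transfer-matrix type rather than merely resembling one.
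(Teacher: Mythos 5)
First, the point of comparison: the paper does not prove this statement at all. It is stated as a conjecture, verified only for $N=2,3$, with the remark that a general proof ``should use the Hecke algebra relations repeatedly'' and a pointer to Sklyanin's double-row transfer matrix as the natural object to compare with. So there is no paper proof for you to diverge from; the only question is whether your plan closes the conjecture, and it does not. Your preparatory reductions are correct, and sharper than anything recorded in the paper: Matsumoto's theorem makes $T_w$ well defined and $\{T_w\}_{w\in S_N}$ a basis; the recursion ${\cal S}_n(\hat z)=1+\hat z\,{\cal S}_{n-1}(\hat z)\,t_n$ together with length-additivity along distinguished coset representatives gives ${\mathfrak y}_N(\hat z)=\sum_{w\in S_N}\hat z^{\ell(w)}T_w$ (one small slip: since ${\mathfrak y}_N={\cal S}_{N-1}{\mathfrak y}_{N-1}$, the relevant decomposition is $S_N=\bigsqcup_j c_j S_{N-1}$ with the representatives $c_j=s_j\cdots s_{N-1}$ multiplying on the left, not $S_{N-1}c_j$); the $\ast$-invariance of each ${\mathfrak s}_k$ and its circularity are correctly diagnosed; and because $\{T_u\}$ is a basis, commutativity is indeed equivalent to the per-$u$ symmetry $N^u_{k,l}(q)=N^u_{l,k}(q)$, which at $q=1$ is exactly the joint length-distribution symmetry you state.

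The genuine gap is that both decisive steps are announced rather than executed. At $q=1$ no length-swapping bijection is produced — you verify yourself that the naive candidates fail — and the proposed induction through the coset convolution is only a hope; note in particular that the factorization $w=c\,w'$ is not compatible with $w\mapsto w^{-1}$, so it is unclear that the inductive structure interacts at all with the symmetry you need. For general $q$ you rightly observe that vanishing of the commutator at $q=1$ says nothing about the Laurent-polynomial identity, but the ``term-by-term matching of $(q-q^{-1})$ corrections'' is precisely the content of the conjecture (the paper's ``use the Hecke algebra relations repeatedly'') and is left untouched. On your alternative route: a direct computation shows that shifted generators $t_i+f(u)$ satisfy the parametrized braid relation $\bigl(t_1+f(u)\bigr)\bigl(t_2+f(uv)\bigr)\bigl(t_1+f(v)\bigr)=\bigl(t_2+f(v)\bigr)\bigl(t_1+f(uv)\bigr)\bigl(t_2+f(u)\bigr)$ if and only if $f(uv)\bigl(q-q^{-1}+f(u)+f(v)\bigr)=f(u)f(v)$, solved by $f(u)=\frac{q-q^{-1}}{u-1}$; a constant $f=\hat z^{-1}$ works only at the isolated value $\hat z^{-1}=q^{-1}-q$. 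Thus $1+\hat z t_i$ is proportional to the Baxterized generator at $u=1+(q-q^{-1})\hat z$, and ${\mathfrak y}_N(\hat z)$ is a product of \emph{same-parameter} Baxterized elements — but such a product is not of transfer-matrix type (no trace, no double-row structure), so the commuting-family argument does not apply as stated; this is exactly the obstacle you flag, and exactly where the paper also points (Sklyanin's construction) without resolving it. In short: your reformulations are sound and your obstacles honestly identified, but the conjecture remains open in your proposal just as it does in the paper.
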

Conjecture \ref{conj1} provides a generalized statement of ``quantum integrability'', given that ${\mathfrak y}_{N}(z)$ generates $\frac{N(N-1)}{2}$ mutually commuting quantities. However, one needs to check how many of these are not just powers of previous conserved quantities. A comparison with conserved quantities coming from Sklyanin's double row transfer matrix \cite{Sklyanin} would be the starting point of such an analysis (see also \cite{DoiSmo2} for an explicit construction of conserved quantities coming from the open transfer matrix in terms of the elements of the symmetric group). Simple examples for $N=2$ or $N=3$ verify the conjecture above. 
To prove the conjecture in general one should use the Hecke algebra relations repeatedly.

In the next section, we examine finite automata associated with certain Hecke algebra representations. 
Given that for any automaton to be described a set of states is needed, the choice of a specific representation is required.
We are also going to prove some interesting propositions associated with the shuffle element based on the choice of representation. 
Before we discuss the automata associated with Hecke algebras we briefly recall in the following subsections the quantum algebra $\mathfrak{U}_q(\mathfrak{gl}_n)$ as well as basic definitions on Young tableaux.

\subsection{The quantum algebra and centralizers}
We recall the definition of the quantum algebra ${\mathfrak U}_q(\mathfrak{gl}_n),$ \cite{Drinfeld, Jimbo, Jimbo2} and the duality between the Hecke algebra ${\cal H}_N(q)$ and ${\mathfrak U}_q({\mathfrak{gl}_n}).$

Let 
\be 
a_{ij} = 2 \delta_{ij} - (\delta_{i\ j+1}+ \delta_{i\ j-1}), ~~i,\ j\in \big \{1, \ldots , n \big \}, \nonumber
\ee 
 be the Cartan matrix of the affine Lie algebra
${{\mathfrak{sl}_n}}$\footnote{For the ${\mathfrak{sl}_{2}}$ case in particular \be a_{ij} =2\delta_{ij} -2 (\delta_{i1}\ \delta_{j2} 
+\delta_{i2}\ \delta_{j1}), ~~i,\ j \in \big \{ 1, 2\big \}\ee}. Also define: 
\begin{eqnarray} && [m]_{q} =\frac{q^{m} -q^{-m}}{q -q^{-1}}, ~~~[m]_{q}!= \prod_{k=1}^{m}\ [k]_{q},~~~[0]_{q}! =1 \non\\  
&&\left [ \begin{array}{c}
m \\
n \\ \end{array} \right  ]_{q} = 
\frac{[m]_{q}!}{[n]_{q}! [m-n]_{q}!}, ~~~m\geq n \geq 0. \nonumber
 \end{eqnarray}
\begin{defn} \label{defq} The quantum enveloping
algebra ${\mathfrak U}_q({\mathfrak{sl}_n})$ is the unital associative algebra over ${\mathbb C}$ generated by
the Chevalley-Serre generators $e_{i}$, $f_{i}$,
$q^{\frac{\pm {h_{i}}}{2}}$, $i\in [n-1]$ {\it and  relations:} 
\begin{eqnarray}
&& \Big [q^{\frac{\pm h_{i}}{2}},\ q^{\frac{\pm h_{j}}{2}} \Big]=0\, \qquad q^{\frac{h_{i}}{2}}\ f_{j}=
q^{\frac{1}{2}a_{ij}}f_{j}\ q^{\frac{h_{i}}{2}}\, \qquad q^{\frac{h_{i}}{2}}\ e_{j}
= q^{-\frac{1}{2}a_{ij}}e_{j}\ q^{\frac{h_{i}}{2}}, \non\\
&& \Big [f_{i},\ e_{j}\Big ] = \delta_{ij}\frac{q^{h_{i}}-q^{-h_{i}}}{q-q^{-1}},
~~~~i,j \in [n-1],
\label{1} \nonumber
\end{eqnarray}
and the $q$--deformed Serre relations 
\begin{eqnarray} 
&& \sum_{n=0}^{1-a_{ij}} (-1)^{n}
\left [ \begin{array}{c}
  1-a_{ij} \\
   n \\ \end{array} \right  ]_{q} 
\chi_{i}^{1-a_{ij}-n}\ \chi_{j}\ \chi_{i}^{n} =0, ~~~\chi_{i} \in \big \{e_{i},\ f_{i} \big \}, ~~~ i \neq j. \label{chev} \nonumber
\end{eqnarray}
 Recall, $\big [ \ , \big ]: \mathfrak{U}_q(\mathfrak{gl}_n) \times \mathfrak{U}_q(\mathfrak{gl}_n) \to \mathfrak{U}_q(\mathfrak{gl}_n),$ such that $(a,b )\mapsto ab - ba.$  
 \end{defn}

\begin{rem} \label{rem3q} The generators $e_{i}$, $f_{i}$, $q^{\pm h_{i}}$ for $i\in [n] $ form the algebra
${\mathfrak U}_q({\mathfrak{sl}_n}).$ 
Also, $q^{\pm h_{i}}=q^{\pm (\varepsilon_{i} -\varepsilon_{i+1})}$,  $i\in [n-1],$ where the elements
$q^{\pm \varepsilon_{i}}$ belong to ${\mathfrak U}_q({{\mathfrak{gl}}_n}) $. Recall that ${\mathfrak U}_q({{\mathfrak{gl}_n}})$ is obtained by adding to 
${\mathfrak U}_q({{\mathfrak{sl}_n}})$ the elements $q^{\pm \varepsilon_{i}}$ $i\in [n],$  so that $q^{\underset{1 \leq i \leq n}{\sum}\varepsilon_{i}}$ 
belongs to the center (see \cite{Jimbo}).
\end{rem}

We recall that $\big ( {\mathfrak U}_q({\mathfrak{gl}_n}) , \Delta, \epsilon, s\big )$ 
is a Hopf algebra over ${\mathbb C}$ equipped with \cite{Drinfeld, Jimbo}:
\begin{itemize}
\item A coproduct $\Delta:\  {\mathfrak U}_q({\mathfrak{gl}_n})
\to {\mathfrak U}_q({\mathfrak{gl}_n}) \otimes {\mathfrak U}_q({\mathfrak{gl}_n}),$ such that
\begin{eqnarray}
 && \Delta(\chi_i) = q^{- \frac{h_{i}}{2}} \otimes \chi_i + \chi _i\otimes q^{\frac{h_{i}}{2}}, ~~\chi_i \in \big  \{e_{i},\ f_{i}\big  \}, ~~i \in [n-1] \label{copba} \\
&&\Delta(q^{\pm\frac{\varepsilon_{i}}{2}}) = q^{\pm\frac{\varepsilon_{i}}{2}} \otimes
q^{\pm\frac{\varepsilon_{i}}{2}}, \quad i \in [n].\label{copbb} 
\end{eqnarray} 
The $l$ co-product 
$\Delta^{(l)}:\ {\mathfrak U}_q({\mathfrak{gl}_n})
\to {\mathfrak U}_q({\mathfrak{gl}_n})^{\otimes l}$ is defined by 
$\Delta^{(l)} = (\id \otimes \Delta^{(l-1)})\Delta =(\Delta^{(l-1)} \otimes \id)
\Delta.$

\item A co-unit $\epsilon: {\mathfrak U}_q({\mathfrak{gl}_n}) \to {\mathbb C},$ such that
\begin{equation}
\epsilon(e_j) = \epsilon(f_j) =0, ~~~~\epsilon(q^{\varepsilon_j}) =1. \nonumber
\end{equation}

\item An antipode $s:  {\mathfrak U}_q({\mathfrak{gl}_n}) \to  {\mathfrak U}_q({\mathfrak{gl}_n}),$ such that
\begin{equation}
s(q^{\varepsilon_i}) = q^{-\varepsilon_i} , ~~~s(\chi_i) =- q^{\frac{h_i}{2}} \chi_i q^{-\frac{h_i}{2}}, ~~~\chi_i \in \big \{ e_i,\ f_i\big \}.  \nonumber
\end{equation}

\end{itemize}

We recall the fundamental representation of 
${\mathfrak U}_q{(\mathfrak{gl}_n})$ \cite{Jimbo}
$\pi: {\mathfrak U}_q({\mathfrak{gl}_n})\to \End({\mathbb C}^n)$:
\begin{eqnarray} 
\pi(f_{i})= e_{i, i+1}, ~~~\pi(e_{i})=e_{i+1, i}, ~~~\pi(q^{\frac{\varepsilon_{i}}{2}}) = 
q^{\frac{e_{i,i}}{2}},~~ i \in [n-1]. \label{eval} 
\end{eqnarray}

We recall also the tensor representation of the $A$-type Hecke algebra \cite{Jimbo2}, given by
$\rho: {\cal H}_N(q) \to \End(({\mathbb C}^n)^{\otimes N}),$ such that 
\begin{equation}
t_j \mapsto {\mathrm g}_j:= 1_n \otimes 1_n \otimes \ldots 1_n \otimes \underbrace{{\mathrm g}}_{j, j+1} \otimes   1_n \ldots 1_n \otimes 1_n, \label{braidq1}
\end{equation}
where $1_n$ is the $n \times n$ identity matrix and
\begin{equation}
{\mathrm g} =\sum_{x \neq y\in X} \Big ( e_{x, y} \otimes e_{y,x}  - q^{-sgn(x-y)} e_{x,x} \otimes e_{y,y} \Big ) +q 1_{n^2}. \label{braidq}
\end{equation}
Indeed, the elements ${\mathrm g}_j$ satisfy the braid relations 
as well as the Hecke constraint. Moreover, the Hecke algebra ${\cal H}_N(q)$ is central to ${\mathfrak U}_q(\mathfrak{gl}_n)$ and vice versa, i.e.
\begin{equation}
\big[{\mathrm g}_j, \pi^{\otimes N}\Delta^{(N)}(y) \big ] =0, ~~~y\in {\mathfrak U}_q(\mathfrak{gl}_n),~~~ j \in [N-1] \label{symm1}
\end{equation}
which follows from the fact that $\big [ {\mathrm g}, \pi^{\otimes 2} \Delta(y)\big ] {=0},$ for all $y \in {\mathfrak U}_q(\mathfrak{gl}_n).$

This brief review on ${\mathfrak U}_q({\mathfrak{gl}_n})$ will be particularly 
useful for the findings of the subsequent sections.

\subsection{Young tableaux, combinatorics and representation theory.}
It is useful for the analysis of the following sections to recall basic material on Young tableaux as these are essential combinatorial objects that play a key role in representation theory of the symmetric group $S_N$ and $\mathfrak{gl}(n)$ \cite{FultonHarris, Fulton}.
Note that in this subsection we only provide essential information about Young tableaux needed for our discussion here. For more information in relation to Schur polynomials, plactic monoid and representation theory the interested reader is referred for instance to \cite{Fulton, FultonHarris, crystal}. Although, representation theory associated with ${\mathfrak U}_q(\mathfrak{gl}_n)$ will be inevitably discussed as will be transparent in Section 7.

Throughout this paper we denote $\lambda \vdash N$ a partition $\lambda= (\lambda_1, \lambda_2, \ldots,\lambda_k)$ of the positive integer $N$,
where $\lambda_i$ are weakly decreasing positive integers and ${|\lambda|:}=\underset{1\leq i\leq k}{\sum} \lambda_i,$  is the size of $\lambda$ and in general $|\lambda| = N.$ 

\begin{defn} \label{Young} Suppose $\lambda= (\lambda_1, \lambda_2, \ldots,\lambda_k)$ is a partition of $N$ where $k\geq 1$. The Young (or Ferrers) diagram of
shape $\lambda$ is an array of $N$ squares having $k$ rows with row $i$ containing $\lambda_i$
squares.
\end{defn}
\begin{exa} The partition $\lambda = (3, 2,  1)$ has a Young diagram as follows,
    $ $
 {\footnotesize   \begin{center}
\ytableausetup{centertableaux}
\begin{ytableau}
\ &  &  \\
\ &   \\
\
\end{ytableau}    
\end{center}}
\end{exa}

\begin{defn}
    A filling (or weight) of a Young diagram is any way of putting a positive integer
in each box of the diagram. Let $\mu = (\mu_1, \mu_2, . . . , \mu_l)$ be a filling of a Young diagram.
Each $\mu_i$ is the number of times the integer $i$ appears in the diagram.
\end{defn}
In order for any diagram to be completely filled, it is necessary for $|\lambda| = |\mu|,$ where {$|\mu| :=  \underset{1\leq j\leq l}\sum \mu_j $.}
\begin{exa}
    A possible tableau for $\lambda = (3, 2,  1)$ with filling $\mu = (3, 1, 2)$ is
    
 $ $
 
 {\footnotesize   \begin{center}
\ytableausetup{centertableaux}
\begin{ytableau}
1 & 2 & 3 \\
3 & 1 \\
1
\end{ytableau}    
\end{center}}
\end{exa}

It is possible to fill {diagrams} arbitrarily in this manner, however 
we impose certain restrictions on the filling $\mu$. These restrictions lead to
the definition of a Young tableau.

\begin{defn} \label{Young2}
Suppose $\lambda \vdash N.$ A Young tableau $T$ is obtained by filling in the boxes of the Young 
diagram with symbols taken from some alphabet, which is usually required to be a totally ordered set. 
A Young tableau of shape $\lambda$ is also called a $\lambda$-tableau. A Young tableau is standard if the rows and columns of $T$ are
increasing sequences. That is, $T$ is filled with the numbers $1, 2, \ldots ,N$ bijectively. A Young
tableau is semi-standard if the filling is weakly increasing across each row and
strictly increasing down each column.
\end{defn}
Henceforth, we use the shorthand notation SSYT and SYT for {semi-standard} and {standard} Young-tableau respectively. The dimension of any $\lambda$-SSYT with $N$ boxes is given by all possible fillings of $n$ distinct
integers for the given SSYT. The dimension of any $\lambda$-SYT is given by all possible arrangements of $N$ numbers in $N$ boxes following the rules of the SYT. It is also useful to introduce some notation at this point. Any sequence of the form,
\begin{equation}
    i_1 i_2 \ldots i_N, ~~\mbox{such that} ~~1\leq i_1\leq i_2 \leq \ldots\leq i_N\leq n, \label{s1}
    \end{equation}
    is called an {\it ordered} sequence, whereas the sequence  
    $i_1 i_2 \ldots i_N,$ such that $~1\leq i_1 <i_2 < \ldots < i_N\leq n,$ 
    is called {\it strictly ordered}.  The total number of ordered sequences for generic values of $n$ and $N$ is given by, 
\begin{equation}
 \mbox{$\#$ of ordered sequences} =  \frac{1}{N!} \prod_{k=n}^{N+n-1}k.   \label{dim1} \end{equation}
Ordered sequences are represented by SSYT of shape $\lambda =(N).$

We give a couple of indicative examples below.
\begin{exa} Consider the SSYT of shape $\lambda =(N)$ and let for instance $N=3,$

\begin{enumerate}
     
    \item  For n=2, there are four ordered states: ${\it 111, 112, 122,222}.$ \\
     In general, for $n=2,$ for any $N$ there are $N+1$ ordered sequences.


    \item For $n=3,$ there are ten ordered sequences: ${\it 111,112,122, 113},$ ${\it 133, 123, 222},$ ${\it 223,233,333 }$
\end{enumerate}

Indeed, the number (\ref{dim1}) is confirmed by the examples above.
\end{exa}

\begin{exa} Consider the SSYT of shape $\lambda = (N-1,1)$  and let $n=3, N=3,$ then the number of 
possible fillings is eight, i.e. 
{\footnotesize
\begin{ytableau}
1 & 1   \\
2
\end{ytableau}, \ytableausetup{centertableaux}
\begin{ytableau}
1 & 2   \\
2 
\end{ytableau}, 
\begin{ytableau}
1 & 2   \\
3
\end{ytableau}, 
\begin{ytableau}
1 & 1   \\
3
\end{ytableau},  
\begin{ytableau}
1 & 3   \\
3
\end{ytableau}, 
\begin{ytableau}
1 & 3   \\
2
\end{ytableau}, 
\begin{ytableau}
2 & 2   \\
3
\end{ytableau} and 
\begin{ytableau}
2 & 3   \\
3
\end{ytableau}.}

$ $

The number of possible fillings for the correspond SYT is just two: 
{\footnotesize
\begin{ytableau}
1 & 2   \\
3
\end{ytableau}, \ytableausetup{centertableaux}
\begin{ytableau}
1 & 3   \\
2
\end{ytableau}. }
\end{exa}
Due to the rule of strictly ascending order of numbers on each column of a SSYT or SYT the maximum number of rows is $n.$ For any $N$ there is only one possible SYT tableau for the diagram with $N$ rows and one column corresponding to the partition $\lambda = (\underbrace{1,1, \ldots, 1}_{N}),$ that is 
{\footnotesize
\begin{ytableau}
\   \\
\ \\
\vdots \\
\
\end{ytableau}.}

In general in representation theory, SYT of size $N$ correspond to irreducible representations of the symmetric group $S_N$, while irreducible representations of $\mathfrak{gl}_n$ (and ${\mathfrak U}_q(\mathfrak{gl}_n),$ $q$ not root of unity) are parametrized by SSYT of a fixed shape.
The number of SSYT of shape $\lambda$ and filling $\mu,$ is called {\it Kostka number} and is denoted $K_{\lambda,\mu}.$ For $\mathfrak{gl}_n$ each irreducible representation is uniquely determined by its highest weight, which is a SSYT of shape $\lambda$. 
The sum $\underset{\mu}{\sum}K_{\lambda, \mu}$, over all fillings, counts all possible SSYT of shape $\lambda$ and is also equal to the dimension of the irreducible representation of $\mathfrak{gl}_n$ with highest weight $\lambda$. 
\begin{rem} \label{r5}
Before we give the definition of basic automata associated with the braid group we introduce some notation. 
Let $X= \big \{x_1, x_2, \ldots, x_n\big \}$ and ${\mathbb C}^n$ be the $n$-dimensional vector space with basis 
${\mathbb B}_n = \big \{\hat e_{x_j}\big \},$ $j \in [n].$ 
Then $({\mathbb C}^n)^{\otimes N}$ is the $N$ tensor product vector space of dimension $n^N$ with basis ${\mathbb B}_n^{\otimes N}= \big \{\hat e_{x_{i_1}} \otimes \hat e_{x_{i_2}} \ldots \otimes \hat e_{x_{i_N}}\big \},$ $\hat e_{x_{i_k}} \in {\mathbb B}_n,$ $i_k \in [n]$ and $k \in [N].$ For the rest of the manuscript we mainly use the simplified notation: $ x_{i_1} \otimes x_{i_2} \ldots \otimes x_{i_N}:=\hat e_{x_{i_1}} \otimes \hat e_{x_{i_2}} \ldots \otimes \hat e_{x_{i_N}}.$
The standard ordering $x_1<x_2<\ldots < x_n$ is considered.
\end{rem}
 Recall also that a state $\Psi\in ({\mathbb C}^n)^{\otimes N}$ is called {\it factorized} if it can be expressed as $\Psi = a_1 \otimes a_2 \otimes \ldots \otimes a_N,$ $a_i \in {\mathbb C}^n,$ $i\in [N].$ Also a state of the form
$x_{i_1}\otimes x_{i_2}\otimes  \ldots \otimes x_{i_N},$ such that $1\leq i_1\leq i_2 \leq \ldots\leq i_N\leq n,$
    is called an {\it ordered} factorized state.  These states in fact form a crystal basis \cite{Kashi}, 
and are represented by the SSYT of shape $\lambda = (N)$ (see also for instance \cite{crystal, Kuniba1, Kuniba2} on a pedagogical exposition and later in this manuscript in the subsequent section).  There is obviously a one to one correspondence of ordered states to ordered sequences.

In order to construct tensor representations of $\mathfrak{gl}_n$ we basically use the rules of SSYT. 
For any $n$ and $N=1$ there is only one diagram of one box
{\tiny \begin{ytableau}
\
\end{ytableau}}
of dimension $n,$ i.e. {there are} $n$ possible fillings (the fundamental representation
{\footnotesize
\begin{ytableau}
1
\end{ytableau},
\begin{ytableau}
2
\end{ytableau}, \ldots, 
\begin{ytableau}
n
\end{ytableau}}).
This represents the $n$-dimensional vectors space $V_n$ (recall throughout this manuscript $V_n$ is either ${\mathbb C}^n$ or ${\mathbb R}^n$). To build higher tensor representations of $\mathfrak{gl}_n$ we basically add extra boxes to already existing SSYT. Extra boxes can be added to the right or the bottom of an existing tableaux so that a new $\lambda$-shaped Young diagram is created.\\ For instance,

{\tiny
\begin{ytableau} 
\
\end{ytableau}  
$\otimes$
\begin{ytableau}
\
\end{ytableau} = 
\begin{ytableau}
\ &
\end{ytableau} $\oplus$ \begin{ytableau}
\ \\
\ 
\end{ytableau}
 $\ $ or $\ $  \begin{ytableau}
\ &
\end{ytableau}  $\otimes$  \begin{ytableau}
\
\end{ytableau}   =  \begin{ytableau}
\ & &
\end{ytableau}  $\oplus$   \begin{ytableau}
\ & \\
\ 
\end{ytableau}.}\\
This way all possible Young tableaux can be generated as shown in the diagram below,
{\tiny \begin{center}
\begin{tikzpicture}[shorten >=1pt,node distance=3.0cm,on grid,auto] 
   \node[inner sep=0, minimum size=2.0em,] (q_1)   {1 \begin{ytableau}
\ 
\end{ytableau}}; 
   \node[inner sep=0, minimum size=2.0em] (q_2) [below left=of q_1] {1 \begin{ytableau}
\ &
\end{ytableau}}; 
   \node[inner sep=0, minimum size=2.0em] (q_3) [below right=of q_1] {1\begin{ytableau}
\ \\
\
\end{ytableau}}; 
   \node[inner sep=0, minimum size=2.0em](q_4) [below left=of q_2] {1 \begin{ytableau}
\ & &
\end{ytableau}};
    \node[inner sep=0, minimum size=2.5em](q_5) [below right=of q_2] {2 \begin{ytableau}
\ &\\
\
\end{ytableau}};
     \node[inner sep=0, minimum size=2.0em](q_5) [below left=of q_3] {$ $};
      \node[ inner sep=0, minimum size=2.0em](q_6) [below right=of q_3] {\red{1}\red{\begin{ytableau}
\ \\
\ \\
\
\end{ytableau}}}; 
 \node[inner sep=0, minimum size=2.5em](q_7) [below left=of q_4] {1 \begin{ytableau}
\ & & &
\end{ytableau}};
 \node[inner sep=0, minimum size=2.5em](q_8) [below right=of q_4] {3 \begin{ytableau}
\ & &\\
\
\end{ytableau}};
 \node[inner sep=0, minimum size=2.5em](q_8) [below left=of q_5] { };    
  \node[inner sep=0, minimum size=2.5em](q_9) [below right=of q_5] {{2} {\begin{ytableau}
\ &\\
\ &\\
\end{ytableau}}}; 
 \node[inner sep=0, minimum size=2.5em](q_10) [below right=of q_6] {\red{3}  \red{\begin{ytableau}
\ & \\
\ \\
\ \\
\end{ytableau}}};
 \node[inner sep=0, minimum size=2.5em](q_11) [right =of q_10] {\blue{1}\blue{\begin{ytableau}
\ \\
\ \\
\ \\
\ \\
\end{ytableau}}};
  \path[->]
  (q_1) edge[left, below] node{$ $} (q_2)
   (q_1) edge[left, above] node{$ $ } (q_3)
   (q_2) edge[right, below] node{$ $} (q_4)
   (q_2) edge[right, above] node{$ $} (q_5)   
   (q_3) edge[left, below] node{$ $} (q_5)
   (q_3) edge[left, left] node{$ $} (q_6) 
 (q_4) edge[left, left] node{$ $} (q_7)
 (q_4) edge[left, left] node{$ $} (q_8)
 (q_5) edge[left, left] node{$ $} (q_8) 
 (q_5) edge[left, left] node{$ $} (q_9) 
 (q_5) edge[left, left] node{$ $} (q_10) 
 (q_6) edge[left, left] node{$ $} (q_10) 
 (q_6) edge[left, left] node{$ $} (q_11) ;
    \end{tikzpicture}
    \end{center}}
    \begin{center}
     $\vdots$
     \end{center}
Each horizontal level represents all the possible SSYT Young tableaux of $N$ boxes with the corresponding multiplicities given by the factors in front of each tableau. The multiplicities in front of each $\lambda$-shaped diagram in the figure above are equal to the dimensions of the irreducible representation of the symmetric group and are given by the number of all possible $\lambda$-SYT  
provided by the Hook length formula $m_{\lambda}= \frac{N!}{\prod_{i,j} h_{i,j}}.$ 
The dimension of each $\lambda$-SSYT is the dimension of an irreducible representation of $\mathfrak{gl}_n$.
Thus, the decomposition of tensor representations for $\mathfrak{gl}_n$ reads as
\[V_n^{\otimes N} = \bigoplus_{\lambda \vdash N}  V_{\lambda,n}^{\oplus m_{\lambda}},\]
where $\mbox{dim}V_{\lambda,n} = d_{\lambda,n}$ and is given by the dimension of the $\lambda$-shaped SSYT.

We should further note that the quantum group $\mathfrak{U}_q(\mathfrak{gl}_n)$ ($q$ not root of unity) has finite-dimensional irreducible representations in bijection with those of $\mathfrak{gl}_n$. Thus, they are also indexed by highest weights, which are here again identified with partitions $\lambda$, and are denoted $V_{\lambda,n}$ as in the classical situation. The vector representation
$V_n$ corresponds to $\lambda = (1).$ The Hecke algebra ${\cal H}_N(q)$ has irreducible representations in bijection with those of the symmetric group
$S_N$ and they are indexed by partitions $\lambda$ of size $N,$ denoted $S_{\lambda}$ ($\mbox{dim}S_\lambda = m_\lambda$). 
In general, the actions of ${\mathfrak{gl}_n}$ and  the symmetric group $S_N$ on $V_n^{\otimes N}$ commute (mutual centralizers), then the Schur–Weyl duality states that under the joint action of the symmetric group $S_N$ and $\mathfrak{gl}_n$ (or the actions of Hecke algebra ${\cal H}_N(q)$ and ${\mathfrak U}_q(\mathfrak{gl}_n),$ under the $q$-Schur-Weyl duality), the tensor space decomposes into a direct sum of tensor products of irreducible modules for $ \mathfrak{gl}_n$ and $S_N,$ i.e. $V_n^{\otimes N} = \underset{\lambda \vdash N}\sum V_{\lambda,n} \otimes S_{\lambda}$ (see also for instance, \cite{FultonHarris, Fulton, Dandecy}).

In this manuscript we are primarily interested in finite irreducible representations of $\mathfrak{U}_q(\mathfrak{gl}_n),$ so we will look at the decomposition of the tensor space from the point of view of the quantum algebra. We will come back to this point in Section 7 when studying finite irreducible representations of ${\mathfrak U}_q(\mathfrak{gl}_n)$ as eigenstates of a ${\mathfrak U}_q(\mathfrak{gl}_n)$ symmetric spin chain Hamiltonian. 

\section{q-permutation and quantum algebra automata: orthogonal combinatorial bases}
\subsection{The q-permutation automaton}
\noindent In this subsection, we focus on the fundamental representation (\ref{braidq}) of the Hecke algebra ${\cal H}_N(q)$ and we construct the associated braided finite automaton. {We also recall the ``shuffle" operator as an element of \(\text{End}(V_n^{\otimes N})\) and prove that it generates special sums of all possible permutations of any state \(\hat{e}_{i_1} \otimes \hat{e}_{i_2} \otimes \ldots \otimes \hat{e}_{i_N}\), where \(i_1 \leq i_2 \leq \ldots \leq i_N\) and \(\{\hat{e}_j\}_{j=1}^n\) is the standard basis of \(V_{n}\) (Theorem \ref{shuffle1}). The action of two special instances of the shuffle operator—namely, the \(q\)-symmetrizer and the \(q\)-anti-symmetrizer on specific tensor product states yields all \(q\)-(anti)symmetric states. These \(q\)-(anti)symmetric states are elements of a \(q\)-deformed Fock space, see Definition \ref{qFock} (cf. \cite{MisMiw, Lechner}.}

Let $X = \{x_1, x_2, \ldots, x_n\},$ such that $x_1 <x_2 \ldots <x_n,$  we define the rescaled braid operator, 
\begin{equation}
    r: = q^{-1}{\mathrm g} = \sum_{a\in X} e_{a,a} \otimes e_{a,a} + q^{-1}\sum_{a\neq b \in X} e_{a,b} \otimes 
    e_{b,a} + (1-q^{-2}) \sum_{a>b\in X} e_{a,a} \otimes e_{b,b}, \label{rr}
\end{equation}
where ${\mathrm g}$ is defined in (\ref{braidq}).
Moreover, we define 
\begin{equation}
r_j : = 1_n \otimes \ldots 1_n \otimes \underbrace{r}_{j, j+1} \otimes 1_n \ldots \otimes 1_n, ~~~j\in [N-1]. \label{rr2}
\end{equation}
The elements $r_j$  satisfy the braid relations and the quadratic relation, $r_i^2 = (1-q^{-2})r_i + q^{-2}1_{n^2}.$ We also note that $r^T = r,$ where $^T$ denotes transposition.
The action of $r$ on the tensor product of the canonical basis $\big \{\hat e_x\big \},$ $x\in X$ is given as (recall the shorthand notation, $x \otimes y : = \hat e_x \otimes \hat e_y$)
 \begin{eqnarray}
 r\ x \otimes y =\Bigg  \{ \begin{matrix} & x\otimes y, ~~~~~~x=y,\\ 
 & q^{-1}\ y \otimes x,~~~x<y
 \\ & q^{-1}\ y\otimes x + (1-q^{-2})\ x\otimes y, ~~~ x>y. \end{matrix}, \label{matrix0}
\end{eqnarray}
Recall that throughout the manuscript we consider $q = e^{\mu},$ $\mu \in {\mathbb R}.$
In general, for $x_{i_j} \in X,$ $j\in [N-1],$ 
\begin{equation} r_j\ x_{i_1} \ldots \otimes\underbrace {x_{i_j}\otimes x_{i_{j+1}}}_{j, j+1\ positions} \ldots \otimes x_{i_N} =
\Bigg  \{ \begin{matrix} & \ldots \otimes x_{i_{j}} \otimes x_{i_{j+1}} \ldots, ~~~~~~~~~~~x_{i_j}= x_{i_{j+1}},\\ 
&  q^{-1} \ldots\otimes x_{i_{j+1}}\otimes x_{i_j}\ldots , ~~~~~~x_{i_j}< x_{i_{j+1}}, 
\\ 
&  q^{-1} \ldots \otimes x_{i_{j+1}}\otimes x_{i_j} \ldots + (1-q^{-2}) \ldots\otimes x_{i_j}\otimes x_{i_{j+1}} \ldots, x_{i_j}> x_{i_{j+1}} \end{matrix}  \\ \label{braidg}\end{equation}
where $\ldots \otimes x_{i_j}\otimes x_{i_{j+1}}\otimes \ldots, \  \ldots \otimes  x_{i_{j+1}} \otimes
x_{i_j} \ldots\ \in {\mathbb B}_n^{\otimes N}$ and ${\mathbb B}_n = \big \{\hat e_j \big \},$ $j \in [n].$ 
\begin{defn} \label{auto11} {\bf (The $q$-permutation or $q$-flip automaton).}   
Let ${\mathbb B}_{n} = \big \{\hat e_{x_i}\big \},$ $x_i \in X.$

\begin{enumerate}
    

\item Let the set of states be $Q = {\mathbb B}^{\otimes N}_n$ (see Remark \ref{r5}, i.e. $Q$ consist of $n^N$ states). 

\item Let the alphabet be $\Sigma : =\big \{s_1, s_2, \ldots, s_{N-1}\big \}.$ The respective transition matrices are $r_i,$ $ i \in [N-1]$ given by the tensorial representation of the Hecke algebra, and their action on the states is given in (\ref{braidg}).
\end{enumerate}
This is called the $q$-permutation (or $q$-flip) automaton.
\end{defn}
In general, if the transition matrices of a finite automaton satisfy the relations of the braid group then we call the automaton a {\it braided automaton}.
The $q$-flip automaton is a braided automaton and is also non-deterministic due to the action of the transition matrices on the states (\ref{braidg}).
If $q=1,$ or $q=0$ then the $q$-flip automaton becomes combinatorial.

In the $q$-flip  automaton we usually consider an initial state to be an ordered state.
Also, in all automaton diagrams that follow we omit for brevity the symbol $\otimes $ in any state $a \otimes b \otimes c \ldots $ and simply write $abc \ldots$

\begin{exa}
We consider the first non-trivial examples of the $q$-flip braid automaton $N=2, 3.$ 
\begin{enumerate}
\item $N=2,$ and generic $n:$ $Q = \big \{x\otimes y\big \},$ $x, y \in X$ ($n^2$ states), $\Sigma = \big \{s_1\big \}$ and the transition matrix is $r_1.$ We consider as an initial state any state $ a\otimes b,$ $a\leq b \in X $, and set $\hat c:= 1 -q^{-2}$:
\begin{center}
\begin{figure}[ht]
\footnotesize
\begin{tikzpicture}[shorten >=1pt,node distance=3.0cm,on grid,auto] 
\node[state, inner sep=0, minimum size=2.0em, initial] (q1) {${ab}$};
\node[state,inner sep=0, minimum size=2.0em, right of=q1] (q2) {${ba}$};
\node[state,inner sep=0, minimum size=2.0em, right of=q2] (q3) {${ aa}$};
\path[->] 
(q1) edge[bend left, above] node{$s_1;q^{-1}$} (q2)
(q2) edge[bend left, below] node{$s_1;q^{-1}$} (q1)
(q2) edge[loop above] node{$s_1; \hat c$} (q2)
(q3) edge[loop above] node{$s_1$} (q3);
\end{tikzpicture}
\end{figure}
\end{center}

\item $N=3,$ generic $n:$ the set of states is $ Q =\big \{x\otimes y\otimes w\big \},$ $x,y,w \in X$ ($n^3$ states), 
$\Sigma = \big \{s_1,s_2\big \}$ and the transition matrices are $r_i,$ $i \in [2].$ 
We consider here as an initial state any state $a\otimes b\otimes c,$ $a \leq b \leq c \in X$ ($n\geq 3$)
\begin{center}
\footnotesize
\begin{tikzpicture}[shorten >=1pt,node distance=2.8cm,on grid,auto] 
   \node[circle,draw, inner sep=0, minimum size=2.0em,initial] (q_1)   {${ abc}$}; 
   \node[circle,draw, inner sep=0, minimum size=2.0em] (q_3) [above right=of q_1] {${ acb}$}; 
   \node[circle,draw, inner sep=0, minimum size=2.0em] (q_2) [above left=of q_1] {${ bac}$}; 
   \node[circle,draw, inner sep=0, minimum size=2.0em](q_4) [above=of q_2] {${ bca}$};
 \node[circle,draw, inner sep=0, minimum size=2.0em](q_5) [above=of q_3] {${cab}$}; 
  \node[circle,draw, inner sep=0, minimum size=2.0em](q_6) [above right=of q_4] {${cba}$};
  \node[circle,draw, inner sep=0, minimum size=2.0em](q_7) [above left=of q_5] {${cba}$};
  \node[circle,draw, inner sep=0, minimum size=2.0em](q_0) [right=of q_3] {${aaa}$};
  \path[->]
  (q_1) edge[left, below] node{$s_1;q^{-1}$} (q_2)
  (q_2) edge[loop left] node{$s_1; \hat c$} (q_2)   
  (q_2) edge[left, above] node{$ $ } (q_1)
   (q_1) edge[right, below] node{$~~~~s_2;q^{-1}$} (q_3)
   (q_3) edge[right, above] node{$ $} (q_1)   
   (q_3) edge[loop right] node{$s_2; \hat c $} (q_3)  
   (q_4) edge[left, below] node{$ $} (q_2)
   (q_4) edge[loop left] node{$s_2;\hat c$} (q_4) 
   (q_2) edge[left, left] node{$s_2;q^{-1}$} (q_4)   
   (q_5) edge[right, below] node{$ $} (q_3)
   (q_5) edge[loop right] node{$s_{1};\hat c$} (q_5)   
   (q_3) edge[right, right] node{$s_1;q^{-1}$} (q_5)   
     (q_4) edge[left, above] node{$s_1;q^{-1}$} (q_6)
    (q_6) edge[left, below] node{$ $} (q_4)
    (q_5) edge[right, above] node{$~~~~s_2;q^{-1}$} (q_7)
    (q_7) edge[loop above] node{$s_{1,2}; \hat c$} (q_7)
    (q_0) edge[loop above] node{$s_1$} (q_0)    
     (q_0) edge[loop below] node{$s_2$} (q_0)   
     (q_7) edge[right, below] node{$ $} (q_5);    
    \end{tikzpicture}
    \end{center}
    \begin{center}
\begin{figure}[ht]
\footnotesize
\begin{tikzpicture}[shorten >=1pt,node distance=3.0cm,on grid,auto] 
\node[state, inner sep=0, minimum size=2.0em, initial] (q1) {${\it aab}$};
\node[state,inner sep=0, minimum size=2.0em, right of=q1] (q2) {${\it aba}$};
\node[state,inner sep=0, minimum size=2.0em, right of=q2] (q3) {${\it baa}$};
\path[->] 
(q1) edge[loop above] node{$s_1$} (q1)
(q2) edge[loop below] node{$s_2; \hat c$} (q2)
(q1) edge[bend left, above] node{$s_2;q^{-1}$} (q2)
(q2) edge[bend left, below] node{$s_2;q^{-1}$} (q1)
(q2) edge[bend left, above] node{$s_1;q^{-1}$} (q3)
(q3) edge[bend left, below] node{$s_1;q^{-1}$} (q2)
(q3) edge[loop below] node{$s_1;\hat c$} (q3)
(q3) edge[loop above] node{$s_2$} (q3);
\end{tikzpicture}
\end{figure}
\end{center}
 \begin{center}
\begin{figure}[ht]
\footnotesize
\begin{tikzpicture}[shorten >=1pt,node distance=3.0cm,on grid,auto] 
\node[state, inner sep=0, minimum size=2.0em, initial] (q1) {${\it abb}$};
\node[state,inner sep=0, minimum size=2.0em, right of=q1] (q2) {${\it bab}$};
\node[state,inner sep=0, minimum size=2.0em, right of=q2] (q3) {${\it bba}$};
\path[->] 
(q1) edge[loop above] node{$s_2$} (q1)
(q1) edge[bend left, above] node{$s_1;q^{-1}$} (q2)
(q2) edge[bend left, below] node{$s_1;q^{-1}$} (q1)
(q2) edge[loop below] node{$s_1; \hat c$} (q2)
(q2) edge[bend left, above] node{$s_2;q^{-1}$} (q3)
(q3) edge[bend left, below] node{$s_2;q^{-1}$} (q2)
(q3) edge[loop below] node{$s_2;\hat c$} (q3)
(q3) edge[loop above] node{$s_1$} (q3);
\end{tikzpicture}
\end{figure}
\end{center}
\end{enumerate}
Notice the correspondence between the first of the diagrams in both automata above and the weak Bruhat order 
for the symmetric groups $S_2$ and $S_3$ respectively (see Example \ref{sym2}). For $q=1$ the deterministic flip automaton is recovered.  
\end{exa}
The $q$-flip automaton provides in fact clusters of eigenstates of certain open spin chain Hamiltonians. Specifically, each disconnected graph in the $q$-flip automaton indicates the general structure of eigenstates of such spin chain Hamiltonians (see Sections 7 and 8). The eigenstates are linear combinations of all the factorized states contained in each disconnected graph in the $q$-flip automaton. 
We will define next the shuffle operator (see Theorem \ref{shuffle1}) which is a representation of the shuffle element and encodes part of the information provided by the $q$-flip automaton, i.e. given an ordered factorized state $x_{i_1} \otimes x_{i_2} \otimes \ldots \otimes x_{i_N},$ the shuffle operator produces a state that is the linear combination of all possible permutations of $i_1...i_N.$  
 
Before we formulate the following theorem we introduce some notation, we define:
\begin{equation}
[[m]]_{\xi} = \frac{\xi^{2m} - 1}{\xi^2 -1}, ~~~~ [[m]]_\xi! : = \prod_{k=1}^m [[k]]_{\xi}, ~~~~[[0]]_\xi! =1.
\end{equation}
Also, henceforth we usually write for brevity, $\underbrace{x_{i_1} \ldots x_{i_1}}_{k_1} \otimes\underbrace{x_{i_2}\ldots x_{i_2}}_{k_2} \ldots \otimes \underbrace{x_{i_m} \ldots {x_{i_m}}}_{k_m}$ instead of\\ $\underbrace{x_{i_1} \otimes  \ldots \otimes  x_{i_1}}_{k_1} \otimes  \ldots \otimes \underbrace{x_{i_m} \otimes  \ldots \otimes  {x_{i_m}}}_{k_m}.$
The set of all possible permutations in $\underbrace{x_{i_1} \ldots x_{i_1}}_{k_1} \otimes\underbrace{x_{i_2}\ldots x_{i_2}}_{k_2} \ldots \otimes \underbrace{x_{i_m} \ldots {x_{i_m}}}_{k_m} $ for fixed values $i_1, i_2,\ldots, i_m \in [n]$ and a fixed arrangement $k_1, k_2, \ldots, k_m$ 
is denoted\\ ${\mathfrak S}_{\hat i}(N, k_1, k_2, \ldots, k_m),$ where $\hat i : = \big \{i_1, i_2, \ldots, i_m\big \}$ and ${\mathfrak S}_{\hat i}(N,\underbrace{1,1, \ldots 1}_{N}) = :{\mathfrak S}_{\hat i}(N),$ (i.e. $m=N$; if $n=N,$ ${\mathfrak S}_{\hat i}(N) =: {\mathfrak S}_N$). The cardinality of ${\mathfrak S}_{\hat i}(N, k_1, k_2, \ldots, k_m)$ is given by $\frac{N!}{k_1! k_2! \ldots k_m!},$ which is the number of all possible arrangements of $k_j$ objects (balls) of type $j\in [m],$ in $N$ distinct spots (boxes),  $\underset{1\leq j\leq m}{\sum}k_j = N.$

\begin{thm} \label{shuffle1} ({\bf The shuffle operator.}) Recall the quantities ${\cal S}_k$ and ${\mathfrak y}_N$ given in Definition \ref{shuf} and 
let $\rho: {\cal H}_N(q) \to \End(({\mathbb C}^n)^{\otimes N}),$ such that $t_i \mapsto q r_i,$ (\ref{rr}), (\ref{rr2}), then ${\cal S}_k \mapsto {\mathrm S}_k$ and ${\mathfrak y}_N \mapsto {\mathrm Y}_N$:
\begin{equation}
{\mathrm S}_k(z) = 1 + z r_k + z^2 r_{k-1} r_k +  \ldots + z^k r_1 r_2\ldots r_k, ~~\mbox{and}~~~{\mathrm Y}_N(z) = {\mathrm S}_{N-1}(z){\mathrm S}_{N-2}(z)\ldots {\mathrm S}_1(z), ~~z\in {\mathbb C}. \nonumber
\end{equation}
Let also, $x_{i_1}<x_{i_2} <\ldots < x_{i_m},$ $i_1, i_2, \ldots, i_m \in [n]$ and $1\leq k_j \leq N,$ $j\in[m],$ then
\begin{eqnarray}
    &&{\mathrm Y}_N(z)\ \underbrace{x_{i_1} \ldots x_{i_1}}_{k_1}\otimes \underbrace{x_{i_2}\ldots x_{i_2}}_{k_2} \ldots \otimes  \underbrace{x_{i_m} \ldots  {x_{i_m}}}_{k_m}  =\nonumber\\ &&  [[k_1]]_{\zeta}! [[k_2]]_{\zeta}! \ldots [[k_m]]_{\zeta}!\sum_{l=0}^{l_{max}}\sum_{{\mathrm P}^{(l)} \in {\mathfrak S}_{\hat i}(N,k_1,k_2,\ldots, k_m)} z^lq^{-l}{\mathrm P}^{(l)}[ \underbrace{x_{i_1} \ldots x_{i_1}}_{k_1} \otimes \underbrace{x_{i_2}\ldots x_{i_2}}_{k_2} \ldots \otimes \underbrace{x_{i_m} \ldots {x_{i_m}}}_{k_m}] \label{bstate} \nonumber\\
\end{eqnarray}
where $\zeta = z^{\frac{1}{2}}$  and $k_1+k_2 + \ldots+k_m =N,$ $l$ is the length of the corresponding word (i.e. the corresponding permutation) and $l_{max} =k_1k_2 +(k_1 +k_2)k_3 + \ldots +(k_1 +k_2 +\ldots +k_{m-1})k_m$.\\ That is, the action  of ${\mathrm Y}_N(z)$ on 
$\underbrace{x_{i_1} \ldots x_{i_1}}_{k_1}\otimes \underbrace{x_{i_2}\ldots x_{i_2}}_{k_2} \ldots \otimes \underbrace{x_{i_m} \ldots x_{i_m}}_{k_m} $ generates specific sums of all $\frac{N!}{k_1!k_2! \ldots k_m!}$ possible permutations of $\underbrace{x_{i_1} \ldots x_{i_1}}_{k_1} \otimes \underbrace{x_{i_2}\ldots x_{i_2}}_{k_2} \ldots \otimes\underbrace{x_{i_m} \ldots x_{i_m}}_{k_m}.$
\end{thm}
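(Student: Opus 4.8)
The plan is to rewrite ${\mathrm Y}_N(z)$ as a weighted sum over the symmetric group and then analyse its action on the ordered input directly. By Matsumoto's theorem (invoked already in Remark \ref{tree2}), any two reduced words for a given $w\in S_N$ are braid-equivalent, so the product of generators along a reduced word defines a single element $T_w\in{\cal H}_N(q)$, and the shuffle element of Definition \ref{shuf} is precisely ${\mathfrak y}_N(\hat z)=\sum_{w\in S_N}\hat z^{\,\ell(w)}T_w$, where $\ell(w)$ is the Coxeter length. Applying $\rho$ with $t_i\mapsto q\,r_i$ sends $T_w$ to $q^{\ell(w)}R_w$, where $R_w$ is the corresponding product of operators $r_i$; absorbing the powers of $q$ into the spectral parameter (the rescaling $z=q\hat z$ that already yields the displayed formula for ${\mathrm S}_n$) gives ${\mathrm Y}_N(z)=\sum_{w\in S_N}z^{\,\ell(w)}R_w$. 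Write $v_0:=\underbrace{x_{i_1}\cdots x_{i_1}}_{k_1}\otimes\cdots\otimes\underbrace{x_{i_m}\cdots x_{i_m}}_{k_m}$ for the ordered input state.

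The key step is a lemma describing $R_w\,v_0$ for a single $w$. I would prove by induction on $\ell(w)$, reading a reduced word from the right and using the length criterion $\ell(s_a u)>\ell(u)\iff u^{-1}(a)<u^{-1}(a+1)$, that at every length-increasing application of a generator the two tensor slots being acted on are in non-decreasing order with respect to the value ordering $x_{i_1}<\cdots<x_{i_m}$. Consulting the action (\ref{matrix0}), such a slot is either strictly increasing, in which case $r$ acts as a pure flip with weight $q^{-1}$ and no diagonal $(1-q^{-2})$ term, or it carries two equal letters, in which case $r$ acts as the identity; in neither case does the descending (third) branch of (\ref{matrix0}) ever occur. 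Consequently $R_w\,v_0$ is a single basis state, namely $R_w\,v_0=q^{-\ell_{\mathrm{ext}}(w)}\,(w\cdot v_0)$, where $w\cdot v_0$ denotes the permuted tensor state and $\ell_{\mathrm{ext}}(w)$ counts the inversions of $w$ whose two strands carry letters from different blocks (the strict flips); this count is independent of the chosen reduced word because inversions are in bijection with the letters of any reduced word.

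Finally I would organise the sum by the fibres of the map $w\mapsto w\cdot v_0$. The stabiliser of $v_0$ is the Young subgroup $W_\lambda=S_{k_1}\times\cdots\times S_{k_m}$ permuting letters within blocks, so the distinct output states are indexed by the left cosets $wW_\lambda$, of which there are $N!/(k_1!\cdots k_m!)$, matching ${\mathfrak S}_{\hat i}(N,k_1,\ldots,k_m)$. For the minimal-length representative $w_0$ of a coset one has $\ell_{\mathrm{int}}(w_0)=0$ and $\ell(w_0\sigma)=\ell(w_0)+\ell(\sigma)$ with $\ell_{\mathrm{ext}}(w_0\sigma)=\ell(w_0)=:l$ for all $\sigma\in W_\lambda$, since $\sigma$ contributes only within-block inversions. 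Summing the contributions of one coset then gives $z^{l}q^{-l}(w_0\cdot v_0)\sum_{\sigma\in W_\lambda}z^{\ell(\sigma)}$, and the factorisation $\sum_{\sigma\in W_\lambda}z^{\ell(\sigma)}=\prod_{j=1}^m\big(\sum_{\sigma_j\in S_{k_j}}z^{\ell(\sigma_j)}\big)=\prod_{j=1}^m[k_j]_z!$ identifies the Poincar\'e polynomial of each $S_{k_j}$ with $[[k_j]]_\zeta!$ under $\zeta^2=z$. Ranging over all cosets reproduces exactly (\ref{bstate}), with $l$ the length of the minimal representative ${\mathrm P}^{(l)}$ and $l_{\max}=\sum_{i<j}k_ik_j$ the maximal number of across-block inversions.

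I expect the main obstacle to be the lemma of the second paragraph: one must argue rigorously that the descending branch of (\ref{matrix0}), which alone produces the diagonal $(1-q^{-2})$ term and would spoil the single-term conclusion, is never triggered along a reduced word, and that the exponent of $q^{-1}$ equals the well-defined across-block inversion number $\ell_{\mathrm{ext}}(w)$. The length bookkeeping for parabolic cosets — in particular $\ell_{\mathrm{ext}}(w_0\sigma)=\ell_{\mathrm{ext}}(w_0)$ and $\ell(w_0\sigma)=\ell(w_0)+\ell(\sigma)$ — is standard but must be invoked with care to pin down the factor $z^lq^{-l}$ in (\ref{bstate}).
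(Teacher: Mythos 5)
Your proposal is correct, but it takes a genuinely different route from the paper. The paper proves the theorem by induction on $N$ using the recursion ${\mathrm Y}_N(z)={\mathrm S}_{N-1}(z){\mathrm Y}_{N-1}(z)$: it tracks how each term $z^k$ of ${\mathrm S}_{N-1}$ transports the last tensor factor $k$ slots to the left, verifies the count of terms and the maximal length directly, carries this out explicitly for the distinct-letter case ($m=N$) and the two-block case ($m=2$), extracting the $[[k_2]]_{\zeta}$ factor by inspecting the coefficient of a single reference term, and then asserts that the general multiplicity case follows analogously. You instead expand the shuffle operator in closed form as ${\mathrm Y}_N(z)=\sum_{w\in S_N}z^{\ell(w)}R_w$ (legitimate via Matsumoto, which the paper itself invokes in Remark \ref{tree2}, together with the standard staircase normal form underlying Definition \ref{shuf}), prove a one-word lemma that $R_w$ sends the ordered state to a single basis state weighted by $q^{-\ell_{\mathrm{ext}}(w)}$ — the crucial point being that the length criterion $\ell(s_a u)>\ell(u)\iff u^{-1}(a)<u^{-1}(a+1)$ combined with the ordering of the input guarantees the descending branch of (\ref{matrix0}) never fires along a reduced word — and then sum over left cosets of the Young subgroup $W_\lambda=S_{k_1}\times\cdots\times S_{k_m}$, with the parabolic length additivity $\ell(w_0\sigma)=\ell(w_0)+\ell(\sigma)$ and the Poincar\'e polynomial identity $\sum_{\sigma\in S_{k}}z^{\ell(\sigma)}=[[k]]_{\zeta}!$ (valid since $[[j]]_{\zeta}=1+z+\cdots+z^{j-1}$ under $\zeta^2=z$) producing the prefactor $[[k_1]]_{\zeta}!\cdots[[k_m]]_{\zeta}!$ at once. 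The trade-off: the paper's induction is elementary and mirrors the automaton/tree-diagram picture it develops, but leaves the general $(k_1,\ldots,k_m)$ case as an assertion; your coset argument is uniform in the multiplicities, makes the origin of the prefactors, of the weight $z^lq^{-l}$ ($l$ being the across-block inversion number of the minimal coset representative), and of $l_{\max}=\sum_{i<j}k_ik_j$ structurally transparent, and thereby actually closes the gap the paper papers over — at the cost of importing standard Coxeter-combinatorics machinery (minimal coset representatives, length additivity) that the paper's self-contained computation avoids.
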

\begin{proof}
The proof of the statement is given by induction.  
\begin{itemize}
\item We will first prove the case $m=N,$ $k_1=k_2 =\ldots = k_m =1$
\begin{enumerate}
   \item We first prove that the statement holds for $N=2$ ($N=1$ is trivial as it only contains the unit element), 
   let $x_i<x_j \in X,$ then
   $${\mathrm Y}_{2}(z)\  x_i\otimes x_j =  x_i\otimes x_j+ zq^{-1}\  x_j\otimes x_i.$$ 
   In fact, we can also easily show the statement for $N=3,$ (see also the first graph in the permutation automaton 
   above for $N=3$), let $x_i<x_j<x_k \in [n],$ then
\begin{eqnarray} 
 {\mathrm Y}_3(z)\ x_i \otimes x_j \otimes x_k
&=& x_i\otimes x_j \otimes x_k +zq^{-1}(x_j\otimes x_i\otimes x_k+ x_i\otimes x_k\otimes x_j) \nonumber\\ & +& z^2q^{-2} (x_j \otimes x_k \otimes x_i + x_k\otimes x_i \otimes x_j) +z^3q^{-3} x_k\otimes x_j \otimes x_i.\nonumber
\end{eqnarray}
  
\item Assume that  (\ref{bstate}) is true for $N-1$ with $l_{max} = \frac{(N-1)(N-2)}{2}$, we will then show that (\ref{bstate}) holds also for $N,$ where $l_{max} = \frac{N(N-1)}{2}.$

 We first observe that ${\mathrm Y}_N = S_{N-1} {\mathrm Y}_{N-1}$, 
   then
   \begin{eqnarray}
       && {\mathrm Y}_N(z)\ x_{i_1} \otimes x_{i_2} \ldots \otimes x_{i_N} = {\mathrm S}_{N-1}(z) {\mathrm Y}_{N-1}(z)\  x_{i_1} \otimes  x_{i_2} \ldots \otimes  x_{i_N} =\nonumber\\
       && {\mathrm S}_{N-1}(z) \sum_{l=0}^{\frac{(N-1)(N-2)}{2}}\sum_{f^l_{i_1}, \ldots f^l_{i_{N-1}} \in {\mathfrak S}_{
       \hat i}(N-1)} z^lq^{-l}\ x_{f^l_{i_1}} \otimes x_{f^l_{i_2}}\ldots  \otimes x_{f^l_{i_{N-1}} }\otimes  x_{{i_N} }.  \label{final2} 
   \end{eqnarray}
   Each term of order $z^k$ in ${\mathrm S}_{N-1}$ moves the last element $x_{i_N},$ in every factorized state\\ 
   $x_{f^l_{i_1}} \otimes x_{f^l_{i_2}}\ldots  \otimes  x_{i_N},$ $k$ positions to the left ($N-1$ being the maximum move to the left), hence any length-$l$ word now becomes a length-$l+k$ word. Moreover, the maximum length of a word becomes $\frac{(N-1)(N-2)}{2} + N-1 = \frac{N(N-1)}{2}$ and the total number of terms in (\ref{final2}) is now $(N-1)!N = N!.$ Thus, we conclude 
   \begin{eqnarray}
       {\mathrm Y}_N(z)\ x_{i_1} \otimes x_{i_2} \otimes \ldots  \otimes x_{i_N} 
       &=&\sum_{l=1}^{\frac{N(N-1)}{2}}\sum_{f^l_{i_1}, \ldots f^l_{i_N} \in {\mathfrak S}_{\hat i}(N)} z^lq^{-l}\ x_{f^l_{i_1}} \otimes \ldots \otimes x_{f^l_{i_{N-1}}} \otimes  x_{{f^{l}_{i_N}} }. \label{final2b} 
   \end{eqnarray}
\end{enumerate}

\item We also show that (\ref{bstate}) holds for the Grassmannian case $m=2.$ 
\begin{enumerate}
\item We first show that (\ref{bstate}) holds for $k_1 = N-1,\ k_2 =1$. 
\begin{eqnarray}
&&{\mathrm Y}_N(z) \ \underbrace{x_{i_1}\ldots x_{i_1}}_{N-1} \otimes x_{i_2} = [[N-1]]_{\zeta}! {\mathrm S}_{N-1}(z) \ \underbrace{x_{i_1}\ldots x_{i_1}}_{N-1} \otimes  x_{i_2} = \nonumber\\ 
&& [[N-1]]_{\zeta}! \big (\underbrace{x_{i_1}\ldots x_{i_1}}_{N-1} \otimes x_{i_2} + zq^{-1}\ \underbrace{x_{i_1}\ldots x_{i_1}}_{N-2} \otimes x_{i_2} \otimes x_{i_1} + \ldots + 
(zq^{-1})^{N-1}x_{i_2} \otimes\underbrace{x_{i_1}\ldots x_{i_1}}_{N-1}\big ).\end{eqnarray}

\item We assume that (\ref{bstate}) holds for any $N-1$ and $k_1, k_2-1,$ with $l_{max} = k_1(k_2 -1),$ we show that it also holds for $N$ and $k_1,k_2$ and $l_{max} = k_1 k_2.$
\begin{eqnarray}
&&  {\mathrm Y}_N(z)\ \underbrace{x_{i_1} \ldots x_{i_1}}_{k_1} \otimes\underbrace{x_{i_2}\ldots x_{i_2}}_{k_2}  = 
 {\mathrm S}_{N-1}(z) {\mathrm Y}_{N-1}(z)\ \underbrace{x_{i_1} \ldots x_{i_1}}_{k_1}\otimes  \underbrace{x_{i_2}\ldots x_{i_2}}_{k_2} =\nonumber\\
&&  [[k_1]]_{\zeta}! [[k_2-1]]_{\zeta}! {\mathrm S}_{N-1}(z) \sum_{l=0}^{k_1(k_2-1)}\sum_{{\mathrm P}^{(l)} \in {\mathfrak S}_{\hat i}(N-1,k_1,k_2-1)} z^lq^{-l}{\mathrm P}^{(l)}[\underbrace{x_{i_1} \ldots x_{{i_1}}}_{k_1} \otimes  \underbrace{x_{i_2} \ldots x_{{i_2}}}_{k_2-1}] \otimes x_{i_2} = \nonumber\\
&& [[k_1]]_{\zeta}! [[k_2]]_{\zeta}! \sum_{l=0}^{k_1k_2}\sum_{{\mathrm P}^{(l)} \in {\mathfrak S}_{\hat i}(N,k_1,k_2)} z^lq^{-l}{\mathrm P}^{(l)}[\underbrace{x_{i_1} \ldots x_{{i_1}}}_{k_1} \otimes \underbrace{x_{i_2} \ldots x_{{i_2}}}_{k_2}], \label{2b} \end{eqnarray}
\end{enumerate}
where we have used the action (\ref{braidg}).  Also,
for $\underbrace{x_{i_1} \ldots x_{{i_1}}}_{k_1} \otimes  \underbrace{x_{i_2} \ldots x_{{i_2}}}_{k_2-1} \otimes x_{i_2}$ the maximum length  (permutation) is $\underbrace{x_{i_2} \ldots  x_{i_2}}_{k_2-1} \otimes\underbrace{x_{i_1} \ldots x_{i_1}}_{k_1} \otimes x_{i_2}$ and the respective maximum length is $k_1(k_2-1)$. The maximum length of the move for the far right $x_{i_2}$  in $\underbrace{x_{i_2} \ldots  x_{i_2}}_{k_2-1} \otimes\underbrace{x_{i_1} \ldots x_{i_1}}_{k_1} \otimes x_{i_2}$ to reach the last $x_{i_2}$ on the left is $k_1,$ so the maximum length permutation for $\underbrace{x_{i_1} \ldots x_{{i_1}}}_{k_1} \otimes  \underbrace{x_{i_2} \ldots x_{{i_2}}}_{k_2}$ is $\underbrace{x_{i_2} \ldots  x_{i_2}}_{k_2}  \otimes \underbrace{x_{i_1} \ldots x_{i_1}}_{k_1}$ and its length is $k_1(k_2 -1) + k_1 = k_1 k_2.$ There is also an overall $[[k_2]]_{\zeta}$ factor as we go from line two to line three in (\ref{2b}). To see how this overall factor emerges we focus on the first term $\underbrace{x_{i_1} \ldots x_{i_1}}_{k_1} \otimes \underbrace{x_{i_2} \dots x_{i_2}}_{k_2}$:
\begin{eqnarray}
&& {\mathrm S}_{N-1}(z)\big ( \underbrace{x_{i_1} \ldots x_{i_1}}_{k_1} \otimes \underbrace{x_{i_2} \dots x_{i_2}}_{k_2} + q\ \underbrace{x_{i_1} \ldots x_{i_1}}_{k_1-1}\otimes x_{i_2} \otimes x_{i_1} \otimes \underbrace{x_{i_2} \dots x_{i_2}}_{k_2-1} +\ldots \big ) =\nonumber\\
&& (1 +z +z^2 + \ldots + z^{k_2-1})\  \underbrace{x_{i_1} \ldots x_{i_1}}_{k_1} \otimes \underbrace{x_{i_2} \dots x_{i_2}}_{k_2}  + \ldots = [[k_2]]_{\zeta}\ \underbrace{x_{i_1} \ldots x_{i_1}}_{k_1} \otimes \underbrace{x_{i_2} \dots x_{i_2}}_{k_2} + \ldots  \nonumber
\end{eqnarray}
To identify the overall factor $[[k_2]]_{\zeta}$ we could have focused on any permutation of $\underbrace{x_1 \ldots x_1}_{k_1} \otimes \underbrace{x_2 \dots x_2}_{k_2}$ and use similar arguments.
\end{itemize}
The proof can be generalized in an analogous way for the general case $k_1, k_2, \ldots, k_m$ ($k_1 + k_2 + \ldots  k_m =N$.)
We note that in the general case the maximum length word in $Y_N(z)\ \underbrace{x_{i_1} \ldots x_{i_1}}_{k_1}\ldots \otimes  \underbrace{x_{i_m} \ldots {x_{i_m}}}_{k_m}$ is\\ $\underbrace{x_{i_m} \ldots x_{i_m}}_{k_m} \otimes \underbrace{x_{i_{m-1}}\ldots x_{i_{m-1}}}_{k_{m-1}} \ldots \otimes  \underbrace{x_{i_1} \ldots {x_{i_1}}}_{k_1},$ so combinatorially one can compute the length of the permutation starting from the ordered state $\underbrace{x_{i_1} \ldots x_{i_1}}_{k_1}  \otimes\underbrace{x_{i_2}\ldots x_{i_2}}_{k_2} \ldots \otimes \underbrace{x_{i_m} \ldots {x_{i_m}}}_{k_m},$ which is $ l_{max} = k_1k_2 + (k_1 +k_2)k_3 +\ldots +(k_1+k_2 +\ldots +k_{m-1})k_m.$
\end{proof}
The proof of Theorem \ref{shuffle1} is quite descriptive given that the arguments used are mostly combinatorial. The proof of the first part of the proposition can be presented diagrammatically via a tree graph. Indeed, the action of $Y_N(z)$ on $x_{i_1}\otimes x_{i_2} \otimes \ldots \otimes x_{i_N},$ $x_{i_1}< x_{i_2}< \ldots <x_{i_N}$ is graphically depicted as a tree diagram below.
The length of each word is determined by the power $\xi:=zq^{-1}$ after multiplying the coefficients $\xi^k$ along each path in the diagram.
In the following tree diagram instead of indicating a generic state $ x_{j_1} \otimes x_{j_2} \otimes \ldots \otimes x_{j_N},$ we simply write $j_1 j_2 \ldots j_n \in {\mathfrak S}_{\hat i}(N).$ 

\begin{center}
\begin{tikzpicture}[grow=down,->,level/.style={sibling distance=150mm/#1}]
\begin{scope}[yshift=0cm]
  \node[inner sep=0, minimum size=2.em](e) {${{i_1}{i_2}{i_3} \ldots {i_N}}$} 
   child {node[minimum size=2.0em](e_1) {${i_1}{i_2}{i_3} \ldots $}
    child {node[minimum size=2.0em](e_2) {${ {i_1}{i_2}{i_3} \ldots}$}
      }      child {node[minimum size=2.em](q_2) {${{i_1}{i_3}{i_2}\ldots }$}
      }
      child {node[minimum size=2.em](q_3) {${ {i_3}{i_1}{i_2} \ldots }$}
      }
    }    child {node[minimum size=2.em](q_1) {${ {i_2}{ i_1}i_3\ldots} $}
    child {node[minimum size=2.em] (q_4) {$~~~~~~~~~~~~~{ {i_2}{i_1}{i_3}\ldots} $}
      }        child {node[minimum size=2.em] (q_5) {$~~~~~~~~~{ {i_2}{i_3}{i_1}\ldots} $}
      }
      child {node[minimum size=2.em](q_6) {${ {i_3}{i_2}{i_1}\ldots} $}
      }   
      }
   (e) edge[left] node{1} (e_1)
     (e) edge[right] node{$r_1;\xi$} (q_1)  
      (e_1) edge[left] node{1} (e_2)  
     (e_1) edge[right] node{$r_2;\xi$} (q_2)
      (e_1) edge[right] node{$~r_1r_2;\xi^{2}$} (q_3)  
        (q_1) edge[left] node{$1$} (q_4)  
        (q_1) edge[right] node{$r_2;\xi$} (q_5)   
    (q_1) edge[right] node{$~r_1r_2;\xi^{2}$} (q_6);
\end{scope}
\end{tikzpicture}
$$\vdots$$
\end{center}
The first level contains states produced from the action of $1+zr_1$ on $x_{i_1} \otimes x_{i_2} \otimes\ldots \otimes x_{i_N}.$
The $k^{th}$ horizontal level in the graphical representation of ${\mathrm Y}_k(z)\ x_{i_1}\otimes x_{i_2} \otimes \ldots\otimes  x_{i_k} \otimes \ldots \otimes x_{i_N}$ 
gives all possible $k!$ permutations of the first $k$ indices $ i_1 i_2\ldots i_k \ldots $ 
and the last level  contains all possible $N!$ permutations of $ i_1 i_2 \ldots i_N$

\begin{rem} \label{geny1}
Requiring,
\begin{equation}
(1+zr)\ x_i \otimes x_j \propto (1+z r)\ x_j \otimes x_i,\qquad x_i<x_j \in X, \nonumber
\end{equation}
leads to two values for $z:$ $z=q^2$ and $z=-1:$ 
\begin{eqnarray}
(1+q^2 r)\ x_j\otimes x_i = q(1+ q^2 r )\ x_i \otimes x_j, ~~~~~(1- r)\ x_j\otimes x_i =   -q^{-1}(1- r)\ x_i\otimes x_j. \label{eq1}
\end{eqnarray}
Specifically, ${\mathrm Y}(q^2)$ is called the $q$-symmetrizer and ${\mathrm Y}(-1)$ is called the $q$-antisymmetrizer and (see Proposition \ref{shuffle1}), 
\begin{equation}
    {\mathrm Y}_N(z_0)\ x_{i_1} \otimes x_{i_2}\ldots \otimes  x_{i_N} \propto {\mathrm Y}_N(z_0)\ x_{j_1} \otimes x_{j_2} \otimes \ldots  \otimes x_{j_N}, \qquad z_0 \in \big \{-1, q^2\big \}, \label{geny}
 \end{equation}    
where $x_{i_1}\leq x_{i_2 }\leq \ldots \leq x_{i_N}$ and $j_1 {j_2}\ldots {j_N} \in {\mathfrak S}_{\hat i}(N).$
The general case (\ref{geny}) can be shown by  (\ref{matrix0}), (\ref{eq1}). 
\end{rem}
The $q$-symmetrizer generates all $q$-symmetric states known also as the $q$-analogues of qudit Dicke states, which are $q$-deformed, high rank generalizations of the qubit Dicke states \cite{Dicke1} (see recent results on the construction of $q$-Dicke states in \cite{Nepo1, Nepo2} based on the action of the elements of $\mathfrak{U}_q(\mathfrak{gl}_n)$ on a reference state). The $q$-anti-symmetrizer produces fully $q$-antisymmetric states, starting from factorized states. If $q=1,$ ${\mathrm Y}_N(1),$ ${\mathrm Y}_{N}(-1)$ are the symmetrizer and anti-symmetrizer respectively; the symmetrizer produces all fully symmetric states, and the anti-symmetrizer yields the fully antisymmetric states.

\begin{lemma} \label{symmy}
Let ${\mathrm Y}_N(z),$ $z\in {\mathbb C}$ be defined in Proposition \ref{shuffle1} and 
$\pi: \mathfrak{U}_q(\mathfrak{gl}_n) \to \End({\mathbb C}^n)$ 
be the fundamental representation of $\mathfrak{U}_q(\mathfrak{gl}_n)$ (\ref{eval}), then 
\begin{equation}
\big [ {\mathrm Y}_N(z),\ \pi^{\otimes N}\Delta^{(N)}(y) \big ] =0,~~~~y \in \mathfrak{U}_q(\mathfrak{gl}_n).
\end{equation}
\end{lemma}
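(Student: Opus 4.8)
The plan is to reduce the statement to the commutation relation \eqref{symm1} already established for the individual braid generators, and then propagate it through the shuffle construction using that taking a commutator is a derivation.

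First I would observe that, under the representation $\rho$, the operator ${\mathrm Y}_N(z)$ is a polynomial (with scalar, $z$-dependent coefficients) in the generators $r_j$, $j\in[N-1]$. Indeed, by Theorem \ref{shuffle1} we have ${\mathrm Y}_N(z) = {\mathrm S}_{N-1}(z)\,{\mathrm S}_{N-2}(z)\cdots {\mathrm S}_1(z)$, and each factor ${\mathrm S}_n(z) = 1 + z r_n + z^2 r_{n-1}r_n + \cdots + z^n r_1 r_2 \cdots r_n$ is visibly such a polynomial. Hence ${\mathrm Y}_N(z)$ lies in the subalgebra of $\End(({\mathbb C}^n)^{\otimes N})$ generated by the $r_j$.

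Next I would invoke $r_j = q^{-1}{\mathrm g}_j$ together with \eqref{symm1}: since $\big[{\mathrm g}_j,\ \pi^{\otimes N}\Delta^{(N)}(y)\big]=0$ for every $j\in[N-1]$ and every $y\in{\mathfrak U}_q(\mathfrak{gl}_n)$, the same holds with ${\mathrm g}_j$ replaced by $r_j$. Writing $A:=\pi^{\otimes N}\Delta^{(N)}(y)$, the map $X \mapsto [X,A]$ is linear and satisfies the Leibniz rule $[XY,A] = X[Y,A] + [X,A]Y$. Therefore, if $A$ commutes with each generator $r_j$, it commutes with every monomial in the $r_j$, and, by linearity, with any scalar-coefficient polynomial in them; in particular $\big[{\mathrm Y}_N(z),A\big]=0$, which is the claim.

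The only point that genuinely needs care, and which I would regard as the (mild) crux, is confirming that \eqref{symm1} indeed yields commutation with each \emph{single} generator ${\mathrm g}_j$ rather than merely with their full product. This is precisely the content of \eqref{symm1}, which follows from the coassociativity of $\Delta$ and the two-site intertwining relation $\big[{\mathrm g},\ \pi^{\otimes 2}\Delta(y)\big]=0$ for all $y$. Once that generator-wise commutation is in hand, the remainder is a formal derivation computation requiring no further representation-theoretic input.
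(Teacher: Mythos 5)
Your proposal is correct and is essentially the paper's own argument: the paper's proof likewise cites \eqref{symm1} together with the definition of ${\mathrm Y}_N(z)$ and $r_i = q^{-1}{\mathrm g}_i$, leaving implicit the routine step (which you spell out via the Leibniz rule) that commutation with each generator propagates to any polynomial in the $r_j$.
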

\begin{proof}
This statement follows from (\ref{symm1}), the definition of 
${\mathrm Y}_N(z)$ and $r_i := q^{-1}{\mathrm g}_{i},$ $i \in [N-1],$ where ${\mathrm g}_i$ is defined in (\ref{braidq1}), (\ref{braidq}). 
\end{proof}
\begin{rem} \label{remsym}
Let $X = \big \{x_1,x_2, \ldots, x_n\big \},$ $1\leq k_j\leq N,$ $j\in [m],$ such that $k_1+k_2 +\ldots + k_m =N.$ 
We present an alternative way to express any state, $\underbrace{x_{i_1} \ldots x_{i_1}}_{k_{1}} \otimes \underbrace{x_{i_2} \ldots x_{i_2}}_{k_{2}} \ldots \otimes \underbrace{x_{i_m} \ldots x_{i_m}}_{k_{m}}$ ($x_{i_1} \leq x_{i_2} \ldots \leq x_{i_m}$). Let $0\leq m_j\leq N,$ $j\in [n],$ and $m_1+m_2 +\ldots + m_n=N,$
then 
\begin{equation}
 \underbrace{x_{i_1} \ldots x_{i_1}}_{k_{1}}\otimes\underbrace{x_{i_2} \ldots x_{i_2}}_{k_{2}} \ldots\otimes \underbrace{x_{i_m} \ldots x_{i_m}}_{k_{m}} = \underbrace{x_{1} \ldots x_{1}}_{m_{1}} \otimes \underbrace{x_{2} \ldots x_{2}}_{m_{2}} \ldots \otimes \underbrace{x_{n} \ldots x_{n}}_{m_{n}},
\end{equation}
where $ m_j = k_l$ if $j= i_l$ and  $m_j =0$ if $j \neq i_l,$ $j,i_l \in [n],$ $l \in [m].$

We now define,
\begin{eqnarray}
&& b^{(0)}_{m_{1} \ldots m_{n}}: = \big ([[m_1]]_q![[m_2]]_q! \ldots [[m_n]]_q! \big)^{-1} {\mathrm Y}_N(q^2)\ \underbrace{x_{1} \ldots x_{1}}_{m_{1}} \otimes \underbrace{x_{2} \ldots x_{2}}_{m_{2}} \ldots \otimes \underbrace{x_{n} \ldots x_{n}}_{m_{n}}, \nonumber\\ 
&& {\hat b}^{(0)}_{m_{1} \ldots m_{n}} := \frac{b^{(0)}_{m_{1}\ldots m_{n}}}{\| b^{(0)}_{m_{1}\ldots m_{n}}\|}, ~~~~
\| b^{(0)}_{m_{1}\ldots m_{n}}\| = \Big(\frac{[[N]]_q!}{[[m_1]]_q! \ldots [[m_n]]_q!}\Big)^{\frac{1}{2}} \label{remsym1}
\end{eqnarray}
$x_{i}\in X,$ $i\in [n].$ 

The set of all normalized $q$-symmetric states ${\mathbb B}_s = \big \{{\hat b}^{(0)}_{m_{1}\ldots m_{n}}\big \},$ $0\leq m_j\leq N,$ $j\in [n]$ and $m_1+m_2 + \ldots +m_n=N,$ is an orthonormal basis of a vector space of dimension $d_s = \frac{n (n+1) \ldots (n+N-1)}{N!}$, this is also the dimension of the SSYT of $N$ columns and one row. For instance for $n=2$ (${\mathfrak{U}_q(\mathfrak{gl}_2)}$) $d_s =N+1,$ for $n=3$ (${\mathfrak{U}_q(\mathfrak{gl}_3)}$) $d_s= \frac{(N+1)(N+2)}{2}$ and so on. It is straightforward to show that
\begin{equation}
{\hat b}^{(0)T}_{m_{1} \ldots m_{n}} {\hat b}^{(0)}_{m'_{1} \ldots m'_{{n}}} =  \delta_{m_1,m'_1} \ldots \delta_{m_n, m'_n},
\end{equation}
where $^T$ denotes transposition.

The set of all $q$-antisymmetric states for $n\geq N$ is given by, 
\begin{equation}
\hat b^{(-)}_{i_1i_2 \ldots i_N} : = \frac{Y_N(-1) x_{i_1} \otimes x_{i_2} \otimes  \ldots\otimes x_{i_N}}{\sqrt{[[N]]_{q^{-1}}!}}
\end{equation}
$i_j \in [n],$ $j \in [N]$ and $x_{i_1}<x_{i_2} \ldots < x_{i_N}$. 
These states are the basis of the $d_a=\frac{n (n-1) (n-2) \ldots (n-N+1)}{N!}$ dimensional vector space represented by the SSYT of one column and $N$ rows (specifically for $n=N,$ $d_a =1$).
\end{rem}
The $q$-(anti)symmetric states are elements of the $q$-Fock space defined as follows (see relevant construction in \cite{MisMiw}). 
\begin{defn} \label{qFock}
Let $V_n$ be an $n$-dimensional vector space over some field $K$ (here $K$ is either ${\mathbb C}$ or ${\mathbb R}$), then the $q$-Fock space is defined as the direct sum of all $q$-(anti)symmetric tensors in $V_n^{\otimes N},$
$$ {\cal F}^{(\epsilon)}_{q}(V_n) = \bigoplus_{N\geq0} \mathfrak{s}^{(\epsilon)}_qV_n^{\otimes N},$$
where $\epsilon \in \{+, -\},$ $\mathfrak{s}^{(+)}_q$ is the $q$-symmetrizer and $\mathfrak{s}_q^{(-)}$ the $q$-antisymmetrizer.
\end{defn}
In the isotropic limit $q \to 1,$ the usual Fock space is recovered (see for instance \cite{Lechner}).

We present some explicit examples of $q$-symmetric (and anti-symmetric) states below (see also Example \ref{spectrum1} for $N=2$).
Recall, $X = \big \{x_1,x_2,\ldots, x_n\big \},$ $x_1<x_2\ldots < x_n.$ 
\begin{exa}\label{exam2}
The normalized $q$-symmetric states for $N=3$, for all $x_i<x_j<x_k \in X,$ are 
\begin{eqnarray}
&&\hat b^{(0)}_{{3}_i} = x_i \otimes x_i  \otimes x_i, \nonumber\\ 
&&\hat b^{(0)}_{{2}_i {1}_j} = \frac{1}{\sqrt{1+q^2 + q^4}} \big ( x_i\otimes x_i \otimes x_j + q\ x_i \otimes x_j \otimes x_i + q^{2} \ x_j \otimes x_i \otimes x_i\big ),\nonumber\\   
&&\hat b^{(0)}_{{1}_i  {2}_j}  = \frac{1}{\sqrt{1+q^2 + q^4}} \big ( x_i \otimes x_j\otimes x_j + q\ x_j\otimes  x_i \otimes x_j + q^{2} \ x_j \otimes x_j \otimes x_i\big ),\nonumber\\
&& {\hat b}^{(0)}_{{1}_i{1}_j  {1}_k} = \frac{1}{\sqrt{1+2q^{2}+2q^4+q^6}}\big( x_i \otimes x_j \otimes x_k +q( x_j\otimes x_i \otimes x_k+ x_i\otimes x_k\otimes x_j) \nonumber \\ && \qquad  +\ q^{2} (x_j \otimes x_k  \otimes x_i + x_k\otimes x_i\otimes x_j) + q^{3} x_k\otimes x_j \otimes x_i \big). \nonumber
\end{eqnarray}
We compare the simplified notation of the states above with the notation introduced earlier, $\hat b^{(0)}_{k_1 k_2 \ldots k_n},$ $0\leq k_j \leq N,$ $j \in [n]$ and $k_1 + k_2 + \ldots k_n =N.$ For instance for the state $b^{(0)}_{{3}_i}$ this notation means that $k_i = 3,$ and $k_j =0,$ for all $i\neq j \in [n],$ whereas for the state $\hat b^{(0)}_{{1}_i {2}_j}$ it means that $k_i =1,$ $k_j=2$ and $k_l =0$ for all $l\neq i,j \in [n],$ and so on.
We can simply write that last state as $\hat b^{(0)}_{ijk}.$

The normalized $q$-antisymmetric states for $x_i<x_j<x_k \in X,$ $N=3,$ $n>2$ are
\begin{eqnarray}
    &&\hat b^{(-)}_{ijk} = \frac{1}{\sqrt{1+2q^{-2}+2q^{-4}+q^{-6}}}\big( x_i\otimes x_j\otimes x_k - q^{-1} (x_j\otimes x_i\otimes x_k+ x_i\otimes x_k \otimes x_j) \nonumber\\ 
    && \qquad  +\ q^{-2} (x_j\otimes x_k \otimes x_i + x_k\otimes x_i\otimes x_j) - q^{-3} x_k\otimes x_j\otimes x_i \big). \nonumber
    \end{eqnarray}
\end{exa}

We present below explicitly the $q$-symmetric states for $\mathfrak{U}_q(\mathfrak{gl}_2)$ for any $N$, i.e. the qubit $q$-Dicke states \cite{Nepo2}.
\begin{exa}
For $n=2,$  and any $N\in {\mathbb Z}^+$ the normalized $q$-symmetric states are,
\begin{eqnarray}
&& \hat b^{(0)}_{N0}= x_1 \otimes x_1 \ldots \otimes x_1,~~~~\hat b _{0N} = x_2 \otimes x_2 \ldots  \otimes x_2 \nonumber\\
&& \hat b^{(0)}_{1(N-1)} = \frac{1}{\sqrt{[[N]]_q}} \big (x_1 \otimes x_2x_2 \ldots x_2 + q\ x_2\otimes x_1 \otimes x_2 \ldots  x_2 + \ldots +q^{N-1}\ x_2 x_2 \ldots x_2  \otimes x_1\big),  \nonumber\\ 
&& \hat b^{(0)}_{(N-1)1} = \frac{1}{\sqrt{[[N]]_q}}\big (x_1\ldots x_1 \otimes x_2 + q\ x_1 \ldots x_1  \otimes x_2\otimes x_1 + \ldots +q^{N-1}\ x_2 \otimes  x_1 \ldots x_1 \big ),
\nonumber\\
&& \hat{b}^{(0)}_{k_1k_2} = \sqrt{\frac{[[k_1]]_q![[k_2]]_q!}{{[[N]]_q!}}}\big (\underbrace{x_1\ldots x_1}_{k_1} \otimes  \underbrace{x_2 \ldots x_2}_{k_2} +  \ldots +q^{k_1k_2} \underbrace{x_2 \ldots x_2}_{k_2} \otimes \underbrace{x_1 \ldots x_1}_{k_1} \big ), \nonumber\\ && k_1+k_2 =N,\  k_{1,2}>1. \nonumber
\end{eqnarray} 
\end{exa}

\subsection{Orthogonal combinatorial bases and quantum algebra automata}
We established in the previous subsection that all $q$-symmetric states are obtained from the action of the $q$-symmetrizer. We also showed that for any given $n$ and $N$ such states form an orthonormal basis of a vector space of dimension $d_s = \frac{n (n+1) \ldots (n+N-1)}{N!} $, which corresponds to the SSYT of shape $\lambda =(N)$. We show in what follows that this is a combinatorial basis for ${\mathfrak U}_q(\mathfrak{gl}_n)$ that corresponds to its $d_s$ dimensional irreducible representation.
Moreover, we show how these {states} are also obtained from the action
of the elements of ${\mathfrak U}_q(\mathfrak{gl}_n)$ on a reference state {(highest weight)} (see also \cite{Lusztig, Kashi, crystal} and more recent works in connection to $q$-Dicke qudit states \cite{Nepo1, Nepo2}). The isotropic and crystal limits are also discussed.

Before we proceed with the main theorem below we introduce some handy notation. Recall the fundamental representation of $\mathfrak{U}_q(\mathfrak{gl}_n),$ $\pi:\mathfrak{U}_q(\mathfrak{gl}_n) \to \End({\mathbb C}^n),$ given explicitly in (\ref{eval}), we then define the following quantities:
\begin{eqnarray}
&& E_j := \pi^{\otimes N}\Delta(e_j), ~~~F_j:= \pi^{\otimes N}\Delta^{(N)}(f_j), ~~~q^{H_j} := \pi^{\otimes N}\Delta^{(N)}(q^{h_j}), ~~j \in [n-1], \nonumber \\ && q^{{\cal E}_j} := \pi^{\otimes N}\Delta^{(N)}(q^{\varepsilon_j}), ~~j \in [n] ~~~\mbox{and}~~~ q^{H_j} = q^{{\cal E}_j}q^{-{\cal E}_{j+1}}. \label{delta}
\end{eqnarray}

\begin{thm} \label{basicpro} Let $X = \big \{x_1, x_2, \ldots, x_n\big \},$ ${\mathrm Y}_N(q^2)$ be the $q$-symmetrizer as defined in Proposition \ref{shuffle1} and $r \in \End(({\mathbb C}^n)^{\otimes 2})$ is given in (\ref{rr}). Let also
$E_j,F_j$ $j \in [n-1]$ be defined in (\ref{delta}). Then,
\begin{enumerate}
\item \begin{equation}
E_{n-1}^{k'_{n}}\ldots E_2^{k'_3}E_1^{k'_2}\ \underbrace{x_1 x_1 \ldots x_1}_N \propto {\mathrm Y}_N(q^2)\ \underbrace{x_1 \ldots x_1}_{k_1} \otimes\underbrace{x_2\ldots x_2}_{k_2} \ldots \otimes \underbrace{x_{n}  \ldots x_n}_{k_{n}}, \label{central}
\end{equation}
such that $k_1+k_2 +\ldots +k_n =N$ and $k_j' = k_j +k_{j+1} + \ldots + k_n$ and $0\leq k_j \leq N,$ $j \in [n].$
\item \begin{equation}
F_{1}^{k'_{1}}\ldots F_{n-2}^{k'_{n-2}}F_{n-1}^{k'_{n-1}}\ \underbrace{x_n x_n \ldots x_n}_N\propto {\mathrm Y}_N(q^2) \underbrace{x_1 \ldots x_1}_{k_1} \otimes  \underbrace{x_2\ldots x_2}_{k_2} \ldots \otimes \underbrace{x_{n} \ldots x_n}_{k_{n}}, \label{central2}
\end{equation}
such that $k_1+k_2 +\ldots +k_n =N$ and $k_j' = k_j +k_{j-1} + \ldots + k_1$ and $0\leq k_j \leq N,$ $j\in [n]$.
\end{enumerate}
\end{thm}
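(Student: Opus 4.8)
The plan is to prove the proportionality in part (1) in two independent stages: first show that both sides lie in a common one-dimensional subspace, and then verify separately that the left-hand side does not vanish (so that ``$\propto$'' is not vacuously true). Part (2) will follow by the entirely analogous argument, carried out with the lowest-weight reference state $\underbrace{x_n \ldots x_n}_N$ (which is annihilated by every $E_i$, since $\pi(e_i)=e_{i+1,i}$ sends $x_i\mapsto x_{i+1}$ and no $x_i$ with $i<n$ is present) and the monomial $F_1^{k'_1}\cdots F_{n-1}^{k'_{n-1}}$ descending in the reverse direction; no separate ideas are needed there.

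The two structural inputs are the following. First, by Lemma \ref{symmy} each $E_j=\pi^{\otimes N}\Delta^{(N)}(e_j)$ commutes with the $q$-symmetrizer ${\mathrm Y}_N(q^2)$, so it preserves the image $W:=\mathrm{im}\,{\mathrm Y}_N(q^2)$: if $v={\mathrm Y}_N(q^2)u$ then $E_j v={\mathrm Y}_N(q^2)E_j u\in W$. Moreover ${\mathrm Y}_N(q^2)\underbrace{x_1\ldots x_1}_N$ is a scalar multiple of $\underbrace{x_1\ldots x_1}_N$ by Theorem \ref{shuffle1}, so the reference state lies in $W$ and hence the whole left-hand side does. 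Second, by Remark \ref{remsym} the normalized $q$-symmetric states $\hat b^{(0)}_{m_1\ldots m_n}$ form an orthonormal basis of $W$, with exactly one basis vector in each weight $(m_1,\ldots,m_n)$ (the weight recording how many tensor factors equal each $x_k$). Thus $W$ is multiplicity-free: its intersection with each weight space is at most one-dimensional.

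Next I would track the weight. Since $\pi(e_j)=e_{j+1,j}$ sends $x_j\mapsto x_{j+1}$, each $E_j$ shifts the total weight by $-\varepsilon_j+\varepsilon_{j+1}$, independently of the order of tensor factors. Applying $E_{n-1}^{k'_n}\cdots E_1^{k'_2}$ to the weight-$(N,0,\ldots,0)$ reference therefore yields total weight
\[
N\varepsilon_1+\sum_{j=1}^{n-1}k'_{j+1}\bigl(-\varepsilon_j+\varepsilon_{j+1}\bigr),
\]
whose $\varepsilon_1$-coefficient is $N-k'_2=k_1$, whose $\varepsilon_n$-coefficient is $k'_n=k_n$, and whose $\varepsilon_j$-coefficient for $1<j<n$ is $k'_j-k'_{j+1}=k_j$. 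Hence the left-hand side has weight $(k_1,\ldots,k_n)$, the same weight as ${\mathrm Y}_N(q^2)\,\underbrace{x_1\ldots x_1}_{k_1}\otimes\cdots\otimes\underbrace{x_n\ldots x_n}_{k_n}\propto\hat b^{(0)}_{k_1\ldots k_n}$. By multiplicity-freeness the two vectors are proportional.

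The remaining and genuinely delicate point is non-vanishing, which I would settle by an induction on $j$ that reduces each step to a $\mathfrak{U}_q(\mathfrak{sl}_2)$ string count. Set $\Psi_j:=E_j^{k'_{j+1}}\cdots E_1^{k'_2}\,\underbrace{x_1\ldots x_1}_N$ (with $\Psi_0$ the reference), and claim inductively that $\Psi_{j-1}$ is a \emph{nonzero} weight vector of weight $(k_1,\ldots,k_{j-1},k'_j,0,\ldots,0)$. Because $\Psi_{j-1}$ contains no factor $x_{j+1}$, it is killed by $F_j=\pi^{\otimes N}\Delta^{(N)}(f_j)$ (as $\pi(f_j)=e_{j,j+1}$ sends $x_{j+1}\mapsto x_j$), so it is an extreme-weight vector for the $\mathfrak{U}_q(\mathfrak{sl}_2)$ subalgebra generated by $e_j,f_j,q^{\pm h_j/2}$, sitting atop an $e_j$-string whose length equals its $q^{h_j}$-eigenvalue $k'_j$ (the number of its $x_j$-factors). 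For $q=e^\mu$ not a root of unity this subalgebra acts as the irreducible spin-$k'_j/2$ module, on which $E_j^{m}$ is nonzero on the extreme-weight line for every $m\le k'_j$; since $k'_{j+1}\le k'_j$, applying $E_j^{k'_{j+1}}$ keeps the vector nonzero and produces $\Psi_j$ of weight $(k_1,\ldots,k_j,k'_{j+1},0,\ldots,0)$. Taking $j=n-1$ (so $k'_n=k_n$) gives $\Psi_{n-1}\neq0$, completing the argument. The one step requiring care is precisely this $\mathfrak{sl}_2$-string reduction: one must check that the spectator factors $x_1,\ldots,x_{j-1}$ do not interfere, which holds because $q^{\pm h_j/2}$ acts as the identity on every $x_k$ with $k\neq j,j+1$, so $E_j$ genuinely acts as a single $\mathfrak{U}_q(\mathfrak{sl}_2)$ lowering operator on the $\{x_j,x_{j+1}\}$-content alone.
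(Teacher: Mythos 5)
Your proof is correct, but it takes a genuinely different route from the paper's. The paper argues constructively: it computes $E_1\,x_1^{\otimes N}$ explicitly from the coproduct formula (\ref{action2}), matches it term-by-term with ${\mathrm Y}_N(q^2)\,x_1\ldots x_1\otimes x_2$ (obtaining the exact constant $q^{-\frac{N-1}{2}}$), and then iterates, using Lemma \ref{symmy} to pull each $E_j$ through ${\mathrm Y}_N(q^2)$ and Remark \ref{geny1} to reorder the resulting factorized states back to ordered ones; non-vanishing never arises as a separate issue there because each step produces an explicit nonzero multiple of a $q$-symmetric state. You instead argue structurally: $W=\mathrm{im}\,{\mathrm Y}_N(q^2)$ is $E_j$-stable (same Lemma \ref{symmy}) and multiplicity-free as a weight module, the weights match ($N-k'_2=k_1$, $k'_j-k'_{j+1}=k_j$, $k'_n=k_n$ — your bookkeeping is right), so proportionality is forced, and you then settle non-vanishing by the $\mathfrak{sl}_2$-string induction, where $F_j\Psi_{j-1}=0$ holds for exactly the content reason you give and $k'_{j+1}\leq k'_j$ is what makes each step go through. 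Two small accounting points: to know that $W$ is spanned by the states $\hat b^{(0)}_{m_1\ldots m_n}$ you need slightly more than Remark \ref{remsym}, namely Theorem \ref{shuffle1} together with Remark \ref{geny1} (every tensor basis vector is a permutation of an ordered one, and ${\mathrm Y}_N(q^2)$ sends all permutations of a given ordered state to proportional $q$-symmetric vectors); and your string argument quietly invokes complete reducibility of finite-dimensional type-$1$ $\mathfrak{U}_q(\mathfrak{sl}_2)$-modules at generic $q$, plus the classification fact that a vector killed by $F_j$ with $q^{h_j}$-eigenvalue $q^m$, $m\geq 0$, lies in a sum of $(m+1)$-dimensional irreducibles — standard, but not established in the paper. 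As for what each approach buys: the paper's computation also delivers the explicit proportionality constants, which feed directly into Proposition \ref{basiclemma}; your argument is cleaner on the logical point the paper leaves implicit — that the left-hand side is nonzero, so the stated proportionality is not vacuous — and it transfers immediately to any $E_j$-stable multiplicity-free subspace, with part (2) indeed following by the mirror-image argument exactly as you indicate.
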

\begin{proof} 
We first recall,
\begin{equation}
{\mathrm Y}_N(q^2) \ x_1x_1 \ldots x_1 \otimes x_2  \propto  x_1x_1 \ldots x_1 \otimes  x_2 +q \ x_1x_1 \ldots x_1 \otimes x_2 \otimes x_1 + \ldots + q^{N-1}\ x_2 \otimes x_1 \ldots x_1. \label{yy}
\end{equation}
By recalling the $N$-coproduct of $e_j, f_j$ the fundamental representation (\ref{eval}) and the notation in (\ref{delta})
we obtain (recall Remark \ref{rem1})
\begin{equation}
E_j = \sum_{k=1}^N q^{-\frac{s_j}{2}} \otimes \ldots \otimes q^{-\frac{s_j}{2}}\otimes \underbrace{e_{j+1,j}}_{k\ position} 
\otimes q^{\frac{s_j}{2}}\otimes \ldots \otimes q^{\frac{s_j}{2}}, ~~~s_j := e_{j,j} - e_{j+1, j+1}. \label{action2}
\end{equation}

We give the outline of the proof only for part (1), part (2) is shown in a analogous way. We first observe that, 
\begin{equation}
E_1\ x_1x_1\ldots x_1 = q^{-\frac{N-1}{2}} \big ( x_1\ldots x_1 \otimes x_2 + q\ x_1 \ldots x_1 \otimes  x_2 \otimes x_1 + \ldots + q^{N-1}\ x_2\otimes x_1 \ldots x_1\big ) \label{ee}
\end{equation}
From (\ref{yy})-(\ref{ee}) we conclude $~E_1\ x_1x_1 \ldots x_1 \propto {\mathrm Y}_{N}(q^2)\ x_1 \ldots x_1 \otimes x_2.$
Similarly, via Lemma \ref{symmy} 
\begin{eqnarray}
    && E_1^{2}\ x_1x_1\ldots x_1 \propto E_1 {\mathrm Y}_{N}(q^2)\ x_1\ldots x_1 \otimes x_2  =
    {\mathrm Y}_N(q^{2})E_1\ x_1\ldots x_1 \otimes x_2 \propto \nonumber\\ 
    && {\mathrm Y}_N(q^{2})\big(x_1\ldots x_1x_2 \otimes x_2 +q\ x_1\ldots x_1x_2x_1 \otimes x_2 + q^{N-2}\ x_2x_1\ldots x_1  \otimes x_2\big)\Rightarrow ~~~~~~~\mbox{(by Remark \ref{geny1})}\nonumber\\
&& E_1^{2}\ x_1x_1\ldots x_1 \propto {\mathrm Y}_{N}(q^2)\ x_1 \ldots x_1 \otimes x_2 \otimes  x_2. \nonumber 
\end{eqnarray}
Similarly, by iteration and use of Remark \ref{geny1}, Lemma \ref{symmy} and (\ref{action2}) we arrive at
\begin{equation}
E_1^{k_2}\ x_1 x_1 \ldots x_1 \propto Y_N(q^2) \underbrace{x_1 \ldots x_1}_{k_1} \otimes \underbrace{x_2 \ldots x_2}_{k_2}.
\end{equation}
We repeat the same process as above for $E_2,$ then for $E_3,$ and so on and arrive at (\ref{central}).
Expression (\ref{central2}) is also shown in a similar manner.
\end{proof}
From Theorem \ref{basicpro} we conclude that all states ${\hat b}^{(0)}_{k_{1}k_{2}\ldots k_{n}}$ defined in (\ref{remsym1}) are obtained from the action of elements of ${\mathfrak U}_q({\mathfrak{gl}_n}),$ see also Proposition \ref{basiclemma} below. 

\begin{pro} \label{basiclemma} Let $X = \big \{x_1, x_2, \ldots, x_n\big \},$
$E_j,F_j, q^{H_j}$ $j \in [n-1]$ and $ q^{{\cal E}_j},$ $j \in [n]$ be defined in (\ref{delta}) and  ${\hat b}^{(0)}_{k_{1}k_{2}\ldots k_{n}}$ be defined in Remark \ref{remsym}, equation (\ref{remsym1}),  $0\leq k_{i} \leq N,$ $i \in [n],$ and  $k_1 + k_2 + \ldots +k_n =N.$ Then,
\
\begin{eqnarray}
&& E_j {\hat b}^{(0)}_{k_1\ldots k_j k_{j+1} \ldots k_n} = c_{k_j, k_{j+1}} {\hat b}^{(0)}_{k_1 \ldots (k_{j}-1) (k_{j+1} +1)\ldots k_n}, \nonumber\\ 
&& F_j {\hat b}^{(0)}_{k_1\ldots (k_{j}-1) (k_{j+1}+1)\ldots k_n} = c_{k_j, k_{j+1}} {\hat b}^{(0)}_{k_1 \ldots k_{j} k_{j+1}\ldots k_n}, \nonumber\\ 
&& q^{{\cal E}_j}{\hat b}^{(0)}_{k_1\ldots k_n} =q^{k_j}{\hat b}^{(0)}_{k_1\ldots k_n}, ~~~q^{H_j}{\hat b}^{(0)}_{k_1\ldots k_n} = q^{k_j - k_{j+1}}{\hat b}^{(0)}_{k_1\ldots k_n}  \nonumber \end{eqnarray}
where $c_{k_j, k_{j+1}}= \sqrt{[k_{j+1}+1]_q[k_j]_q},$ $[k]_q = \frac{q^{k} - q^{-k}}{q-q^{-1}}.$

Moreover, 
\begin{eqnarray}
    && E_{j+1}E_{j}\hat b^{(0)}_{k_1 \ldots k_j k_{j+1} \ldots k_n} = c  E_{j}E_{j+1}\hat b^{(0)}_{k_1 \ldots k_j k_{j+1}\ldots k_n}\nonumber \\
    && F_{j+1}F_{j}\hat b^{(0)}_{k_1 \ldots (k_j-1) (k_{j+1} +1) \ldots k_n} = d  F_{j}F_{j+1}\hat b^{(0)}_{k_1 \ldots (k_j-1) (k_{j+1}+1)  \ldots k_n}, \label{expp}
    \end{eqnarray}
    where $c= \frac{[k_{j+1}+1]_q}{[k_{j+1}]_q},$ $d=\frac{[k_{j+1}+1]_q}{[k_{j+1}+2]_q}.$ 
\end{pro}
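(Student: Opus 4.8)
The plan is to treat the diagonal generators first, then reduce the action of $E_j$ and $F_j$ to a single two-colour (essentially $\mathfrak{U}_q(\mathfrak{sl}_2)$) computation, and finally deduce the commutation-type relations (\ref{expp}) by composing the first two formulas. For the diagonal part, note that $q^{\varepsilon_j}$ is group-like, so $\Delta^{(N)}(q^{\varepsilon_j}) = q^{\varepsilon_j}\otimes\cdots\otimes q^{\varepsilon_j}$, and $\pi(q^{\varepsilon_j}) = q^{e_{j,j}}$ is the diagonal matrix with the single entry $q$ in slot $j$. Hence $q^{\mathcal E_j}$ multiplies any monomial $x_{l_1}\otimes\cdots\otimes x_{l_N}$ by $q$ raised to the number of factors equal to $x_j$. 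Every monomial occurring in $\hat b^{(0)}_{k_1\ldots k_n}$ is a permutation of the same multiset and therefore contains exactly $k_j$ copies of $x_j$; thus $q^{\mathcal E_j}\hat b^{(0)}_{k_1\ldots k_n} = q^{k_j}\hat b^{(0)}_{k_1\ldots k_n}$, and $q^{H_j} = q^{\mathcal E_j}q^{-\mathcal E_{j+1}}$ yields the eigenvalue $q^{k_j-k_{j+1}}$.

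For $E_j$, I would use that $E_j$ commutes with $\mathrm Y_N(q^2)$ (Lemma \ref{symmy}) to move the computation onto the ordered reference state underlying $b^{(0)}_{k_1\ldots k_n}$. Since $e_{j+1,j}$ sends $x_j\mapsto x_{j+1}$ and annihilates every other $x_l$, and $s_j = e_{j,j}-e_{j+1,j+1}$ acts nontrivially only on colours $j,j+1$, the spectator blocks contribute no $q$-powers and (\ref{action2}) reduces to a sum over the $k_j$ positions carrying $x_j$, each weighted by an explicit power of $q$ from the diagonal dressing. Re-applying $\mathrm Y_N(q^2)$ and using the proportionality (\ref{geny})--(\ref{eq1}), where each inversion relative to the ordered arrangement costs a factor $q$, collapses all $k_j$ terms onto the single symmetric state labelled $(k_j-1,k_{j+1}+1)$. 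The overall scalar is then read off after expressing the normalizations through the elementary identity $[[m]]_q = q^{m-1}[m]_q$ and the norm formula in (\ref{remsym1}); all fractional $q$-powers cancel and the coefficient becomes exactly $c_{k_j,k_{j+1}}=\sqrt{[k_{j+1}+1]_q[k_j]_q}$.

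For $F_j$ I would avoid repeating the computation by noting that $F_j = E_j^{\dagger}$: in the fundamental representation $\pi(f_j)=\pi(e_j)^{T}$, while the group-like factors dressing $\Delta^{(N)}(e_j)$ and $\Delta^{(N)}(f_j)$ coincide and are self-adjoint (as $q=e^{\mu}$ is real). Because the $\hat b^{(0)}$ form a real orthonormal basis and $E_j$ raises the pair label $(k_j,k_{j+1})$ to $(k_j-1,k_{j+1}+1)$ with the single coefficient above, reciprocity of matrix elements forces $F_j$ to act with the same coefficient in the reverse direction. Finally, the relations (\ref{expp}) follow by composing the $E_j,F_j$ formulas in both orders: the square-root factors recombine, the target labels agree, and the ratio of the two orderings is precisely $c=[k_{j+1}+1]_q/[k_{j+1}]_q$ for the $E$'s and $d=[k_{j+1}+1]_q/[k_{j+1}+2]_q$ for the $F$'s.

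The main obstacle is the bookkeeping in the $E_j$ step: three independent sources of $q$-powers — the diagonal dressing inside (\ref{action2}), the inversion count produced when re-symmetrizing via (\ref{geny}), and the ratio of normalization factorials — must be tracked simultaneously, and the clean result depends on their exact cancellation. The conceptual simplification that makes this tractable is the observation that the spectator colours decouple, so that the whole computation is effectively the qubit $q$-Dicke (two-colour) case embedded in slots $j,j+1$.
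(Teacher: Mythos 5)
Your proof is correct, but it follows a genuinely different route from the paper's. The paper proves the $E_j,F_j$ formulas by brute force: it expands ${\hat b}^{(0)}_{k_1k_2\ldots k_n}$ explicitly as a sum of weighted monomials, applies $E_1$ term by term via (\ref{copba}), (\ref{copbb}), (\ref{eval}), (\ref{delta}), and reads off the overall constant from the coefficient of a single leading monomial (the first $k_2$ terms suffice), then repeats the same direct computation for $F_1$. You instead commute $E_j$ through ${\mathrm Y}_N(q^2)$ (Lemma \ref{symmy}) to act on the ordered reference state, re-symmetrize via Remark \ref{geny1}, and balance the three $q$-power sources (the diagonal dressing in (\ref{action2}), the inversion count, and the normalization ratio through $[[m]]_q=q^{m-1}[m]_q$) --- this is essentially the technique the paper uses to prove Theorem \ref{basicpro}, imported here, and your claimed cancellations do check out: the net coefficient is $q^{\frac{k_j-k_{j+1}-1}{2}}[k_j]_q\sqrt{[[k_{j+1}+1]]_q/[[k_j]]_q}=\sqrt{[k_j]_q[k_{j+1}+1]_q}$ exactly. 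Your $F_j$ argument via $F_j=E_j^{T}$ is also valid and slicker than the paper's (the diagonal dressings of $\Delta^{(N)}(e_j)$ and $\Delta^{(N)}(f_j)$ agree and are symmetric, and $e_{j,j+1}=e_{j+1,j}^{T}$); the only point worth making explicit is that reciprocity of matrix elements pins down $F_j{\hat b}^{(0)}_{k_1\ldots(k_j-1)(k_{j+1}+1)\ldots k_n}$ only once you know it lies in the span of the single state ${\hat b}^{(0)}_{k_1\ldots k_j k_{j+1}\ldots k_n}$, which follows from the same symmetrizer-commutation argument you already use for $E_j$ (every monomial in $F_j$ applied to the ordered state re-symmetrizes, by (\ref{geny}), to a multiple of the one symmetric state of that content). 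The diagonal eigenvalue computation and the derivation of (\ref{expp}) by composing the two orderings coincide with the paper's; your composition indeed gives the ratios $c=[k_{j+1}+1]_q/[k_{j+1}]_q$ and $d=[k_{j+1}+1]_q/[k_{j+1}+2]_q$. What the paper's approach buys is a self-contained computation needing no auxiliary structure; what yours buys is far less term-by-term bookkeeping, a free $F_j$ formula, and a clean identification of why only the colours $j,j+1$ matter (the spectator decoupling reducing everything to the two-colour $q$-Dicke case).
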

\begin{proof}
We focus, without loss of generality in the proof, on the action of $E_1$ on $\hat b^{(0)}_{k_1 k_2\ldots k_n},$ which leads to the final state $\hat b^{(0)}_{(k_1-1)(k_2+1) \ldots k_n}.$
We could have focused in a similar fashion to the action of $E_j$ on  $\hat b^{(0)}_{k_1 \ldots  k_j k_{j+1}\ldots k_n},$ which leads to the final state 
$\hat b^{(0)}_{k_1k_2 \ldots (k_j-1)(k_{j+1} +1) \ldots k_n}$
Indeed, via (\ref{copba}), (\ref{copbb}), (\ref{eval}) and (\ref{delta}):
\begin{eqnarray}
 E_1 \hat b^{(0)}_{k_1 k_2 \ldots k_n} &=&
\sqrt{\frac{[[k_1]]_q! [[k_2]]_q! \ldots [[k_n]]_q!}{[[N]]_q!}} E_1\big (\underbrace{x_1 \ldots x_1}_{k_1} \otimes \underbrace{x_2 \ldots x_2}_{k_2} \otimes\ldots\nonumber\\ &+ &
q\underbrace{x_1 \ldots x_1}_{k_1-1} \otimes x_2 \otimes x_1 \otimes \underbrace{x_2 \ldots x_2}_{k_2-1} \otimes\ldots + \ldots  +
q^{k_2}\underbrace{x_1 \ldots x_1}_{k_1-1} \otimes \underbrace{x_2 \ldots x_2}_{k_2}\otimes x_1 \otimes \ldots + \ldots \big) \nonumber \\
&=& \sqrt{\frac{[[k_1]]_q! [[k_2]]_q! \ldots [[k_n]]_q!}{[[N]]_q!}} q^{-\frac{k_1+k_2-1}{2}} [[k_2+1]]_q\big (\underbrace{x_1 \ldots x_1}_{k_1-1} \otimes \underbrace{x_2 \ldots x_2}_{k_2+1} \otimes\ldots\nonumber\\ &+ &
q\underbrace{x_1 \ldots x_1}_{k_1-2} \otimes x_2 \otimes x_1 \otimes \underbrace{x_2 \ldots x_2}_{k_2} \otimes\ldots + \ldots \big ) \nonumber\\
&=& \sqrt{[k_2+1]_q[k_1]_q}\ \hat b^{(0)}_{(k_1-1) (k_2+1) \ldots k_n}.
\end{eqnarray}
The first $k_2$ terms in the first line of the expression above are sufficient to provide the overall factor in front of the term $\underbrace{x_1 \ldots x_1}_{k_1-1} \otimes \underbrace{x_2 \ldots x_2}_{k_2+1} \otimes\ldots$ in the final state $\hat b^{(0)}_{(k_1-1)( k_2+1) \ldots k_n},$ and thus the overall factor $c_{k_1,k_2} = \sqrt{[k_2+1]_q[k_1]_q}.$ 
In the same way, we show that $F_1 \hat b^{(0)}_{(k_1-1) (k_2+1) \ldots k_m } = c_{k_1,k_2} \hat b^{(0)}_{k_1 k_2 \ldots k_n}.$

The action of any $E_j$ on $\hat b^{(0)}_{k_1 k_2\ldots k_n}$ can be worked out in exactly the same manner using the same arguments by just focusing on the segment $k_j k_{j+1}$ in $\hat b^{(0)}_{k_1\ldots k_j k_{j+1} \ldots k_n}$. Moreover, we have focused here on the term $\underbrace{x_1 \ldots x_1}_{k_1-1} \otimes \underbrace{x_2 \ldots x_2}_{k_2+1} \otimes\ldots$ in the final state $\hat b^{(0)}_{(k_1-1) (k_2+1) \ldots k_n}$ in order to extract the overall factor $c_{k_1,k_{2}}.$ Equivalently, we could have focused on any other term on the final state $\hat b^{(0)}_{(k_1-1) (k_2+1) \ldots k_n}$ using again similar arguments.

Also it is straightforward to see by (\ref{copba}), (\ref{copbb}), (\ref{eval}) and (\ref{delta}) 
that $ q^{{\cal E}_j}{\hat b}^{(0)}_{k_1\ldots k_n} =q^{k_j}{\hat b}^{(0)}_{k_1\ldots k_n}$ and consequently, $q^{H_j}{\hat b}^{(0)}_{k_1\ldots k_n} = q^{k_j - k_{j+1}}{\hat b}^{(0)}_{k_1\ldots k_n}.$

Finally, expressions (\ref{expp}) follow from the action of $E_j, F_j$ on the $q$-symmetric states.
\end{proof}
Theorem \ref{basicpro} and Proposition \ref{basiclemma} state that the set of $q$-symmetric states form an orthogonal basis for the $d_s =\frac{1}{N!}\overset{N+n-1}{\underset{k =n}{\prod}k}$ dimensional irreducible representation of $\mathfrak{U}_q(\mathfrak{gl}_n),$ represented by the SSYT of shape $\lambda =(N).$

\begin{defn} \label{auto11b} {\bf (The symmetric ${\mathfrak U}_q(\mathfrak{gl}_n)$ automaton).}   
\begin{enumerate}
\item Let the set of states be ${\mathbb B}_s = \big \{\hat b^{(0)}_{k_1 k_2\ldots k_n}\big \},$ that is the set of $q$-symmetric states defined in Remark \ref{remsym}. 

\item Let the alphabet be $\Sigma =\big \{e_j, f_j, q^{h_j}\big \},$ $j \in [n-1].$ The respective transition matrices are $E_j, F_j, q^{H_j}: {\mathbb B}_s \to {\mathbb B}_s \cup \hat 0$ and their action on the states are given in Proposition \ref{basiclemma}.
\end{enumerate}
This automaton is called the symmetric ${\mathfrak U}_q(\mathfrak{gl}_n).$ 
\end{defn}
This is a semi-combinatorial automaton, due to the action of transition matrices on the automaton states.
\begin{exa}
(The symmetric $\mathfrak{U}_q(\mathfrak{gl}_2)$ automaton.) In this case the set of states is ${\mathbb B}_s = \big \{\hat b^{(0)}_{k_1k_2}\big \},$  ($0\leq k_{1,2} \leq N$ and $k_1+k_2 =N$),\ $\Sigma = \big \{e,f, q^{h}\big \}$ and the respective transition matrices are given by their action,
\begin{eqnarray}
&& E {\hat b}^{(0)}_{k_1k_2} = c_{k_2} {\hat b}^{(0)}_{(k_1-1) (k_2+1)}, \nonumber\\ 
&& F {\hat b}^{(0)}_{(k_1-1)(k_2+1)} = c_{k_2} {\hat b}^{(0)}_{k_1 k_2}, \nonumber\\ && q^{H}{\hat b}^{(0)}_{k_1 k_2} = a_{k_2}{\hat b}^{(0)}_{k_1k_2},  \nonumber 
\end{eqnarray}
where $c_{k_2} = \sqrt{[k_2+1]_q[N - k_2]_q},$ $0\leq k_2 \leq N-1$ and  $a_{k_2}= q^{N -2k_2},$ $0\leq k_2\leq N.$  This is the $N+1$ dimensional irreducible representation of $\mathfrak{U}_q(\mathfrak{gl}_2).$ We graphically depict the automaton, with $\hat b^{(0)}_{N0}$ being the start state in Figure 6 (see also example \ref{exafirst}, and recall that zero transitions are omitted from the diagram):
\begin{center}
\begin{figure}[ht]
\tiny
\begin{tikzpicture}[shorten >=1pt,node distance=2.5cm, on grid,auto] 
\node[state, inner sep=0, minimum size=2.3em, initial] (q1) {${\hat b^{(0)}_{N0}}$};
\node[state,inner sep=0, minimum size=2.3em, right of=q1] (q2) {${\hat b^{(0)}_{(N-1) 1}}$};
\node[inner sep=0, minimum size=2.3em, right of=q2] (q3) {$\ldots$};
\node[state,inner sep=0, minimum size=2.3em, right of=q3] (q3b) {$\hat b^{(0)}_{0N}$};
\path[->] 
(q1) edge[bend left,  above] node{$e;c_0$} (q2)
(q2) edge[bend left, below] node{$f;c_0$} (q1)
(q2) edge[bend left, above] node{$e;c_1$} (q3)
(q3) edge[bend left, below] node{$f;c_{1}$} (q2)
(q3) edge[bend left, above] node{$e;c_{N-1}$} (q3b)
(q3b) edge[bend left, below] node{$f; c_{N-1}$} (q3)
(q3b) edge[loop above] node{$q^{h}; a_N$} (q3b)
(q2) edge[loop above] node{$q^{h}; a_1$} (q2)
(q1) edge[loop above] node{$q^{h}; a_0$} (q1);
\end{tikzpicture}
\caption{The symmetric $\mathfrak{U}_q(\mathfrak{gl}_2)$ automaton} \label{symm1b}
\end{figure}
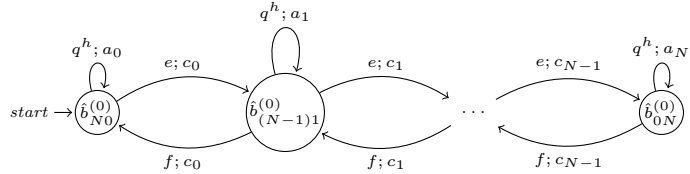
\end{center}
\end{exa}
{\bf Two important limits.}
\begin{enumerate}
\item Isotropic limit ($q=1$). In this case ${\mathfrak U}_q(\mathfrak{gl}_n) \to{\mathfrak U}(\mathfrak{gl}_n),$ and the coproducts become
\begin{equation}
\Delta(x) = 1 \otimes x +x \otimes 1, ~~~~x \in \mathfrak{gl}_n.
\end{equation}
Moreover, relations of Proposition \ref{basiclemma} still hold, but $[k]_q \to k$ and $[[k]]_q
\to k.$ Also, all states $b^{(0)}_{k_1 k_2 \ldots k_n}$ are fully symmetric states, i.e. are given by the sum of all possible permutations of\\ $\underbrace{x_1\ldots x_1}_{k_1}\underbrace{x_2\ldots x_2}_{k_2}\ldots \underbrace{x_n\ldots x_n}_{k_n}.$ The normalized symmetric states are $\hat b^{(0)}_{k_1 \ldots k_n}=\sqrt{\frac{k_1! \ldots k_n!}{N!}} b^{(0)}_{k_1 \ldots k_n}$ (see also Remark \ref{remsym}). 

\item Crystal limit ($q=0$): In this case every $q$-symmetric state reduces to an ordered factorized state, i.e. $\hat b^{(0)}_{k_1\ldots k_n}\to \underbrace{x_1\ldots x_1}_{k_1} \otimes x_2 \ldots x_2\ldots \otimes \underbrace{x_n\ldots x_n}_{k_n} =: \hat b^{(c)}_{k_1\ldots k_n}$.
\end{enumerate}
We also define $N_j := k_j + k_{j+1}$ and 
\begin{eqnarray}
    \underset{q\to 0}{\mbox{lim}}\ 
q^{\frac{N_j-1}{2}}E_{j} \hat b^{(0)}_{k_1\ldots k_j k_{j+1}\ldots k_n} = : \tilde  e_j \hat b^{(c)}_{k_1\ldots k_j k_{j+1}\ldots k_n}, \nonumber \\
 \underset{q\to 0}{\mbox{lim}}\ 
q^{\frac{N_j-1}{2}}F_{j} \hat b^{(0)}_{k_1\ldots k_j k_{j+1}\ldots k_n} = : \tilde f_j \hat b^{(c)}_{k_1\ldots k_j k_{j+1}\ldots k_n} \nonumber
\end{eqnarray}
Then,
\begin{equation}
\tilde  e_j\ \hat b^{(c)}_{k_1\ldots k_j k_{j+1}\ldots k_n} = \hat b^{(c)}_{k_1\ldots (k_j-1) (k_{j+1}+1)\ldots k_n}, \qquad 
\tilde f_j\ \hat b^{(c)}_{k_1\ldots (k_j-1) (k_{j+1}+1)\ldots k_n} = \hat b^{(c)}_{k_1\ldots k_j k_{j+1}\ldots k_n}. \label{comb2}
\end{equation}
The automaton with set of states ${\mathbb B}_c = \big \{\hat b^{(c)}_{k_1 \ldots k_n}\big \},$  $0\leq k_1, \ldots, k_n \leq N,$ $k_1 +k_2 + \ldots k_n =N,$ alphabet 
$\Sigma= \big \{{\mathfrak e}_j,  {\mathfrak f}_j\big \},$ $j\in [n-1],$ and transition matrices $\tilde e_j, \tilde f_j: {\mathbb B}_c \to {\mathbb B}_c \cup \hat 0$ (\ref{comb2}) is a combinatorial automaton, which we call the $A_{n-1}$ symmetric crystal automaton.
We provide a couple of examples after this remark. 

Recall also the completely antisymmetric states ($n\geq N$), $ \hat b^{(-)}_{i_1i_2 \ldots i_N} : = \frac{Y_N(-1) x_{i_1} \otimes x_{i_2} \otimes  \ldots\otimes x_{i_N}}{\sqrt{[[N]]_{q^{-1}}!}}$ ($i_j \in [n],$ $j \in [N]$ and $x_{i_1}<x_{i_2} \ldots < x_{i_N}$), then $\underset{q\to 0}{\mbox{lim}} = x_{i_N} \otimes x_{i_{N-1}} \otimes \ldots \otimes x_{i_1}.$ These states form the basis of a $d_{a}=\frac{n (n-1) (n-2) \ldots (n-N+1)}{N!}$ dimensional vector represented by the SSYT of shape $\lambda = (1,1, \ldots ,1).$ \hfill $\square$

\begin{exa}
We give below a couple of concrete examples of symmetric crystal automata. The zero transitions are omitted in the automaton diagrams.
\begin{enumerate}
    \item The $A_1$ symmetric crystal automaton for generic $N$ is depicted below, (we set 
    ${\mathfrak e}_1 =: {\mathfrak e},$ $\mathfrak{f}_1 =: {\mathfrak f}$ and $\tilde e_1 =:  \tilde e,$ $\tilde f_1 =:\tilde f$).
\begin{center}
\begin{figure}[ht]
\tiny
\begin{tikzpicture}[shorten >=1pt,node distance=2.7cm, on grid,auto] 
\node[state, inner sep=0, minimum size=2.3em, initial] (q1) {$x_1\ldots x_1$};
\node[state, inner sep=0, minimum size=2.3em, right of=q1] (q2) {$x_1\ldots x_1x_2$};
\node[inner sep=0, minimum size=2.3em, right of=q2] (q3) {$\ldots$};
\node[state,inner sep=0, minimum size=2.3em, right of=q3] (q5) {$x_2\ldots x_2$};
\path[->] 
(q1) edge[bend left,  above] node{$\mathfrak e$} (q2)
(q2) edge[bend left, below] node{$\mathfrak f$} (q1)
(q2) edge[bend left, above] node{$\mathfrak e$} (q3)
(q3) edge[bend left, below] node{$\mathfrak f$} (q2)
(q3) edge[bend left, above] node{$\mathfrak e$} (q5)
(q5) edge[bend left, below] node{$\mathfrak f$} (q3);
\end{tikzpicture}
\end{figure}
\end{center}

\item As a second example we consider the $A_2$ symmetric crystal automaton for $N=2$ as follows.The set of states consists of six elements ${\mathbb B} = \big \{x_1\otimes x_1, x_2 \otimes x_2, x_3 \otimes x_3, x_1 \otimes x_2, x_1 \otimes x_3, x_2 \otimes x_3\big \},$ and we consider the alphabet $\Sigma =\big \{{\mathfrak e}_j, {\mathfrak f}_j, \big \},$ $j\in [2]$ and transition matrices $\tilde e_j, \tilde f_j$ (\ref{comb2}). The automaton is graphically depicted below:
\begin{center}
\tiny
\begin{tikzpicture}[shorten >=1pt,node distance=2.cm,on grid,auto] 
   \node[state,inner sep=0, minimum size=2.0em, initial] (q_0)   {$x_1x_1$}; 
   \node[state, inner sep=0, minimum size=2.0em] (q_1) [below=of q_0] {$x_1x_2$};    
   \node[state, inner sep=0, minimum size=2.0em] (q_2) [below left=of q_1] {$x_2x_2$}; 
   \node[state, inner sep=0, minimum size=2.0em] (q_3) [below right=of q_1] {$x_1 x_3$}; 
    \node[state,inner sep=0, minimum size=2.0em](q_4) [below right=of q_2] {$x_2x_3$};
     \node[state,inner sep=0, minimum size=2.0em](q_5) [below=of q_4] {$x_3x_3$};    \path[->] 
 (q_0) edge  node {$\mathfrak e_1$} (q_1)
 (q_1) edge[left]  node {$\mathfrak e_1$} (q_2)
 (q_1) edge[] node {$\mathfrak e_2$} (q_3) 
(q_2) edge[left]  node {$\mathfrak e_2$} (q_4)
(q_3) edge[]  node {$\mathfrak e_1$} (q_4) 
(q_4) edge[]  node {$\mathfrak e_2$} (q_5) ;
\end{tikzpicture}
\end{center}
We omitted the ${\mathfrak f}_{j},$ $j \in [2]$ transitions in the automaton diagram for brevity. If we had included these transitions
 in the automaton diagram we would have drawn an opposite arrow 
next to each arrow that represents ${\mathfrak e}_{j}.$
\end{enumerate}
\end{exa}

\section{Eigenstates of integrable Hamiltonians and orthogonal bases} 
\noindent We have already shown in the previous section that the $d_s$ dimensional irreducible representation of $\mathfrak{U}_q(\mathfrak{gl}_n)$ consisting of all $q$-symmetric states corresponds to the SSYT of shape $\lambda=(N)$ (see Remark \ref{remsym} and Proposition \ref{basiclemma}). We also pointed out that the fully $q$-antisymmetric states ($n\geq N$) corresponds to a $d_a$ dimensional vector space represented by the SSYT of shape $\lambda = (1,1, \ldots,1)$ see Remark \ref{remsym}. The main question raised now is how can we systematically construct the bases for $\mathfrak{U}_q(\mathfrak{gl}_n)$ associated with any $\lambda$-SSYT? We argue in this section that this can be achieved by deriving the eigenstates of a finite spin chain Hamiltonian that is $\mathfrak{U}_q(\mathfrak{gl}_n)$ invariant. We prove this claim explicitly for the $\mathfrak{U}_q(\mathfrak{gl}_2)$ case. The open Hamiltonian  we are considering is nothing but the sum of all length-one words of the Hecke algebra ${\cal H}_N(q)$ given by,
\begin{equation}
{\cal H} = \sum_{j=1}^{N-1} r_j \in \EEnd(V_n^{\otimes N}), \label{ham0} 
\end{equation}
where $r_j$ are defined in (\ref{rr}), (\ref{rr2}), $r$ is the $\mathfrak{U}_q(\mathfrak{gl}_n)$ invariant braid operator. We recall that throughout this manuscript we consider $q := e^{\mu} \in {\mathbb R}$ and $V_n$ is either ${\mathbb C}^n$ or ${\mathbb R}^n.$

The open Hamiltonian (\ref{ham0}) describes a well known integrable system (see also \cite{Sklyanin, Pasquier, Kulish, MeNe, DoiNep, Doikous}) that has been extensively studied and the spectrum and eigenstates are known and are expressed in terms of Bethe roots \cite{MeNe, DoiNep}. We also {recall the main conjecture} \ref{conj1}, which states that sums of words of different lengths commute with each other and also all the sums of words of length $1\leq l \leq \frac{N(N-1)}{2}$ are central to ${\mathfrak U}_q(\mathfrak{gl}_n)$ (see Lemma \ref{symmy}). We focus on the spectrum and eigenstates of the Hamiltonian (\ref{ham0}) and hence obtain finite irreducible representations of  ${\mathfrak U}_q(\mathfrak{gl}_n)$ that correspond to $\lambda$-SSYT. We do not use in the present investigation the Bethe ansatz formulation, however a comparison with the results known from Bethe ansatz techniques \cite{MeNe, DoiNep} and a study of the combinatorial nature of the Bethe roots is an important task, which would provide significant information on the connection between Bethe ansatz equations, representation theory and combinatorics. To date the majority of relevant studies are restricted to periodic spin chain Hamiltonians, basically in the thermodynamic limit, where the periodic Hamiltonians recover the quantum group symmetry (see for instance \cite{KirRes1, KirRes2, Kuniba0, Kuniba1, Kuniba2}). However, it is more reasonable to study the eigenvalue problem of the special open spin chain Hamiltonian (\ref{ham0}), due to its exact ${\mathfrak U}_q(\mathfrak{gl}_n)$ symmetry for any size $N.$

{The primary results of this section are established and summarized in Propositions \ref{sypro} and \ref{systema}, and Theorem \ref{thetheorem}. These formulations address the spectral decomposition of the Hamiltonian (\ref{ham0}) and outline the systematic construction of combinatorial bases for irreducible representations of \(\mathfrak{U}_q(\mathfrak{gl}_2)\) utilizing degenerate eigenstates of (\ref{ham0}). This construction is directly enabled by the \(\mathfrak{U}_q(\mathfrak{gl}_2)\)-invariance of the Hamiltonian. While the core conclusions are summarized in these main statements, their proofs, especially the ones of Proposition \ref{systema} and Theorem \ref{thetheorem} —though highly technical and algebraic—are provided in full for completeness, and are available for the interested reader. Crucially, these proofs establish a novel explicit connection between pure representation theory of quantum algebras and the spectral decomposition of real symmetric matrices, such as the Hamiltonian (\ref{ham0}). Concrete examples are also presented in the end of this section.}

Before we proceed with our main statements regarding the eigenvalue problem of the Hamiltonian (\ref{ham0}) we note that the $r$-matrix (\ref{rr}) is a real symmetric matrix,  i.e. $r^T = r$ ($^T$ denotes total transposition; $r$ is real given that $q$ is real). It then follows that the Hamiltonian is also a real symmetric matrix. The eigenvalues of any real symmetric matrix are all real ones and the eigenstates are real vectors.
\begin{pro} \label{sypro}
Let ${\hat b}^{(0)}_{k_{1}k_{2}\ldots k_{n}}$ be the normalized $q$-symmetric states defined in Remark \ref{remsym}, equation (\ref{remsym1}), $0\leq k_{j}\leq N,$ $j \in [n]$ and $k_1 + k_2 + \ldots +k_n =N,$ and 
${\cal H} = \underset{1\leq j\leq N-1}{\sum} r_j,$
where $r\in \End({\mathbb R}^{n}\otimes {\mathbb R}^n)$ 
is the ${\mathfrak U}_q(\mathfrak{gl}_n)$ invariant solution of the braid equation (\ref{rr}).
Then the states ${\hat b}^{(0)}_{k_{1}k_{2}\ldots k_{n}}$ are eigenstates of ${\cal H}$ with eigenvalue $\Lambda_0 = N-1.$
\end{pro}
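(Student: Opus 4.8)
The plan is to show that every summand $r_j$ fixes the $q$-symmetric states, so that $\mathcal H$ acts on them as multiplication by $N-1$, and then to bound the whole spectrum of $\mathcal H$ from above by $N-1$ using a positive-semidefiniteness argument. The highest-eigenvalue claim then follows because the bound is attained.

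First I would record the local spectral data of $r$. From the quadratic relation $r_i^2=(1-q^{-2})r_i+q^{-2}$, the only possible eigenvalues of each $r_j$ are the roots of $\lambda^2-(1-q^{-2})\lambda-q^{-2}=0$, namely $\lambda=1$ and $\lambda=-q^{-2}$; and since $r^T=r$ (hence $r_j^T=r_j$), each $r_j$ is a real symmetric, and therefore diagonalizable, matrix with these two eigenvalues. On a two-site factor spanned by $\{x_i\otimes x_j,\ x_j\otimes x_i\}$ with $x_i<x_j$, a direct computation from the action (\ref{matrix0}) gives $r(x_i\otimes x_j+q\,x_j\otimes x_i)=x_i\otimes x_j+q\,x_j\otimes x_i$, so the $+1$ eigenline is exactly the $q$-symmetric vector; this is the content of (\ref{eq1}) in Remark \ref{geny1}, while on a diagonal pair one has $r(x_i\otimes x_i)=x_i\otimes x_i$.

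Next I would prove $r_j\,\hat b^{(0)}_{k_1\ldots k_n}=\hat b^{(0)}_{k_1\ldots k_n}$ for every $j\in[N-1]$. By Theorem \ref{shuffle1} specialized at $z=q^2$, where $z^lq^{-l}=q^{2l}q^{-l}=q^l$, the state equals, up to an overall constant, $\sum_{P}q^{\,l(P)}\,P[\,\underbrace{x_1\ldots x_1}_{k_1}\otimes\cdots\otimes\underbrace{x_n\ldots x_n}_{k_n}\,]$, the sum running over all distinct rearrangements $P$ with $l(P)$ their inversion length. I would then group these terms by their entries at all positions \emph{other than} $j,j+1$. Within a group, either the two entries at positions $j,j+1$ coincide, so $r_j$ fixes that term, or they are distinct, in which case the group is a pair $w,\ s_jw$ related by the adjacent swap. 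The swap changes the inversion count by exactly one, so the larger-first member carries one extra power of $q$, and the two coefficients stand in ratio $q$; this is precisely the ratio of the $+1$ eigenvector $x_i\otimes x_j+q\,x_j\otimes x_i$ from the previous step, so $r_j$ fixes each group. Summing over all groups gives $r_j\,\hat b^{(0)}_{k_1\ldots k_n}=\hat b^{(0)}_{k_1\ldots k_n}$, and summing over $j$ yields $\mathcal H\,\hat b^{(0)}_{k_1\ldots k_n}=(N-1)\,\hat b^{(0)}_{k_1\ldots k_n}$, so $\Lambda_0=N-1$.

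Finally I would show $N-1$ is the \emph{highest} eigenvalue. Each $I-r_j$ is positive semidefinite, since its eigenvalues are $0$ and $1+q^{-2}>0$ (recall $q=e^{\mu}>0$); hence $(N-1)I-\mathcal H=\sum_{j=1}^{N-1}(I-r_j)$ is a sum of positive-semidefinite matrices and is itself positive semidefinite, so every eigenvalue of the real symmetric matrix $\mathcal H$ is at most $N-1$. As the $q$-symmetric states attain $N-1$, it is maximal. I expect the main obstacle to be the bookkeeping in the grouping step: one must verify that the pairing by the swap at positions $j,j+1$ is well defined and exhausts all terms even when letters among $x_1,\ldots,x_n$ are repeated — so that some grouped terms are diagonal and fixed outright while others split into genuine $w,\,s_jw$ pairs — and that the coefficient ratio is \emph{exactly} $q$ in every case, which is controlled by the inversion statistic appearing in Theorem \ref{shuffle1} at $z=q^2$.
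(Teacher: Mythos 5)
Your proposal is correct, but it follows a genuinely different route from the paper. The paper's proof is a symmetry-propagation argument: it first checks ${\cal H}\,\hat b^{(0)}_{N0\ldots 0}=(N-1)\,\hat b^{(0)}_{N0\ldots 0}$ on the reference state (using only $r\,a\otimes a=a\otimes a$), then invokes $[{\cal H},\,\pi^{\otimes N}\Delta^{(N)}(y)]=0$ (Lemma \ref{symmy}) together with Theorem \ref{basicpro}, which realizes every $q$-symmetric state as $E_{n-1}^{k'_n}\cdots E_1^{k'_2}$ applied to the reference state, to transport the eigenvalue $N-1$ across the whole multiplet. You instead prove the stronger local statement $r_j\,\hat b^{(0)}_{k_1\ldots k_n}=\hat b^{(0)}_{k_1\ldots k_n}$ for each $j$ separately, by pairing terms of the expansion of Theorem \ref{shuffle1} at $z=q^2$ (coefficient $q^{l}$ with $l$ the inversion number) into singletons with equal letters at sites $j,j+1$ and doublets $w,\,s_jw$ whose coefficients are in ratio exactly $q$, matching the $+1$ eigenvector $x_i\otimes x_j+q\,x_j\otimes x_i$ of $r$; this pairing is indeed well defined for multiset permutations since an adjacent swap of distinct letters changes the inversion count by exactly one, and Theorem \ref{shuffle1} guarantees each distinct rearrangement occurs exactly once. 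The paper's argument is shorter and reuses machinery it needs anyway; yours is more elementary and self-contained, and it yields the sharper fact that the $q$-symmetric states lie in the common $+1$ eigenspace of all the $r_j$. Notably, your final step $(N-1)I-{\cal H}=\sum_{j}(I-r_j)\succeq 0$ (using that each real symmetric $r_j$ has spectrum in $\{1,-q^{-2}\}$ by the Hecke quadratic, so $I-r_j$ is positive semidefinite) actually supplies a proof of the word ``highest'' in the proposition, which the paper's own proof asserts but never argues — so on that point your write-up is more complete than the original.
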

\begin{proof}
First consider the reference state $\hat b^{(0)}_{N0 \ldots 0} = x_1 \otimes x_1 \ldots \otimes  x_1,$ then using the fact that $r a\otimes a = a\otimes a,$ for all $a\in X,$ we obtain,
${\cal H} \hat b^{(0)}_{N0\ldots 0} = (N-1)\hat b^{(0)}_{N0 \ldots 0}.$ 
Recall that $\big [r_i, \Delta^{(N)}(x)\big ] = 0,$  for all $x\in  \mathfrak{U}(\mathfrak{gl}_n)$ and $i \in [N-1],$ then by Theorem \ref{basicpro} (recall also (\ref{delta}) and Proposition \ref{basiclemma}, in particular $E_{j+1} E_j \hat b^{(0)}_{k_1 \ldots k_n} \propto E_jE_{j+i}\hat b^{(0)}_{k_1 \ldots k_n}$) it follows
\begin{eqnarray}
&&(E_{n-1}^{k'_{n}}\ldots E_2^{k'_3}E_1^{k'_2})\ {\cal H}\ \underbrace{x_1 x_1 \ldots x_1}_N = (N-1 )(E_{n-1}^{k'_{n}}\ldots E_2^{k'_3}E_1^{k'_2})\ \underbrace{x_1 x_1 \ldots x_1}_N \nonumber\\
&& {\cal H}\ (E_{n-1}^{k'_{n}}\ldots E_2^{k'_3}E_1^{k'_2})\ \underbrace{x_1 x_1 \ldots x_1}_N =(N-1 )(E_{n-1}^{k'_{n}}\ldots E_2^{k'_3}E_1^{k'_2})\ \underbrace{x_1 x_1 \ldots x_1}_N \nonumber\\
&& {\cal H}\ \hat b^{(0)}_{k_1k_2\ldots k_n} = (N-1) \hat b^{(0)}_{k_1k_2 \ldots k_n}, \label{central2b}
\end{eqnarray}
such that $k_1+k_2 +\ldots +k_n =N,$ $k_j' = k_j +k_{j+1} + \ldots + k_n$ and $0\leq k_j \leq N,$ $j \in [n],$ (recall also Remark \ref{remsym} and Proposition \ref{basiclemma}).
\end{proof}

Before we proceed with our analysis we introduce some useful notation.

\noindent {\bf Notation.}
Henceforth, we adopt the following notation: consider a Young diagram of shape $\lambda = (l_1, l_2, \ldots, l_p),$ (recall $N= l_1 +l_2 + \ldots +l_p$ and $1\leq l_p\ \leq \ldots l_2\leq l_1 \leq N$), then
\begin{enumerate}
   \item $m_{l_2, \ldots, l_p}$ is the dimension of the $\lambda$-SYT ($m_{\lambda}$).
   \item $d_{l_2,\ldots,l_p,n}$ is the dimension of the $\lambda$-SSYT ($d_{\lambda,n}$). 
   \item If $\lambda =(N),$ the dimension of the SYT is denoted $m_0$ and the dimension of the SSYT is denoted $d_{0,n.}$
   \hfill $\square$ 
   \end{enumerate}
We have already seen in Proposition \ref{sypro} that all $q$-symmetric states that correspond to the SSYT of shape $\lambda =(N)$ are eigenstates of the Hamiltonian  (\ref{ham0}). In general, for the $\mathfrak{U}_q(\mathfrak{gl}_n)$ invariant Hamiltonian (\ref{ham0}) we claim that the decomposition of the space $V_n^{\otimes N},$ on which the Hamiltonian acts, in terms of eigenspaces is given as follows (we will prove this explicitly for the algebra $\mathfrak{U}_q(\mathfrak{gl}_2)$):
\begin{equation}
    V_n^{\otimes N} = \bigoplus_{\lambda \vdash N}  m_{\lambda} V_n^{(\Lambda_{\lambda})},
\end{equation}
where $\Lambda_{\lambda}$ are the Hamiltonian's (\ref{ham0}) eigenvalues associated with a $\lambda$-shaped Young diagram and $V_{n}^{(\Lambda_{\lambda})}$ 
are the corresponding eigenspaces of dimension $\mbox{dim}V_n^{(\Lambda_{\lambda})} = d_{\lambda,n}$. Also, $m_\lambda$ is the dimension of the $\lambda$-SYT and $d_{\lambda,n}$ is the dimension of the $\lambda$-SSYT. Or equivalently graphically,
\begin{center}
\begin{figure}[ht]
\footnotesize
\begin{ytableau}
\ 
\end{ytableau}$^{\otimes N} $   = \blue{$\underbrace{\begin{ytableau}
\ & & \ldots &
\end{ytableau} }_N$
$  \ \oplus\  \underset{N\geq l_1\geq l_2}{\sum} m_{l_2} $ 
$\underbrace{\begin{ytableau}
\ &  &\ldots &  & \\
\  & \dots &
\end{ytableau}}_{N-l_2=l_1}$} $\ \oplus$\\  
$\underset{N\geq l_1 \geq l_2\geq l_3\geq 1}{\sum} m_{l_2, l_3}$ $\underbrace{\begin{ytableau}
\ & & & \ldots &\\
\ & & \ldots & \\ 
\ &\ldots &
\end{ytableau}}_{N-(l_2+l_3) =l_1}$ $ \oplus \ldots \oplus\  \underset{N\geq l_1\geq  
\ldots l_{N-1} \geq 1}{\sum} m_{l_2,\ldots, l_{N-1}}$ $\underbrace{\begin{ytableau}
\ & & & \ldots &\\
\  & & \ldots & \\ 
\ \vdots &\ldots &\\
\ &
\end{ytableau}}_{N-(l_2+l_3 + \ldots l_{N-1}) = l_1}$ $\oplus$ 
$\begin{rcases*} \begin{ytableau}
\\\
\ \\ 
\ \vdots \\
\
\end{ytableau}\end{rcases*} N$.\\
\caption{Tensor representations: Young tableaux} \label{yyb}
\end{figure}
\end{center}
The constants $m_{l_2\ldots, l_P}$ as already mentioned are equal to the dimension of the corresponding SYT given by the hook length rule, i.e. $m_{\lambda}= \frac{N!}{\prod_{i,j} h_{i,j}}$. For instance, $m_{0}=1,$ $m_{1} =N-1,$ $m_{1, \ldots, 1} =1,$ etc. 
In the ${\mathfrak U}_q(\mathfrak{gl}_2)$ case in particular only the first two types of tableaux are considered (indicated in blue color in the first line in Figure 7). In Proposition \ref{sypro} we identified the eigenstates that correspond to the SSYT of shape $\lambda = (N),$ which form the basis of the $d_{0,n}$ dimensional vector space as was shown in the previous subsection. In what follows we will identify eigenstates of the Hamiltonian (\ref{ham0}) associated with any partition $\lambda = (N-k, k).$  In this case the dimension of the corresponding SYT is denoted $m_k,$ whereas the dimension of the corresponding SSYT is denoted $d_{k,n}.$   
\begin{rem} \label{important}
 Consider an $m\times m$ real symmetric matrix and assume there exist $l$ identical eigenvalues. Due to the fact that every symmetric matrix is diagonalizable, i.e. the algebraic multiplicities coincide with the geometric ones, there exist $l$ independent eigenstates (which can be made orthogonal using the Gram-Schmidt process) associated with each one of the repeated eigenvalues. Also, for any symmetric matrix any two distinct eigenvalues have orthogonal eigenstates. 
 \end{rem}
 It is practical for what follows to introduce the state ${\mathfrak b}_{(N-k)k0\ldots0}$, in line with the notation we have been using so far, 
 \begin{equation}
    {\mathfrak b}_{(N-k) k0\ldots 0} :=  \sum_{1\leq i_1<i_2< \ldots < i_k \leq N} a_{i_1i_2\ldots i_k}\ x_1 \ldots x_1 \otimes \underbrace{x_2}_{i_1} \otimes x_1 \ldots x_1\otimes  \underbrace{x_2}_{i_2}\otimes  x_1   \ldots x_1 \otimes \underbrace{x_{2}}_{i_k} \otimes x_1 \ldots x_1, \label{thestate}
    \end{equation} 
     $a_{i_1i_2\ldots i_k} \in {\mathbb R},$ although henceforth, we omit the zeroes in (\ref{thestate}) for brevity and simply write ${\mathfrak b}_{(N-k)k}.$  We will first study the eigenvalues and eigenstates of the Hamiltonian (\ref{ham0}) for a state of the type ${\mathfrak b}_{(N-1)1},$ {in this case the constants $a_{i_1 i_2 \ldots i_k}$ reduce to $a_j,$ $j \in [N]$.} 
\begin{pro} \label{systema}
Let $ {\cal H} = \underset{1\leq j \leq N-1}{\sum} r_j,$ where $r\in \End({\mathbb R}^{n}\otimes {\mathbb R}^n)$ is the $\mathfrak{U}_q(\mathfrak{gl}_n)$ invariant solution of the braid equation (\ref{rr}). Let also 
 \begin{equation}
 {\cal H}\ {\mathfrak b}_{(N-1) 1} = \Lambda \  {\mathfrak b}_{(N-1) 1}, \label{eigen11}
 \end{equation}
where $\Lambda\in {\mathbb R},$
and  ${\mathfrak b}_{(N-1)1}$  is given in (\ref{thestate}). Then: 
\begin{enumerate}
    \item 
The eigenvalue problem (\ref{eigen11}) yields $N$ distinct eigenvalues $\Lambda.$\\ 
{Also, the elements $\Lambda = \Lambda_0,$ $a_j =  q^{-(j-1)},$ $j \in [N],$ $\Lambda_0 = N-1$} satisfy equation (\ref{eigen11}). 

\item Let $\Lambda_1 \neq \Lambda_0,$ $a^{(1)}_j,$ $j\in [N]$ satisfy (\ref{eigen11}), 
 i.e. $b^{(1)}_{(N-1)1}$ is the eigenstate with eigenvalue denoted $\Lambda_1.$ 
Then: \\
(i) $\underset{1\leq j \leq N}{\sum} a^{(1)}_j q^{-j} =0.$ \\
(ii) $b^{(1)}_{(N-1) 1 } \perp V_n^{(\Lambda_{0})},$ where $V_n^{(\Lambda_{0})}$ is the $d_{0,n}$ dimensional vector space with basis ${\mathbb B} =\big \{\hat b^{(0)}_{k_1k_2 \ldots k_n}\big \},$ $0\leq k_j \leq N$ and $\underset{1 \leq j\leq n}{\sum} k_j =N,$ i.e. the set of all $q$-symmetric states (Propositions \ref{basiclemma} and \ref{sypro}).

\item Recall $E_j, F_j, q^{H_j},$ defined in (\ref{delta}) and let, $b^{(1)}_{(N-m-1) (m+1)} := E_1^{m} b^{(1)}_{(N-1) 1}.$ Then, 
\begin{eqnarray}
&& F_1 b^{(1)}_{(N-1) 1} =E_1 b^{(1)}_{1 (N-1) } =0,  \quad F_1 b^{(1)}_{(N-m-1)(m+1)}=\kappa_m^{(1)}b^{(1)}_{(N-m)m},\quad 
q^{{H_1}}  = a_m^{(1)} b^{(1)}_{(N-m)m}\nonumber
\end{eqnarray}
where $\kappa^{(1)}_{m} =[N-m-1]_q[m]_q,$ $m \in [N-2]$ and $a_m^{(1)} =q^{N-2m},$ $m\in [N-1].$ 

\item The set of eigenstates $b^{(1)}_{(N-m)m},$ $m \in [N-1]$
with an eigenvalue $\Lambda_1 \neq N-1$ is an orthogonal basis of an $N-1$ 
dimensional vector space and a basis for the $N-1$ irreducible representation 
of ${\mathfrak U}_{q}(\mathfrak{gl}_2).$
\end{enumerate}

 \end{pro}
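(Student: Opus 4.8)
The plan is to restrict everything to the $N$-dimensional subspace $V_{[N-1,1]}\subset V_n^{\otimes N}$ spanned by the vectors $v_m:=x_1\ldots x_1\otimes \underbrace{x_2}_{m}\otimes x_1\ldots x_1$ carrying $x_2$ in slot $m$ ($m\in[N]$); this subspace is ${\cal H}$-invariant because every $r_j$ fixes the multiset of letters. First I would compute ${\cal H}v_m$ from the local rule (\ref{matrix0}): an $r_j$ acting on two equal letters is the identity, $r(x_1\otimes x_2)=q^{-1}x_2\otimes x_1$, and $r(x_2\otimes x_1)=q^{-1}x_1\otimes x_2+(1-q^{-2})x_2\otimes x_1$. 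Summing over $j\in[N-1]$ and counting the slots where both entries equal $x_1$ gives a real symmetric tridiagonal (Jacobi) matrix: all off-diagonal entries equal $q^{-1}$, while the diagonal entries are $N-1-q^{-2}$ (top corner), $N-2-q^{-2}$ (bulk) and $N-2$ (bottom corner). Thus (\ref{eigen11}) becomes a three-term recursion for the coefficients $a_m$ of $\mathfrak{b}_{(N-1)1}=\sum_m a_m v_m$.

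For part (1) the decisive fact is that a symmetric tridiagonal matrix with nonvanishing off-diagonal entries ($q^{-1}\neq0$ since $q=e^{\mu}$) has simple spectrum, so the recursion has exactly $N$ distinct eigenvalues. Substituting $a_m=q^{-(m-1)}$ and checking the top, bulk and bottom rows verifies that $\{\Lambda_0=N-1,\ a_m=q^{-(m-1)}\}$ solves the problem; this eigenvector is a scalar multiple of the $q$-symmetric state $\hat b^{(0)}_{(N-1)1}$, in agreement with Proposition \ref{sypro}. For part (2) I would invoke that ${\cal H}$ is real symmetric, so eigenvectors for distinct eigenvalues are orthogonal (Remark \ref{important}); hence $b^{(1)}_{(N-1)1}$, being a $\Lambda_1$-eigenvector with $\Lambda_1\neq\Lambda_0$, is orthogonal to the entire $\Lambda_0$-eigenspace, which contains all $q$-symmetric states, giving (ii). Statement (i) is the special case of orthogonality against $\hat b^{(0)}_{(N-1)1}$ (the only $q$-symmetric state with letter-multiset $\{x_1^{N-1},x_2\}$, hence the only one overlapping $V_{[N-1,1]}$), whose coefficients are proportional to $q^{-(m-1)}$; orthogonality then reads $\sum_m a^{(1)}_m q^{-m}=0$.

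Part (3) is the core, where I pass from spectra to representations. By Lemma \ref{symmy} each $r_j$, hence ${\cal H}$, commutes with $E_1,F_1,q^{H_1}$, so $E_1$ and $F_1$ map each ${\cal H}$-eigenspace to itself. The key observation is that $F_1 b^{(1)}_{(N-1)1}$ lies in the weight space of $N$ copies of $x_1$, which is one-dimensional and spanned by $x_1^{\otimes N}=\hat b^{(0)}_{N0}\in V_n^{(\Lambda_0)}$; but $F_1$ keeps it inside the $\Lambda_1$-eigenspace, so it lies in $V_n^{(\Lambda_1)}\cap V_n^{(\Lambda_0)}=\{0\}$, forcing $F_1 b^{(1)}_{(N-1)1}=0$. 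Therefore $b^{(1)}_{(N-1)1}$ is a highest weight vector of weight $N-2$ (read off from the letter counts via $q^{{\cal E}_j}$, Proposition \ref{basiclemma}). The same eigenspace-intersection argument at the other end gives $E_1 b^{(1)}_{1(N-1)}=0$, since its image lies in the one-dimensional all-$x_2$ space contained in $V_n^{(\Lambda_0)}$. With these two vanishings the vectors $b^{(1)}_{(N-m-1)(m+1)}=E_1^m b^{(1)}_{(N-1)1}$ form the standard string of a $\mathfrak{U}_q(\mathfrak{sl}_2)$ highest-weight module of highest weight $N-2$, and the asserted relations $F_1 E_1^m b^{(1)}_{(N-1)1}\propto\kappa^{(1)}_m=[N-m-1]_q[m]_q$ together with the $q^{H_1}$-eigenvalues $q^{N-2m}$ are exactly the textbook formulas for such a module, obtained from $[F_1,E_1]=\frac{q^{H_1}-q^{-H_1}}{q-q^{-1}}$ by induction on $m$.

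Part (4) then follows: the $N-1$ vectors $b^{(1)}_{(N-m)m}$, $m\in[N-1]$, are $q^{H_1}$-eigenvectors with pairwise distinct eigenvalues $q^{N-2m}$ (for $q\neq1$) of the symmetric operator $q^{H_1}$, hence mutually orthogonal; they span an irreducible $(N-1)$-dimensional $\mathfrak{U}_q(\mathfrak{gl}_2)$-module generated from a highest weight vector, that is, a canonical basis in the sense of Subsection 6.2. I expect the main obstacle to be the bookkeeping in part (3): verifying cleanly that $b^{(1)}_{(N-1)1}$ is genuinely highest weight and that the generated module closes at dimension exactly $N-1$ (equivalently $E_1^{N-1}b^{(1)}_{(N-1)1}=0$). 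Both rest on the two structural facts that ${\cal H}$ commutes with the quantum group and that distinct ${\cal H}$-eigenspaces intersect trivially, so the real work is in organizing these observations rather than in any long computation.
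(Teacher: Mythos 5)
Your proposal is correct and follows essentially the same route as the paper: the same real symmetric tridiagonal (Jacobi) matrix with nonvanishing off-diagonal entries $q^{-1}$ for part (1), the same symmetry/commutation arguments ($[{\cal H},E_1]=[{\cal H},F_1]=0$ plus trivial intersection of distinct eigenspaces) to force $F_1 b^{(1)}_{(N-1)1}=E_1 b^{(1)}_{1(N-1)}=0$, and the same induction via $F_1E_1-E_1F_1=\frac{q^{H_1}-q^{-H_1}}{q-q^{-1}}$ for $\kappa^{(1)}_m$. The only cosmetic difference is that you obtain (2)(i) as a corollary of orthogonality to $\hat b^{(0)}_{(N-1)1}$, whereas the paper first proves (i) directly by multiplying the $n$-th equation of the system by $q^{-n}$ and summing, then deduces (ii) — an alternative the paper itself remarks is available via the symmetry of ${\cal H}$.
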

\begin{proof}

$ $

\begin{enumerate}
    \item 
We first show that the following linear homogeneous system of $N$ equations holds:
\begin{eqnarray}
&& (N-1 -q^{-2})a_1 + q^{-1} a_2 = \Lambda a_1 \nonumber\\
&& q^{-1}a_{j-1} + (N-2 - q^{-2})a_j + q^{-1} a_{j+1} = \Lambda a_j\nonumber, ~~~j \in \big \{2, 3, \ldots, N-1\big \}\\
&& q^{-1}a_{N-1} + (N-2)a_N = \Lambda a_N. \label{system}
\end{eqnarray}    
    The proof is straightforward, it follows {from (\ref{eigen11})} and is based on the action of $r_j,$ for all $j \in [N-1],$ on the tensor product (\ref{braidg}) and the definition of the state ${\mathfrak b}_{(N-1) 1}.$ We also observe that the system of equations leads to $N$-distinct eigenvalues as it describes the eigenvalue problem of a real, symmetric tridiagonal $N\times N$ matrix. 
    Recall that a real, symmetric tridiagonal matrix has distinct eigenvalues if all its off-diagonal elements are non-zero.
    Indeed, say that two solutions of the system coincide i.e. $\Lambda =\Lambda'$  (choose for brevity $a_1=1$) then it follows from the system (\ref{systema}) that the coefficients $a_j,$ $2\leq j \leq N$ also coincide, i.e. the corresponding eigenvectors coincide. However, ${\cal H}$ is a symmetric matrix which means it is diagonalizable, hence all eigenvalues extracted from the system (\ref{system}) should be distinct (see also Remark \ref{important}).
    
    By setting $\Lambda = N-1$ (and choosing for simplicity $a_1=1$) it follows from the first equation of the system that $a_2 = q^{-1}$ from the second equation we obtain that $a_3 = q^{-2},$ and continuing in this manner we find that the $j^{th}$ equation leads to $a_j= q^{-(j-1)},$ $j \in [N].$ 

   \item  We recall from part (1) that there are $N-1$ distinct eigenvalues denoted $\Lambda_1 \neq \Lambda_0$ and the corresponding constants $a_j$ of equations (\ref{system}) are denoted $a_j^{(1)},$ $j\in [N].$ \\
   (i) We multiply the $j^{th}$ equation of the system (\ref{system}) by $q^{-j}$ and then by adding all the equations
we conclude (recall we have set $a^{(1)}_1 =1$)
\begin{eqnarray}
      &&  \big (\Lambda_1 - (N-1)\big )\sum_{j=1}^{N}   a^{(1)}_jq^{-j} =  0 ~\Rightarrow ~\sum_{j=1}^{N} a^{(1)}_jq^{-j} =0, \nonumber
   \end{eqnarray}
where $\Lambda_1 \neq N-1.$
\\
(ii) It suffices to show that $b^{(1)}_{(N-1) 1} \perp \hat b^{(0)}_{(N-1) 1}$ (recall Remark \ref{remsym}, again we omit for brevity the zeros in $\hat b^{(0)}_{(N-1)10\ldots 0}$ and write $\hat b^{(0)}_{(N-1)1}$ ), given that $\hat b^{(1)}_{(N-1) 1}$ is obviously orthogonal to all other $q$-symmetric states $\hat b^{(0)}_{k_1 k_2 \ldots k_n}$ by construction.  Indeed, take the inner product and use part (i), then
   $\hat b^{(0)T}_{(N-1)1} \cdot b^{(1)}_{(N-1)1} \propto \underset{j\in [N]}{\sum}a^{(1)}_jq^{-j} =0.$ 
   Although this is also expected due to the fact that ${\cal H}$ is a symmetric matrix, hence eigenstates with distinct eigenvalues are orthogonal.

   \item Recall the definitions of $E_j, F_j, q^{H_j}$ (\ref{delta}) and the eigenstate $b^{(1)}_{(N-1)1}$ (\ref{thestate}) with an eigenvalue $\Lambda_1,$ then $F_1 b^{(1)}_{(N-1)1} \propto x_1 \otimes x_1 \otimes \ldots x_1,$ but the latter is an eigenstate of the Hamiltonian with eigenvalue $\Lambda_0,$ Proposition \ref{sypro}. On the other hand,
   \begin{eqnarray}
   {\cal H} F_1 b^{(1)}_{(N-1)1} = F_1 {\cal H} b^{(1)}_{(N-1)1} = \Lambda_1 F_1 b^{(1)}_{(N-1)1}, \nonumber\end{eqnarray}
  which leads to $(\Lambda_0 -\Lambda_1) F_1 b^{(1)}_{(N-1)1} =0.$ Recall, $\Lambda_0 \neq \Lambda_1,$ hence $F_1 b^{(1)}_{(N-1)1} =0.$ 
   
Similarly, recall $b^{(1)}_{(N-m)m} = E_1^{m-1} b^{(1)}_{(N-1)1}.$ Then $E_1^{N-1}b^{(1)}_{(N-1)1}= E_1b^{(1)}_{1(N-1)} \propto x_2 \otimes x_2 \otimes \ldots x_2,$ which is an eigenstate with eigenvalue $\Lambda_0,$ Proposition \ref{sypro}. However, due to the exact symmetry of the open chain, $E_1^{N-1} b^{(1)}_{(N-1)1}$ is also an eigenstate with eigenvalue $\Lambda_1,$ which also leads to $E_1^{N-1}b^{(1)}_{(N-1)1} =0.$ We also show, given the structure of a state $b^{(1)}_{(N-m)m}$ and the definitions  in (\ref{delta}), that $q^{{\cal E}_1}b^{(1)}_{(N-m)m} = q^{N-m}b^{(1)}_{(N-m)m},$ $q^{{\cal E}_2}b^{(1)}_{(N-m)m} = q^{m}b^{(1)}_{(N-m)m}$ and hence
$q^{H_1}b^{(1)}_{(N-m)m} = q^{N-2m}b^{(1)}_{(N-m)m},$ $~1\leq m \leq  N-1.$ 
   
Moreover, we prove by induction that, $ F_1 b^{(1)}_{(N-m-1)(m+1)} = \kappa^{(1)}_{m} b^{(1)}_{(N-m)m},$ $ 1\leq m \leq N-2, $
where $\kappa^{(1)}_m= [N-m-1]_q[m]_q:$ 
\begin{itemize}
\item We first show this for $m=1:$  
recall $F_1 b^{(1)}_{(N-1)1} =0$ and $F_1 E_1 - E_1F_1 = \frac{q^{H_1} - q^{-H_1}}{q -q^{-1}},$ then \\
$\ F_1b^{(1)}_{(N-2)2} =  F_1 E_1 b^{(1)}_{(N-1)1} = \kappa_1^{(1)}b^{(1)}_{(N-1)1},$ where $~\kappa_1^{(1)} := [N-2]_q$

  \item We next assume for some $m -1 \geq 2$ that $F_1 b^{(1)}_{(N-m)m} = \kappa^{(1)}_{m-1} b^{(k)}_{(N-m+1)(m-1)},$
where $\kappa^{(1)}_{m-1} := [N-m]_q[m-1]_q.$

\item We finally show, using the ${\mathfrak U}_q(\mathfrak{gl}_n)$ relations that the equation above holds for $m$:
\begin{eqnarray}
 &&  F_1 b^{(1)}_{(N-m-1)(m+1)} = F_1 E_1 b^{(1)}_{(N-m)m} = \big ( \kappa^{(1)}_{m-1} + [N-2m]_q \big )b^{(1)}_{(N-m)m}, \nonumber
  \end{eqnarray}
which leads to $ F_1 b^{(1)}_{(N-m-1)(m+1)} = \kappa^{(1)}_{m}  b^{(1)}_{(N-m)m},$  where $\kappa^{(1)}_{m} := [N-m-1]_q[m]_q.$
 \end{itemize}
Notice that even if we hadn't shown that $E_1 b^{(1)}_{1(N-1)} =0$ using the symmetry arguments for the eigenvalues problem for ${\cal H},$ we would have ended up to this conclusion anyway, due to the fact that $N-1$ is an integer, hence the sequence of eigenstates $b^{(1)}_{(N-m)m}$ should terminate at $m=N-1$ (due to $\kappa^{(1)}_{N-1}=0$ and the $\mathfrak{U}_q(\mathfrak{gl}_2)$ relations) leading to $E_1b^{(1)}_{1(N-1)} =0.$ This is a standard argument for finite irreducible representations of $\mathfrak{U}_q(\mathfrak{gl}_2).$
 
\item The set ${\mathbb B}_1 = \big  \{b^{(1)}_{(N-m)m} \big  \},$ $m \in [N-1]$ is then an orthogonal basis of a $d_{1,2} = N-1$ dimensional vector space denoted $V_ 2^{(\Lambda_1)}.$ There are $N-1$ distinct eigenvalues $\Lambda_1\neq N-1$ and consequently $N-1,$ $d_{1,2}$ dimensional vector spaces orthogonal to each-other. The orthogonality between the vector spaces is guaranteed by the fact that the Hamiltonian is a real symmetric matrix. 

We now gather what we have shown in part (3): $F_1 b^{(1)}_{(N-1)1} = E b^{(1)}_{1 (N-1)} =0,$
\begin{eqnarray}
&& E_1 b^{(1)}_{(N-m)m} =   b^{(1)}_{(N-m-1) (m+1)}, \nonumber\\ 
&& F_1 b^{(1)}_{(N-m-1)(m+1)} = \kappa^{(1)}_{m} b^{(1)}_{(N-m) m}, \nonumber \\ 
&& q^{H_1}b^{(1)}_{(N-m) m} = a^{(1)}_m  b^{(1)}_{(N-m)m},   \nonumber 
\end{eqnarray}
 where $\kappa_m^{(1)}= [N-m-1]_q[m]_q,$ $a_m^{(1)} = q^{N-2m},$ $1\leq m \leq N-1$ The relations above indeed provide the $N-1$ dimensional irreducible representation of ${\mathfrak U}_q(\mathfrak{gl}_2),$ where the set ${\mathbb B}_{1} =\big  \{b^{(1)}_{(N-m)m}\big  \},$ $m\in[N-1]$ is a basis of this representation (compare with expressions (\ref{a1b}), (\ref{a2b}) in Example \ref{exafirst}).
    \qedhere
    \end{enumerate}
\end{proof}

\begin{rem}  \label{norm1} For the normalized eigenstates $\hat b^{(1)}_{(N-m)m}: =  \frac{b^{(1)}_{(N-m)m}}{||b^{(1)}_{(N-m)m}||},$ $m \in [N-1],$ we deduce
       \begin{eqnarray}
E_1 {\hat b}^{(1)}_{(N-m)m} = c^{(1)}_{m} {\hat b}^{(1)}_{(N-m-1) (m+1)}, \quad F_1 {\hat b}^{(1)}_{(N-m-1)(m+1)} = c^{(1)}_{m} {\hat b}^{(1)}_{(N-m) m}, \quad q^{H_1}{\hat b}^{(1)}_{(N-m) m} = a^{(1)}_m {\hat b}^{(1)}_{(N-m)m},   \nonumber 
\end{eqnarray}
 where the constants $c_m^{(1)} =\sqrt{[N-m-1]_q[m]_q},$ $a_m^{(1)} = q^{N-2m}$ are identified by recalling that $E_1, F_1, q^{H_1}$ are given in (\ref{delta}). The relations above as well as the last three relations proven in Proposition \ref{systema} give two distinct $N-1$ dimensional irreducible representations $\mathfrak{U}_{q}(\mathfrak{gl}_2)$ related however to each other by an algebra homomorphism (see Example \ref{exafirst} and relations (\ref{a1}), (\ref{a2}) and (\ref{a1b}), (\ref{a2b})). The constants $c^{(1)}_m$ are also in accordance with the fact that $||\hat b^{(1)}_{(N-m)m}|| =1.$
 \end{rem}
 
\begin{rem} \label{remark22} ({\bf Notation $\&$ preliminaries}.)
Before we generalize the results of Proposition \ref{systema} we introduce a convenient notation. Let ${\cal H}$ be the $\mathfrak{U}_q(\mathfrak{gl}_n) $ invariant Hamiltonian (\ref{ham0}) and 
 \begin{equation}
 {\cal H}\ {\mathfrak b}_{(N-k) k} = \Lambda \  {\mathfrak b}_{(N-k) k}, \label{eigen0}
 \end{equation}
 where $\Lambda\in {\mathbb R},$ $k\in {\mathbb Z}^+,$ such that $0\leq k \leq \frac{N}{2}$ and ${\mathfrak b}_{(N-k)k}$ is given in (\ref{thestate}).

    If there exists a state $b^{(k)}_{(N-k)k},$ such that $F_1b^{(k)}_{(N-k)k} =0$ (highest weight state), recall $F_1$ is given in (\ref{delta}), and ${\cal H} b^{(k)}_{(N-k)k} = \Lambda_k b^{(k)}_{(N-k)k},$ then we say that the eigenvalues $\Lambda_k$ and eigenstates $b^{(k)}_{(N-k) k}$ belong to the $k$-eigen-sector or simply $k$-sector. Notice, that by the definitions of $F_j$ (\ref{delta}) and 
    ${\mathfrak b}_{(N-k)k}$ (\ref{thestate}), $F_j {\mathfrak b}_{(N-k)k} =0,$ $j>1.$ Moreover, define $b^{(k)}_{(N-k-l)(k+l)}:= E_1^l b^{(k)}_{(N-k)k},$ then due to the $\mathfrak{U}_q(\mathfrak{gl}_n)$ invariance of the Hamiltonian (\ref{ham0}), we deduce that the states $b^{(k)}_{(N-k-l)(k+l)}$ are eigenstates of the Hamiltonian with eigenvalue $\Lambda_k.$ We say that the states $b^{(k)}_{(N-k-l)(k+l)}$ also belong to the $k$-sector. 

   Equation (\ref{eigen0}) describes the eigenvalue problem of an $ \binom{N}{k} \times \binom{N}{k},$ ($\binom{N}{k} :=\frac {N!}{k!(N-k)!}$) real symmetric matrix and leads to a polynomial of $\Lambda$ of degree $\binom{N}{k},$ i.e. there are $\binom{N}{k}$ roots for $\Lambda.$ Our main conjecture here is that these roots are distinct, that is (\ref{eigen0}) yields $\binom{N}{k}$ distinct eigenvalues. 
   
     Equations (\ref{eigen0}) for $k$ and $N-k$ are called complementary, they have the same number of eigenvalues $\binom{N}{k}$ and give exactly the same eigenvalues due to the $\mathfrak{U}_q({\mathfrak{gl}_2})$ symmetry of the Hamiltonian. Indeed, if ${\mathfrak b}_{(N-k)k}$ is an eigenstate for (\ref{eigen0}) with some eigenvalue $\Lambda,$ then the state ${\mathfrak b}_{k(N-k)}:= E_1^{N-2k}{\mathfrak b}_{(N-k)k}$ is an eigenstate for the problem (\ref{eigen0}) for $k\to N-k$, with the same eigenvalue $\Lambda.$
   \end{rem}
Our main conjecture (see Remark \ref{remark22}) states that equation (\ref{eigen0}) yields $\binom{n}{k}$ distinct eigenvalues. We showed for instance in Proposition \ref{systema} that for $k=1$ all eigenvalues are distinct. However, in order to prove the next main theorem (Theorem \ref{thetheorem}) it suffices to assume that eigenvalues in different sectors are distinct. Also, according Remark \ref{remark22} it suffices to study equation (\ref{eigen0}) for $0 \leq k \leq \frac{N}{2}$ and the solution for the rest of the equations can be then automatically obtained using symmetry arguments. A detailed description of the process is given in Theorem \ref{thetheorem}. 

We have thus far been able to work out explicitly the first two sectors (sectors $0$ and $1$) in Propositions \ref{sypro} and \ref{systema}. Specifically, we recall our results for the first two sectors focusing on $V_2^{\otimes N},$ $V_2\leq V_n,$ with the standard basis $\big \{\hat e_{x_1}, \hat e_{x_2}\big \}:$ 
\begin{enumerate}
\item 0-sector: there is $m_0=1$ eigenvalue $\Lambda_0$ with eigenstates denoted $b^{(0)}_{(N-m) m},$ $0\leq m\leq N,$ that form a canonical basis for the $N+1$ dimensional irreducible representation of ${\mathfrak U}_q(\mathfrak{gl}_2),$ Proposition \ref{sypro}.
    \item 1-sector: there are $m_1=N-1$ distinct eigenvalues denoted $\Lambda_1.$ Each one of these eigenvalues has eigenstates denoted $b^{(1)}_{(N-m) m},$ $1\leq m\leq N-1,$ that form a canonical basis for the $N-1$ dimensional irreducible representation of ${\mathfrak U}_q(\mathfrak{gl}_2),$ Proposition \ref{systema}.
 \end{enumerate}
We focus next on the ${\mathfrak U}_q(\mathfrak{gl}_2) \subseteq {\mathfrak U}_q(\mathfrak{gl}_n)$ part of the symmetry, generated by $\big \{E_1, F_1, q^{H_1} \big \},$ and prove that in the $k$-sector there are $m_k =\frac{N!}{k! (N- k+1)!}(N-2k +1)$ eigenvalues denoted $\Lambda_k.$ Each eigenvalue has eigenstates denoted $b^{(k)}_{(N-m)m},$ $k\leq m\leq N-k,$ which form a canonical basis for the $d_{k,2} =N-2k+1$ dimensional irreducible representation of ${\mathfrak U}_q(\mathfrak{gl}_2).$  That is, each $k$-sector is in fact a subspace of $V_2^{\otimes N}\subseteq V_n^{\otimes N},$ which is invariant under the action of $\mathfrak{U}_q(\mathfrak{gl}_2)$ as will be shown in the theorem below. 

\begin{thm} \label{thetheorem}
Let $ {\cal H} = \underset{1 \leq j\leq N-1}{\sum} r_j,$ where $r\in \End({\mathbb R}^{n}\otimes {\mathbb R}^n)$ is the $\mathfrak{U}_q(\mathfrak{gl}_n)$ invariant solution of the braid equation (\ref{rr}). Let also 
 \begin{equation}{\cal H}\ {\mathfrak b}_{(N-k) k } = \Lambda \  {\mathfrak b}_{(N-k)k}, \label{eigen1}
 \end{equation}
 where $\Lambda\in {\mathbb R},$ $k\in {\mathbb Z}^+,$ such that $0\leq k \leq \frac{N}{2}$
and ${\mathfrak b}_{(N-k) k}$ is given in (\ref{thestate}).
    Assume also that (\ref{eigen1}) yields distinct eigenvalues $\Lambda$ (see also Remark \ref{remark22}). Then:
    \begin{enumerate}
    \item 
There are $m_k =\frac{N!}{k! (N- k+1)!}(N-2k +1)$ eigenvalues, denoted $\Lambda_k$ 
with eigenstates denoted $b^{(k)}_{(N-k)k}$ being of the form (\ref{thestate}) that belong to the $k$-sector, i.e. 
$F_1 b^{(k)}_{(N-k) k} =0,$ where $F_1$ is given in (\ref{delta}).

\item Recall $E_1,\ F_1\ q^{H_1}$ (\ref{delta}) and let $b^{(k)}_{(N-m)m} := E_1^{m-k} b^{(k)}_{(N-k) k},$ $m>k$ then
\begin{eqnarray}
F_1 b^{(k)}_{(N-m-1)(m+1)} = \kappa_m^{(k)} b^{(k)}_{(N-m)m}, \quad E_1 b^{(k)}_{k (N-k)} =0, \quad q^{H_1}b^{(k)}_{(N-m)m} = a^{(k)}_{m} b^{(k)}_{(N-m)m}, \nonumber
\end{eqnarray}
where $\kappa^{(k)}_m = [N-k-m]_q[m-k+1]_q,$ $k \leq m \leq N-k-1$ and $a^{(k)}_m = q^{N-2m},$ $k\leq m \leq N-k.$

\item The eigenstates $b^{(k)}_{(N-m)m},$ $k\leq m \leq N-k,$ form an orthogonal basis of a $d_{k,2}=N-2k+1$ 
dimensional vector space and a canonical basis for the $d_{k,2}$ dimensional irreducible representation 
of ${\mathfrak U}_{q}(\mathfrak{gl}_2).$ 
\end{enumerate}  
\end{thm}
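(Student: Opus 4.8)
The plan is to prove Theorem \ref{thetheorem} by induction on the sector index $k$, the base cases $k=0$ and $k=1$ being precisely Propositions \ref{sypro} and \ref{systema}. The organizing principle is that, by Lemma \ref{symmy} (equivalently (\ref{symm1})), the operators $E_1,F_1,q^{H_1}$ of (\ref{delta}) all commute with ${\cal H}=\sum_j r_j$, so ${\cal H}$ is an endomorphism of $V_2^{\otimes N}$ (sitting inside $V_n^{\otimes N}$) that respects the $\mathfrak{U}_q(\mathfrak{gl}_2)$-module structure. I would first fix notation: let $W_k\subseteq V_2^{\otimes N}$ be the span of all basis tensors containing exactly $k$ copies of $x_2$ and $N-k$ copies of $x_1$. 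This is the $q^{H_1}$-weight space of weight $N-2k$; it has dimension $\binom{N}{k}$ and is ${\cal H}$-invariant because ${\cal H}$ preserves the letter content of a tensor. The eigenvalue problem (\ref{eigen1}) is exactly the restriction of ${\cal H}$ to $W_k$.

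For part (1) the key input is complete reducibility of $V_2^{\otimes N}$ as a $\mathfrak{U}_q(\mathfrak{gl}_2)$-module ($q=e^{\mu}$ is not a root of unity). This yields the standard weight-space decomposition $W_k=\ker\bigl(F_1|_{W_k}\bigr)\oplus E_1(W_{k-1})$, where the first summand consists of the highest-weight vectors of weight $N-2k$ and the second is the image under $E_1$ of the weight space immediately above. Counting weight multiplicities gives $\dim\ker(F_1|_{W_k})=\binom{N}{k}-\binom{N}{k-1}=\frac{N!}{k!(N-k+1)!}(N-2k+1)=m_k$, whence $\dim E_1(W_{k-1})=\binom{N}{k-1}$. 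Both summands are ${\cal H}$-invariant; by the inductive hypothesis the eigenvalues of ${\cal H}$ on $E_1(W_{k-1})$ are exactly the $\binom{N}{k-1}$ values inherited from sectors $0,\dots,k-1$ (if ${\cal H}\,b^{(j)}=\Lambda_j b^{(j)}$ with $j<k$ then ${\cal H}\,E_1^{\,k-j}b^{(j)}=\Lambda_j E_1^{\,k-j}b^{(j)}$), while ${\cal H}$ on $\ker(F_1|_{W_k})$ produces $m_k$ eigenvalues whose eigenvectors $b^{(k)}_{(N-k)k}$ are killed by $F_1$ by construction. The distinctness hypothesis on (\ref{eigen1}) then forces these $m_k$ values $\Lambda_k$ to be genuinely new, i.e. to belong to the $k$-sector, which is part (1).

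Parts (2) and (3) are then a finite-dimensional $\mathfrak{U}_q(\mathfrak{sl}_2)$ highest-weight computation applied to each $b^{(k)}_{(N-k)k}$. Setting $b^{(k)}_{(N-m)m}:=E_1^{\,m-k}b^{(k)}_{(N-k)k}$, the relation $[F_1,E_1]=\frac{q^{H_1}-q^{-H_1}}{q-q^{-1}}$ (inherited from (\ref{10})) together with $F_1 b^{(k)}_{(N-k)k}=0$ and $q^{H_1}b^{(k)}_{(N-m)m}=q^{N-2m}b^{(k)}_{(N-m)m}$ gives, by induction on $m$, the identity $F_1 b^{(k)}_{(N-m-1)(m+1)}=\bigl(\sum_{i=0}^{m-k}[N-2k-2i]_q\bigr)b^{(k)}_{(N-m)m}$; the telescoping formula $\sum_{i=0}^{l-1}[\lambda-2i]_q=[l]_q[\lambda-l+1]_q$ with $l=m-k+1$ and $\lambda=N-2k$ collapses the coefficient to $\kappa_m^{(k)}=[N-k-m]_q[m-k+1]_q$, and reading off $q^{H_1}$ gives $a_m^{(k)}=q^{N-2m}$. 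Since $\kappa_{N-k}^{(k)}=[0]_q[1]_q=0$, the ladder terminates and $E_1 b^{(k)}_{k(N-k)}=0$, exactly as in the termination argument of Proposition \ref{systema}. These $N-2k+1$ states share the eigenvalue $\Lambda_k$ (as ${\cal H}$ commutes with $E_1$), have distinct $q^{H_1}$-weights hence are mutually orthogonal, and the displayed relations identify them as a canonical basis of the $d_{k,2}=N-2k+1$ dimensional irreducible $\mathfrak{U}_q(\mathfrak{gl}_2)$-module; orthogonality across the $m_k$ distinct values $\Lambda_k$ follows from Remark \ref{important} since ${\cal H}$ is real symmetric. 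This settles part (3).

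I expect the main obstacle to be part (1): making rigorous that the $m_k$ highest-weight eigenvectors are \emph{new}, i.e. tracking carefully how the distinctness hypothesis, applied only within the single space $W_k$, separates the $k$-sector eigenvalues from those inherited via $E_1$ from lower sectors, together with justifying the $\oplus$-decomposition and the weight-multiplicity count $\binom{N}{k}-\binom{N}{k-1}$. The $\mathfrak{U}_q(\mathfrak{sl}_2)$ calculations in parts (2)–(3) are mechanical by comparison; the only care needed there is the $q$-number telescoping and the sign and ordering conventions in $[F_1,E_1]$ as fixed by (\ref{10}).
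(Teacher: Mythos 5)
Your proposal is correct, and in parts (2) and (3) it coincides with the paper's own proof: the paper likewise sets $b^{(k)}_{(N-m)m}=E_1^{\,m-k}b^{(k)}_{(N-k)k}$, proves $F_1 b^{(k)}_{(N-m-1)(m+1)}=[N-k-m]_q[m-k+1]_q\,b^{(k)}_{(N-m)m}$ by induction on $m$ via $[F_1,E_1]=\frac{q^{H_1}-q^{-H_1}}{q-q^{-1}}$ (your telescoping sum is just that induction in closed form), terminates the string at $m=N-k$ by the standard finite-dimensionality argument it cites from Proposition \ref{systema}, and gets orthogonality from ${\cal H}$ being real symmetric together with the weight grading. (One slip: at $m=N-k$ your coefficient is $[0]_q[N-2k+1]_q$, not $[0]_q[1]_q$; both vanish, so nothing breaks.) The genuine divergence is in part (1). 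The paper argues spectrally and iteratively: it counts $\sum_{p=0}^{k-1}m_p=\binom{N}{k-1}$ eigenvalues inherited from lower sectors through the states $E_1^{\,k-p}b^{(p)}_{(N-p)p}$, deduces $m_k=\binom{N}{k}-\binom{N}{k-1}$ new eigenvalues $\Lambda_k$, and only afterwards proves $F_1b^{(k)}_{(N-k)k}=0$ by an eigenvalue-clash argument: $F_1b^{(k)}_{(N-k)k}\propto{\mathfrak b}_{(N-k+1)(k-1)}$ lies in a sector $p<k$ with $\Lambda_p\neq\Lambda_k$, hence must vanish. You instead manufacture the highest-weight vectors structurally from $W_k=\ker\bigl(F_1|_{W_k}\bigr)\oplus E_1(W_{k-1})$: since $\ker(F_1|_{W_k})$ is ${\cal H}$-invariant, its ${\cal H}$-eigenvectors are annihilated by $F_1$ by construction, and the distinctness hypothesis is used only to certify that the $m_k$ resulting eigenvalues are new, whereas the paper needs it already to establish the highest-weight property itself; this is a cleaner logical separation. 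The price is your appeal to complete reducibility of $\mathfrak{U}_q(\mathfrak{gl}_2)$-modules, an input the paper never invokes — but you can recover your decomposition at the paper's own elementary level: since $q=e^{\mu}$ is real, (\ref{delta}) gives $E_1=F_1^{T}$, so $W_k=\ker\bigl(F_1|_{W_k}\bigr)\oplus\mathrm{im}\bigl(E_1|_{W_{k-1}}\bigr)$ is just the orthogonal splitting of a real matrix into kernel and image of the transpose, and the injectivity of $E_1$ on $W_{k-1}$ needed for your count $\dim E_1(W_{k-1})=\binom{N}{k-1}$ follows from the positivity computation $\|E_1v\|^2=v^{T}F_1E_1v=\|F_1v\|^2+[N-2k+2]_q\|v\|^2>0$ for $0\neq v\in W_{k-1}$, $k\leq\frac{N}{2}$.
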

\begin{proof}
$ $
\begin{enumerate}
    \item We prove this statement by counting the eigenstates for the  eigenvalue problem (\ref{eigen1}) 
    for each $k\in {\mathbb Z}^+,$ $0\leq  k \leq \frac{N}{2}:$
    \begin{enumerate}
       \item From (\ref{eigen1}), for $k=0$:\\ 
       There is a $m_0=1$ eigenvalue denoted $\Lambda_0$ with eigenstate denoted $b^{(0)}_{N0},$ such that $F_1b^{(0)}_{N0} =0.$\\
       Also, define $b^{(0)}_{(N-m)m}:= E_1^{m}b^{(0)}_{N0},$ such that $E_1 b^{(0)}_{0N}=0$ (see also Proposition \ref{sypro}).
     
       \item From (\ref{eigen1}), for $k=1$:\\
       There is $m_0=1$ eigenvalue denoted $\Lambda _0$ with  corresponding eigenstate $b^{(0)}_{{(N-1)1}}:= E_1 b^{(0)}_{N0}.$\\
       There are $m_1=N-1$ eigenvalues denoted $\Lambda_1$ with corresponding eigenstates $b^{(1)}_{(N-1)1},$ such that $F_1 b^{(1)}_{(N-1)1} =0.$
       Also, define $b^{(1)}_{(N-m)m}:= E_1^{m-1}b^{(1)}_{(N-1)1},$ such that $E_1 b^{(1)}_{1(N-1)}=0$ (see Proposition \ref{systema}).
       
       \item We continue by iteration, for any $k$ in equation (\ref{eigen1}):\\
       There are $m_p$ eigenvalues denoted $\Lambda_p,$ $0\leq p \leq k-1,$ with corresponding eigenstates $b^{(p)}_{(N-k)k} := E_1^{k-p}b^{(p)}_{(N-p)p},$ such that $F_1 b^{(p)}_{(N-p)p} =0.$ 
    \end{enumerate}
We will first show that there exist $m_k$ eigenvalues denoted $\Lambda_k$ with corresponding eigenstates denoted $b^{(k)}_{(N-k)k}$ that satisfy $F_1 b^{(k)}_{N-k} =0.$
We count the number of all eigenvalues obtained from all previous sectors, i.e. sum up the number of eigenvalues coming from all previous sectors $0\leq p\leq k-1$
\begin{equation}
\sum_{0\leq p\leq k-1} m_p  = \sum_{0\leq p \leq k-1} \frac{N!}{p!(N-p+1)!}(N-2p+1) = \binom{N}{k-1}. \nonumber
\end{equation}
The eigenvalue problem (\ref{eigen1}) yields $\binom{N}{k}$ eigenvalues (see also Remark \ref{remark22}), hence there are indeed $m_k = \binom{N}{k} - \binom{N}{k-1}$ more eigenvalues denoted $\Lambda_k,$ different to the eigenvalues of all the previous sectors, and with corresponding eigenstates denoted $b^{(k)}_{(N-k)k}.$ 

We will now show that $F_1 b^{(k)}_{(N-k)k} =0;$ this can be shown by using symmetry arguments as in Proposition \ref{systema}.
Due to the $\mathfrak{U}_q(\mathfrak{gl}_n)$ symmetry of the Hamiltonian, we deduce that ${\cal H} F_1b^{(k)}_{(N-k)k} = \Lambda_k  F_1b^{(k)}_{(N-k)k}.$ From the definition of $F_1$ (\ref{delta}), it follows that
$F_1 b^{(k)}_{(N-k)k} \propto {\mathfrak b}_{(N-k+1)(k-1)}.$ However, a state ${\mathfrak b}_{(N-k+1)(k-1)}$ belongs to the $p<k$ sector with eigenvalue $\Lambda_p \neq \Lambda_k,$ hence $F_1b^{(k)}_{(N-k)k}=0.$

\item Recall $b^{(k)}_{(N-m)m} = E_1^{m-k} b^{(k)}_{(N-k)k},$  $m>k,$ we then show by induction that (see also Proposition \ref{systema}, the proof of part 2),
  \begin{equation}
  F_1 b^{(k)}_{(N-m-1)(m+1)} =[N-k-m]_q[m-k+1]_q b^{(k)}_{(N-m)m}. \nonumber
  \end{equation}
  Notice also from the equation above that for $m= N-k$ the sequence of states $b^{(k)}_{(N-m)m},$ $m\geq k$ terminates, hence $E_1b^{(k)}_{k(N-k)}=0$ (see a similar argument in the proof of Proposition \ref{systema}).

Also, given the structure of a state $b^{(k)}_{(N-m)m}$ and the definitions in (\ref{delta}) we show that  $q^{{\cal E}_1}b^{(k)}_{(N-m)m} = q^{N-m}b^{(k)}_{(N-m)m},$ $q^{{\cal E}_2}b^{(k)}_{(N-m)m} = q^{m}b^{(k)}_{(N-m)m}$ and $q^{H_1}b^{(k)}_{(N-m)m} = q^{N-2m}b^{(k)}_{(N-m)m},$ $k\leq m \leq  N-k.$ 
\item  The set of eigenstates ${\mathbb B}_k = \big  \{\hat b^{(k)}_{(N-m)m} \big  \},$ $k \leq m \leq N-k$ is an orthonormal basis of a $d_{k,2}=N-2k+1$ dimensional vector space denoted $V_2^{(\Lambda_k)}.$  We also collect the results of part (2) and conclude that ${\mathbb B}_k$ is also a canonical basis of the $d_{k,2}$ dimensional irreducible representation of $ \mathfrak{U}_q(\mathfrak{gl}_2):$ 
$F_1 b^{(k)}_{(N-k)k} =E_1b^{(k)}_{k(N-k)} =0$  and
       \begin{eqnarray}
&& E_1 b^{(k)}_{(N-m)m} = b^{(k)}_{(N-m-1) (m+1)}, \nonumber\\ 
&& F_1  b^{(k)}_{(N-m-1)(m+1)} = \kappa^{(k)}_{m} b^{(k)}_{(N-m) m}, \nonumber \\ 
&& q^{H_1}b^{(k)}_{(N-m) m} = a^{(k)}_m b^{(k)}_{(N-m)m},   \nonumber 
\end{eqnarray}
 where $\kappa^{(k)}_m = [N-k-m]_q[m-k+1]_q,$ $k \leq m \leq N-k-1$ and $a^{(k)}_m = q^{N-2m},$ $k\leq m \leq N-k.$
    \qedhere
    \end{enumerate}
\end{proof}
The normalized eigenstates $\hat b^{(k)}_{(N-m)m} =\frac{b^{(k)}_{(N-m)m}}{||b^{(k)}_{(N-m)m}||},$ $k\leq m \leq N-k$ constitute a canonical basis of the  $N-2k+1$ irreducible representation of $\mathfrak{U}_q(\mathfrak{gl}_2)$ (see also Remark \ref{norm1}):
\begin{eqnarray}
E_1 \hat b^{(k)}_{(N-m)m} =c^{(k)}_{m} \hat  b^{(k)}_{(N-m-1) (m+1)}, \quad F_1  \hat b^{(k)}_{(N-m-1)(m+1)} = c^{(k)}_{m} \hat b^{(k)}_{(N-m) m}, \quad q^{H_1} \hat b^{(k)}_{(N-m) m} = a^{(k)}_m {\hat b}^{(k)}_{(N-m)m},   \nonumber 
\end{eqnarray}
 where $c^{(k)}_m = \sqrt{[N-k-m]_q[m-k+1]_q},$ $k \leq m \leq N-k-1$ and $a^{(k)}_m = q^{N-2m},$ $k\leq m \leq N-k.$

We summarize in the following table (and the comments under the table) the main results from Theorem (\ref{thetheorem}) 
restricted to the $\mathfrak{U}_q(\mathfrak{gl}_2)$ section of the symmetry for $N$ even. 
\begin{center}
\begin{tabular}{ |c|c|c|c|c|c| } 
\hline
$(\Lambda_0, m_0)$ & $(\Lambda_1, m_1)$ &  $\ldots \ldots$  & $(\Lambda_k, m_k)$ & \ldots \ldots & $(\Lambda_{\frac{N}{2}}, m_{\frac{N}{2}})$\\
\hline
\multirow{3}{2em}{\ $b^{(0)}_{N0}$   \\ \ \\ \ \\ \  \\ \ } 
  &  &  & &  &\\ 
$b^{(0)}_{(N-1)1}$ & $b^{(1)}_{(N-1)1}$ &  &  & & \\ 
$\vdots$ & $\vdots$ & & & &\\
 $b^{(0)}_{(N-k)k}$ & $b^{(1)}_{(N-k)k}$ &  &  $b^{(k)}_{(N-k)k}$ & & \\
$\vdots$& $\vdots$ &   & \vdots  & & \\
$b^{(0)}_{\frac{N}{2}\frac{N}{2}}$ & $b^{(1)}_{\frac{N}{2}\frac{N}{2}}$ & $\vdots$ & $b^{(k)}_{\frac{N}{2}\frac{N}{2}}$ & \vdots & $b^{(\frac{N}{2})}_{\frac{N}{2}\frac{N}{2}}$ \\
$\vdots$ & $\vdots$ & & $\vdots$ &  &  \\
$b^{(0)}_{k(N-k)}$& $b^{(1)}_{k(N-k)}$ & & $b^{(k)}_{k(N-k)}$ & & \\
$\vdots$& $\vdots$ & &  &  &\\
$b^{(0)}_{1(N-1)}$& $b^{(1)}_{1(N-1)}$ &  &  && \\
$b^{(0)}_{0N}$&  & & & & \\\hline
\end{tabular}
\end{center}
\begin{center}
{Table 0}
\end{center}
The table above shows the canonical bases associated with each $N-2k+1$ dimensional irreducible representation of ${\mathfrak U}_q(\mathfrak{gl}_2).$
Specifically, the $k^{th}$ column in the table corresponds to the $k$-sector with $m_k$ eigenvalues denoted $\Lambda_k,$ as indicated on the top of each column. More precisely, we define the set of eigenvalues in the $k$-sector as $Eigen_{k}:= \big \{\Lambda_k^{(1)}, \Lambda_k^{(2)}, \ldots, \Lambda_k^{(m_k)}\big \} .$ Each eigenvalue $\Lambda^{(i)}_k,$ $1\leq i \leq m_k,$ has $d_{k,2} = N-2k+1$ eigenstates denoted $b^{(k,i)}_{(N-m),m},$ $k\leq m \leq N-k,$ such that $F_1 b^{(k,i)}_{(N-k),k} =0$ and $E_1b^{(k,i)}_{k(N-k)}=0,$ that form an orthogonal basis for a $d_{k,2}$ dimensional vector space. There are then $m_k$ orthogonal vector spaces of dimension $N-2k+1$ in the $k$-sector.
The well known fact, $$\underset{1\leq k \leq \frac{N}{2}}{\sum} m_k\ d_{k,2}  =  \underset{1\leq k \leq \frac{N}{2}}{\sum}\frac{N!}{k! (N-k+1)!} (N-2k+1)^2= 2^N$$ is also confirmed and indeed, if $n=2$ Table 0 provides the complete sets of eigenvalue and eigenstates. In general, for $n>2$ there are more eigenstates to each $k$-sector, beyond the $\mathfrak{U}_q(\mathfrak{gl}_2) \subseteq  \mathfrak{U}_q(\mathfrak{gl}_n)$ section generated by $\big \{E_1, F_1, q^{H_1} \big\},$ but this analysis will be undertaken in a future work. We note that if $N$ is odd then Table 0 is exactly the same except the last column, which has $m_{\frac{N-1}{2}}$ eigenvalues denoted $\Lambda_{\frac{N-1}{2}},$ with two  corresponding eigenstates,  $b^{(\frac{N-1}{2})}_{\frac{N+1}{2}\frac{N-1}{2}}$ and $b^{(\frac{N-1}{2})}_{\frac{N-1}{2}\frac{N+1}{2}}.$

We work out a few typical examples to illustrate the logic of the construction described in Theorem \ref{thetheorem}.
\begin{exa} \label{spectrum1}
Consider $N=2$ and any $n\in {\mathbb Z},$ $n>1$ (i.e. $X =\big \{x_1, x_2, \ldots, x_n\big \},$ $x_1<x_2 \ldots <x_n$). This is the simplest scenario 
${\cal H} = r_1.$ Then the eigenstates are (we only write the non-zero components $1\leq k_j \leq N,$ in $b^{(k)}_{k_1k_2\ldots k_n},$ see also Example \ref{exam2}):
\begin{itemize}
    \item For eigenvalue $\Lambda_0 =1$ ($0$-sector) the normalized eigenstates are ($x_i, x_j \in X$)
    \begin{eqnarray}
    &&\hat b^{(0)}_{{1}_i{1}_j} = \frac{1}{\sqrt{1+q^2}}( x_i \otimes x_j + q\ x_j \otimes x_i), ~~x_i < x_j, ~~~\mbox{and} ~~~~
    \hat b^{(0)}_{{2}_j}= x_j \otimes x_j. \label{symm3}
    \end{eqnarray}    
    The states (\ref{symm3}) are the $q$-symmetric states for $N=2$ and they form an orthonormal basis of an $\frac{n(n+1)}{2}$ dimensional vector space, Proposition \ref{sypro}.
\item For eigenvalue {$\Lambda_1 =-q^{-2}$} ($1$-sector),
\begin{equation}
\hat b^{(1)}_{{1}_i{1}_j} = \frac{1}{\sqrt{1+q^{-2}}} (x_i\otimes x_j - q^{-1} x_j \otimes x_i), ~~~x_i<x_j. \label{anti3}
\end{equation}
\end{itemize}
The states (\ref{anti3}) form an orthonormal basis for an $\frac{n(n-1)}{2}$ dimensional vector space.
I.e. \[V_n^{\otimes 2} = V_{\frac{n(n+1)}{2}}^{(\Lambda_0)} \oplus V_{\frac{n(n-1)}{2}}^{(\Lambda_1)} ~~\mbox{or schematically} ~~~  \begin{ytableau}
 \ 
\end{ytableau}\ \otimes\ \begin{ytableau}
 \ 
\end{ytableau} = \begin{ytableau}
 \ &   
\end{ytableau}\  \oplus \ \begin{ytableau}
 \    \\
 \
\end{ytableau}. \]
Also, the states (\ref{symm3}) and (\ref{anti3})form canonical bases corresponding to an $\frac{n(n+1)}{2}$ and an $\frac{n(n-1)}{2}$ representations of $\mathfrak{U}_q(\mathfrak{gl}_n)$ respectively (see also Propositions \ref{sypro} and \ref{systema}). 


{In the crystal limit $q \to 0:$ }
\begin{itemize}
\item  $\hat b^{(0)}_{{1}_i{1}_j} \to x_i \otimes x_j,$ $x_i< x_j$ 

\item $\hat b^{(1)}_{{1}_i{1}_j} \to x_j \otimes x_i$ (up to an overall minus sign) $x_i <x_j.$  

\end{itemize}
    \end{exa}

    \begin{exa} \label{exam3}
Consider $N=3,$ the three eigenvalues from the solution of the system (\ref{system}) are, $\Lambda_0= 2,$ $\Lambda^{(1)}_{1} = 1 - q^{-1} -q^{-2}$ and $\Lambda^{(2)}_{1} = 1 + q^{-1} - q^{-2}.$ $\Lambda_0$ belongs to the 0-sector, whereas $\Lambda^{(1)}_{1}$ and $\Lambda^{(2)}_{1}$ belong to the 1-sector.
\begin{itemize}
\item For $\Lambda_0 =2,$ one obtains all the symmetric states for the $\mathfrak{U}_q(\mathfrak{gl}_2)$ section, see Proposition \ref{sypro}:
\begin{eqnarray}
&& \hat b^{(0)}_{30} = x_1 \otimes x_1 \otimes x_1, \qquad\hat b^{(0)}_{03} = x_2 \otimes x_2 \otimes x_2, \nonumber \\
&& \hat b^{(0)}_{21} = \frac{1}{\sqrt{1 +q^2 + q^4}} \big ( x_1 \otimes x_1 \otimes x_2 + q x_1 \otimes x_2 \otimes x_1 +q^2 x_2 \otimes x_1 \otimes x_1\big ) \nonumber \\
&& \hat b^{(0)}_{12} = \frac{1}{\sqrt{1 +q^2 + q^4}} \big ( x_1 \otimes x_2 \otimes x_2 + q x_2 \otimes x_1 \otimes x_2 +q^2 x_2 \otimes x_2 \otimes x_1\big ). \nonumber
\end{eqnarray}
The eigenstates $\big \{\hat b^{(0)}_{30},\hat b^{(0)}_{21}, \hat b^{(0)}_{12}, \hat b^{(0)}_{03}\big \}$ form a  basis for the 4-dimensional representation $\mathfrak{U}_q(\mathfrak{gl}_2):$ $F_1 \hat b^{(0)}_{30} = E_1 \hat b^{(0)}_{03}=0,$ and
\begin{eqnarray}
    && E_1 \hat b^{(0)}_{30} = \sqrt{[3]_q}\hat b^{(0)}_{21}, \quad E_1 \hat b^{(0)}_{21} = [2]_q\hat b^{(0)}_{12}, \quad E_1 \hat b^{(0)}_{12} =  \sqrt{[3]_q}\hat b^{(0)}_{03} \nonumber\\
    &&  F_1\hat b^{(0)}_{21} =\sqrt{[3]_q}\hat b^{(0)}_{30}, \quad F_1\hat b^{(0)}_{12} =[2]_q\hat b^{(0)}_{21}, \quad  F_1\hat b^{(0)}_{03} =\sqrt{[3]_q}\hat b^{(0)}_{12} \nonumber\\
    && q^{H_1}\hat b^{(0)}_{30}= q^{3}\hat b^{(0)}_{30}, \quad q^{H_1}\hat b^{(0)}_{21}= q\hat b^{(0)}_{21},\quad  q^{H_1}\hat b^{(0)}_{12}= q^{-1}\hat b^{(0)}_{12}, \quad q^{H_1}\hat b^{(0)}_{03}= q^{-3}\hat b^{(0)}_{03}. \nonumber
    \end{eqnarray}
All symmetric states $\big \{ \hat b^{(0)}_{k_1k_2\ldots k_n}\big \},$ $k_j \in \{0, 1, 2, 3\}$ as derived in Theorem \ref{basicpro} and Proposition \ref{basiclemma} provide a basis of the $\frac{n(n+1)(n+2)}{6}$ dimensional irreducible representation of $\mathfrak{U}_q(\mathfrak{gl}_n).$

\item For $\Lambda_{1}^{(1)}  = 1 -q^{-1} -q^{-2},$ we find the eigenstates, restricted in the $\mathfrak{U}_q(\mathfrak{gl}_2)$ section:
\begin{eqnarray}
&& \hat b^{(1,1)}_{21} =  
\frac{1}{\sqrt{1 +q^2 + (1+q)^2}}\big(x_2 \otimes x_1 \otimes x_1 -(1+q) x_1 \otimes x_2 \otimes x_1 + q x_1 \otimes x_1 \otimes x_2\big),\nonumber\\ 
&& \hat b^{(1,1)}_{12} =  \frac{1}{\sqrt{1 +q^2 + (1+q)^2}}\big(-x_2 \otimes x_2 \otimes x_1 +(1+q) x_2 \otimes x_1 \otimes x_2 -q x_1 \otimes x_2 \otimes x_2\big). \nonumber
\end{eqnarray}

\item For $\Lambda_{1}^{(2)} = 1+q^{-1} -q^{-2},$ the corresponding eigenstates, restricted in the $\mathfrak{U}_q(\mathfrak{gl}_2)$ section, are
\begin{eqnarray}
&& \hat b^{(1,2)}_{21} =  \frac{1}{\sqrt{1 +q^2 + (1-q)^2}}\big(x_2 \otimes x_1 \otimes x_1 +(1-q) x_1 \otimes x_2 \otimes x_1 -q x_1 \otimes x_1 \otimes x_2\big), \nonumber\\
&& \hat b^{(1,2)}_{12} =  \frac{1}{\sqrt{1 +q^2 + (1-q)^2}}\big(x_2 \otimes x_2 \otimes x_1 +(1-q) x_2 \otimes x_1 \otimes x_2 -q x_1 \otimes x_2 \otimes x_2\big). \nonumber 
\end{eqnarray}
The eigenstates $\big \{\hat b^{(1,j)}_{21},\hat b^{(1,j)}_{12}\big \},$ $j \in \big \{1, 2\big \},$ form an orthogonal basis for the two-dimensional representation of $\mathfrak{U}_q(\mathfrak{gl}_2):$ $F_1 \hat b^{(1,j)}_{21} = E_1\hat b^{(1,j)}_{12}=0$ and
\begin{eqnarray}
   &&  E_1\hat b^{(1,j)}_{21} = \hat b^{(1,j)}_{12},   \quad  F_1\hat b^{(1,j)}_{12} = \hat b^{(1,j)}_{21},\quad   q^{H_1}b^{(1,j)}_{21} =q b^{(1,j)}_{21},  \quad q^{H_1}b^{(1,j)}_{12} =q^{-1} b^{(1,j)}_{12}.  \nonumber 
   \end{eqnarray}

\end{itemize}
\end{exa}
The crystal limit together with the $\mathfrak{U}_q(\mathfrak{gl}_n)$ generalization will be studied in a forthcoming article (see also relevant results in Propositions \ref{basiclemma} and \ref{sypro}).

\section{Combinatorial representations of the braid group}
\noindent In this section, we focus on combinatorial or set-theoretic solutions of the braid equation \cite{Drin, Eti, Sol, Gat1, Ru05, Ru07, JeOk, GuaVen}. It has been shown that all involutive combinatorial solutions of the braid equation can be obtained from the permutation operator via an admissible Drinfel'd twist (essentially a similarity transformation) \cite{Sol, Doikoutw, LebVen}. Consequently, the corresponding combinatorial automata are isomorphic to the \(q\)-permutation automaton at \(q=1\). We therefore turn our focus to non-involutive solutions of the braid equation, introducing certain algebraic structures that satisfy a self-distributivity condition and yield non-involutive combinatorial solutions \cite{Deho}. Specifically, we examine the dihedral quandle—a self-distributive structure—and analyze the eigenvalue problem of its corresponding braid equation solution. To reveal the structure of the associated eigenstates, we define rack and quandle automata, focusing primarily on the dihedral quandle automaton. Finally, we present preliminary results on the finite representations of the centralizers for the dihedral quandle solution.

\subsection{Self-distributive structures and braid representations}
Self-distributive structures, such as shelves, racks $\&$ quandles \cite{Jo82, Matv, Deho} 
satisfy axioms analogous to the Reidemeister moves used to manipulate knot diagrams and are associated with link invariants (see also biracks, biquandles). For recent reviews on self-distributive structures the interested reader is referred to \cite{Rev1, Rev2, Rev3}. 

\begin{defn}
    Let $X$ be a non-empty set and\, $\triangleright$\, a binary operation on $X$. Then, the pair $\left(X,\,\triangleright\right)$ 
is said to be a \emph{left shelf} if\, $\triangleright$\, is left self-distributive, namely, the identity
    \begin{equation}\label{eq:shelf}
        a\triangleright\left(b\triangleright c\right)
        = \left(a\triangleright b\right)\triangleright\left(a\triangleright c\right) 
    \end{equation}
    is satisfied, for all $a,b,c\in X$. Moreover, a left shelf $\left(X,\,\triangleright\right)$ is called 
    \begin{enumerate}
        \item  a \emph{left spindle} if $a\triangleright a = a$, for all $a\in X$;
        \item  a \emph{left rack} if $\left(X,\,\triangleright\right)$ is a \emph{left quasigroup}, i.e., the maps $L_a:X\to X$ defined by $L_a\left(b\right):= a\triangleright b$, 
        for all $b\in X$, are bijective, for every $a\in X$.
        \item a \emph{quandle} if $\left(X,\,\triangleright\right)$ is both a left spindle and a left rack.
    \end{enumerate}
\end{defn}

We are mostly interested in racks and quandles here, given that we always require invertible solutions of
the Yang-Baxter equation. We provide below some fundamental known cases of quandles and racks (see also \cite{Rev1, Rev2, Rev3}):
\begin{enumerate}[{(a)}]
   \item {{\bf Conjugate quandle.} }
   Let {$(X, \cdot)$} be a group and define $\triangleright: X \times X \to X,$ 
such that {$a\triangleright b =a^{-1} \cdot b \cdot a.$} Then $(X, \triangleright)$ is a quandle.    
    
    \item {\bf Core quandle.} Let {$(X, \cdot)$} be a group and $\triangleright: X \times X \to X,$ 
such that {$a\triangleright b =a \cdot b^{-1} \cdot a.$} Then $(X, \triangleright)$ is a quandle.

\item {\bf Alexander (affine) quandle.} Let $Q$ be a ${\mathbb Z}[t, t^{-1}]$ ring module and 
$\triangleright: Q \times Q \to Q,$ $a\triangleright b = (1-t)a +bt,$ then $(Q, \triangleright)$ is a quandle. 

    \item {\bf Rack, but not quandle.} Let $(G,\cdot)$ be a group and define 
    $\triangleright: G \times G \to G,$ such that $a\triangleright b = b\cdot a^{-1} \cdot x \cdot a,$ 
    where $x\in G$ is fixed. Then $(G, \triangleright)$ is a rack, but not a quandle.   \end{enumerate}

We also present below some concrete examples of finite quandles:
\begin{exa}  \label{exx}

$ $
    \begin{enumerate}
    \item {\bf The dihedral quandle.}
     Let $i,j \in X = {\mathbb Z}_n$ and define $\triangleright: X \times X \to X,$ such that $i\triangleright j = 2i -j$ $mod n,$ then  $(X, \triangleright)$ is a quandle. 
     This is a core quandle with an abelian group. An explicit table of the action $\triangleright$ is presented below for $n=3$
and $X = \big \{x_1=0, x_2=1, x_3=2 \big \}:$
\begin{center}
\begin{tabular}{ |c|c|c|c| } 
\hline
$\triangleright$ & $x_1$ &  $x_2$  & $x_3$\\
\hline
\multirow{3}{2em}{\ $x_1$ \\ \ $x_2$\\ \ $x_3$ } 
& $x_1$ & $x_3$ & $x_2$ \\ 
& $x_3$ & $x_2$ & $x_1$  \\ 
& $x_2$ & $x_1$ & $x_3$ \\
\hline
\end{tabular}
\end{center}
\begin{center}
{Table 1}
\end{center}
    \item {\bf The tetrahedron quandle.} Let $X = \big \{1,2,3,4\big \}$ and define $\triangleright: X \times X \to X,$ 
    such that $1\triangleright = (234),$ $2\triangleright = (143),$ $3\triangleright = (124)$ and $4\triangleright =(132).$ 
    This is also a cyclic quandle.  We construct below the explicit table of the action  $\triangleright:$
\begin{center}
\begin{tabular}{ |c|c|c|c|c| } 
\hline
$\triangleright$ & $x_1$ &  $x_2$  & $x_3$ & $x_4$ \\
\hline
\multirow{3}{2em}{\ $x_1$ \\ \ $x_2$\\ \ $x_3$ \\ \ $x_4$ } 
& $x_1$ & $x_3$ & $x_4$ & $x_2$\\ 
& $x_4$ & $x_2$ & $x_1$ & $x_3$\\ 
& $x_2$ & $x_4$ & $x_3$  & $x_1$ \\
& $x_3$ & $x_1$ & $x_2$  & $x_4$ \\
\hline
\end{tabular}
\end{center}
\begin{center}
{Table 2}
\end{center}
\end{enumerate}
\end{exa}

We recall now a fundamental statement regarding shelves and solutions of the set-theoretic Yang-Baxter equation.
\begin{pro} \label{shelf2}
We define 
the binary operation $\triangleright:  X \times X \to X,$ $(a,b) \mapsto a \triangleright b.$ 
Then $r: X \times X \to X \times X$, such that for all $a,b \in X, $  $r(a,b) = (b, b \triangleright a)$ 
is a solution of the set-theoretic braid equation if and only if $(X, \triangleright)$ is a shelf.
\end{pro}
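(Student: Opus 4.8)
The plan is to verify the set-theoretic braid equation
\[
(r\times\id)(\id\times r)(r\times\id)=(\id\times r)(r\times\id)(\id\times r)
\]
on $X\times X\times X$ by a direct computation, tracking a general triple $(a,b,c)$ through both composite maps and comparing the three output coordinates. Writing $r_{12}:=r\times\id$ and $r_{23}:=\id\times r$, the stated definition $r(a,b)=(b,\,b\triangleright a)$ gives $r_{12}(a,b,c)=(b,\,b\triangleright a,\,c)$ and $r_{23}(a,b,c)=(a,\,c,\,c\triangleright b)$. Everything in the proof is the bookkeeping of composing these two elementary maps in the two prescribed orders.

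First I would compute the left-hand side. Applying $r_{12}$, then $r_{23}$, then $r_{12}$ to $(a,b,c)$ yields in turn $(b,\,b\triangleright a,\,c)$, then $(b,\,c,\,c\triangleright(b\triangleright a))$, then $(c,\,c\triangleright b,\,c\triangleright(b\triangleright a))$. For the right-hand side, applying $r_{23}$, then $r_{12}$, then $r_{23}$ produces $(a,\,c,\,c\triangleright b)$, then $(c,\,c\triangleright a,\,c\triangleright b)$, then $(c,\,c\triangleright b,\,(c\triangleright b)\triangleright(c\triangleright a))$. Comparing the two outputs, the first and second coordinates agree identically (both equal $c$ and $c\triangleright b$), so the braid equation holds precisely when the third coordinates coincide for all $a,b,c\in X$, that is, when
\[
c\triangleright(b\triangleright a)=(c\triangleright b)\triangleright(c\triangleright a).
\]
This is exactly the left self-distributivity axiom \eqref{eq:shelf} (after relabeling the variables), which by definition says that $(X,\triangleright)$ is a shelf. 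This single identity delivers both implications simultaneously: if $(X,\triangleright)$ is a shelf the third coordinates match and $r$ solves the braid equation, while conversely, since the identity must hold for \emph{every} triple $(a,b,c)$, any solution forces self-distributivity.

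There is no genuine obstacle here beyond careful bookkeeping; the entire content of the statement is the exact correspondence between the third-coordinate identity and the shelf axiom, with the first two coordinates matching automatically. The one point that needs attention is fixing the composition order and the action conventions for $r_{12}$ and $r_{23}$ consistently throughout, since a different reading convention for the composites, or a different sidedness in the definition of $r$, would permute which variable plays which role and hence produce a left- versus right-self-distributive condition. With the convention $r(a,b)=(b,\,b\triangleright a)$ adopted in the statement, the resulting identity lands exactly on the left self-distributivity \eqref{eq:shelf}, completing the equivalence.
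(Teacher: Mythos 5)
Your proof is correct and takes exactly the route the paper intends: the paper's proof is the one-line remark that the result follows ``by direct substitution in the Yang--Baxter equation and comparison between LHS and RHS,'' and your computation simply carries out that substitution explicitly, correctly reducing the braid equation to the single third-coordinate identity $c\triangleright(b\triangleright a)=(c\triangleright b)\triangleright(c\triangleright a)$, which is the left self-distributivity axiom up to relabeling. Your closing remark about fixing the composition order and sidedness conventions is apt, since that is the only place such a verification can go wrong.
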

\begin{proof} 
The proof is straightforward by direct substitution in the Yang-Baxter equation and comparison between 
LHS and RHS.
\end{proof}
If $r: X \times X \to X \times X$, such that for all $a,b \in X, $  $r(a,b) = (b, b \triangleright a)$ 
is an invertible braid solution then $(X, \triangleright)$ is a rack (or a quandle).

The graphical representation of the shelve solution $r(a,b) = (b, b\triangleright a)$:
\begin{figure}[ht]
\centering
\begin{tikzpicture}[scale=1.3]
\draw [rounded corners](0,0)--(0,0.25)--(1,0.75)--(1,1);
\draw [rounded corners](1,0)--(1,0.25)--(0.55,0.4);
\draw [rounded corners](0,1)--(0,0.75)--(0.35,0.55);
\node  at (0,-0.2)  {$b$};
\node  at (1,-0.2)  {$b \triangleright a$};
\node  at (-.1,1) [above] {$a$};
\node  at (1.1,1) [above] {$b$};
\end{tikzpicture}
\end{figure}

We are interested here in invertible solutions of the braid equation, so we are focusing on rack solutions.
We note that the inverse of $r$ above is $r^{-1}: X \times X \to X \times X,$ 
$r^{-1} (a,b) = (a \triangleright^{-1} b, a),$  such that $a\triangleright(a\triangleright^{-1}b) = a \triangleright^{-1}(a\triangleright b)=b$ for all $a,b \in X.$ Notice also that a different map denoted $r': X\times X \to X\times X,$ such that
$r'(a,b) = (a \triangleright b, a)$ is also a solution of the braid equation.

\subsection{Self-distributive automata}
We consider tensor representations of the braid group\\ $\rho: B_N \to \End\big (({\mathbb C}^n)^{\otimes N}\big ),$ such that 
\begin{equation}
\rho(\sigma_i) :=  r_{i} = 1_n\otimes \ldots \otimes 1_n \otimes  \underbrace{r}_{i, i+1~~ \mbox{positions}} \otimes 1_n  \ldots \otimes 1_n \label{frep2}
\end{equation} 
where $r$ is the linearized version of the rack or quandle solution and is expressed as an $n^2 \times n^2$ matrix, such that
\begin{equation}
r\ a\otimes b = b \otimes b\triangleright a \label{rr23}
\end{equation}
where $(X, \triangleright)$ is either a rack or a quandle. Recall the simplified notation introduced earlier in the manuscript, $a\otimes b : = \hat e_a \otimes \hat e_b,$ $a,b \in X.$

Based on (\ref{rr23}), we define next the rack and quandle automaton. We have to focus on specific examples of quandles in order to examine the corresponding automata. The type of each automaton is characterized by the dimension $n$ as well as the order of the matrix $r.$ Indeed, see below the tree order diagrams for $N=2,$ in the case where the order of $r$ is $k$ ($r^k =1$):
\begin{center}
\begin{tikzpicture}[shorten >=1pt,node distance=2.0cm,on grid,auto] 
   \node[] (q_1)   {$1$}; 
   \node[] (q_2) [right=of q_1] {$r$};
    \node[] (q_3) [right=of q_2] {$r^2$};
 \node[] (q_4) [right=of q_3] {$\ldots$};
     \node[] (q_n) [right=of q_4] {$r^{k-1}$};    
     \path[->]
  (q_1) edge[left, below]  (q_2)
  (q_2) edge[left, below]  (q_3)
  (q_3) edge[left, below]  (q_4)
  (q_4) edge[left, below]  (q_n) ;    
  \end{tikzpicture}    
    \end{center}
As we have seen already in the previous sections braided automata provide clusters of eigenstates of linear combinations of all length-one words of the Hecke algebra ${\cal H}_N(q).$ Analogously, rack or quandle automata should provide again clusters of eigenstates of linear combinations of all length-one words of the braid group $B_N.$ Here we only focus on the eigenvalue problem of $r$, i.e.  we focus on the braid group for $N=2.$ 

\begin{defn} \label{auto3} {\bf (Rack $\&$ quandle automaton).}   

\begin{enumerate}
\item Let the set of states be $Q = {\mathbb B}^{\otimes N}_n,$ (see Remark \ref{r5}, i.e. $Q$ consist of $n^N$ states). 

\item Let the alphabet be  $\Sigma : =\big \{s_1, s_2, \ldots, s_{N-1}\big \}$ and the transition matrices are $r_i,$ $ i \in [N-1]$ and are given by the tensorial representation of the braid group $\rho: B_N \to \End\big (({\mathbb C}^n)^{\otimes N}\big ),$ such that
$\sigma_j \mapsto  r_j = 1_n\otimes \ldots \otimes 1_n \otimes  \underbrace{r}_{j, j+1 ~~ \mbox{positions}} \otimes 1_n  \ldots \otimes 1_n,$ where 
for all $a,b \in X,$ $r\ a \otimes b = b \otimes b \triangleright a$ and  $(X, \triangleright)$ is a rack or a quandle.
\end{enumerate}
This is a braided automaton called a rack or quandle automaton. 
\end{defn}

In the case where $(X, \triangleright)$ is a shelve, then the automaton above is called {\it shelve} automaton.
Self-distributive automata are naturally combinatorial automata given that
\begin{equation}
r_j\ x_{i_1} \ldots \otimes  \underbrace{x_{i_j} \otimes x_{i_{j+1}}}_{j, j+1\ positions}   \ldots \otimes x_{i_N} =  x_{i_1}  \ldots  \otimes \underbrace{x_{i_{j+1}} \otimes x_{i_{j+1}}\triangleright x_{i_j}}_{j, j+1\ positions} \ldots \otimes x_{i_N}. 
\end{equation}

We describe in detail the dihedral quandle automaton based on Examples \ref{exx}.
We focus on the dihedral quandle and recall that for all $x,y \in {\mathbb Z}_n$ 
\begin{equation}
r\ x\otimes y = y \otimes  y\triangleright x,~~~ x\triangleright y = 2x -y, \label{rr234}
\end{equation}
where  addition and subtraction are defined $modn.$  We focus on $N=2,$ $\Sigma = \big \{ s \big \}$ and the transition matrix $r$ is given in (\ref{rr234}). 
Then the corresponding quandle automaton is depicted below (Figure 8).
\begin{center}
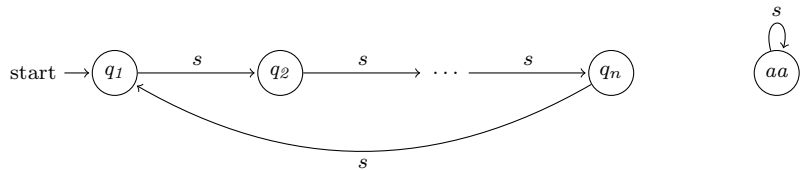
\begin{figure}[ht]
\footnotesize
\begin{tikzpicture}[shorten >=1pt,node distance=2.2cm,on grid,auto] 
\node[state, inner sep=0, minimum size=2.0em, initial] (q1) {${\it q_1}$};
\node[state,inner sep=0, minimum size=2.0em, right of=q1] (q2) {${\it q_2}$};
\node[inner sep=0, minimum size=2.0em, right of=q2] (q3) {$\ldots$};
\node[state,inner sep=0, minimum size=2.0em, right of=q3] (q3b) {$q_n$};
\node[state,inner sep=0, minimum size=2.0em, right of=q3b] (q4) {${\it aa}$};
\path[->] 
(q1) edge[left, above] node{$s$} (q2)
(q2) edge[left, above] node{$s$} (q3)
(q3) edge[left, above] node{$s$} (q3b)
(q3b) edge[bend left, below] node{$s$} (q1)
(q4) edge[loop above] node{$s$} (q4);
\end{tikzpicture}
\caption{Dihedral quandle automaton} \label{dihb}
\end{figure}
\end{center}
$q_1\in \big \{x_1\otimes x_2, x_1\otimes x_3, \ldots, x_1\otimes x_n\big \},$ that is we have $n-1$ such disconnected graphs, and $n$ diagrams of the type on the right in Figure 8, ${\it a} \in {\mathbb Z}_n$ $(x_k = k-1,\ k\in[n]),$ we consider $n$ to be odd.  
Specifically, recall that $r\ a\otimes b = b \otimes {b} \triangleright a,$ for all $a,b \in {\mathbb Z}_n,$ and
if for instance $q_1 = x_1\otimes x_2,$ then $q_2 = x_2 \otimes  x_3,\ q_3 = x_3 \otimes x_4, \ldots, q_n = x_n\otimes x_1.$ 
The $n-1$ disconnected diagrams on the left and the $n$ disconnected diagrams on the right in Figure 8 provide clusters of eigenstates of the dihedral solution $r,$ as will be shown in the proposition below.
We focus in the following proposition in the case of the dihedral quandle for $n$ odd. 
\begin{pro}  \label{dih0}
Let $ X = {\mathbb Z}_n,$ $n$ odd, and $r= \underset{a,b\in X}{\sum} e_{b,a} \otimes e_{b\triangleright a, b}$ where $(X, \triangleright)$ is the dihedral quandle, such that for all $a,b \in X$ $a\triangleright b =2a-b,$ $modn$ Then,
\begin{enumerate}
   \item The eigenvalues of the $r$-matrix are the $n$-roots of unity, $\Lambda_k = e^{\frac{2\pi ki}{n}},$ $k\in [n].$
   \item The decomposition of $V_n^{\otimes}$ ($V_n= {\mathbb C}^n$) in terms of eigenspaces of $r$ is as follows,
   \begin{equation}
   V_n^{\otimes 2} =\bigoplus_{k=1}^{n-1} V^{(\Lambda_k)}_{n-1} \oplus  V^{(\Lambda_n)}_{2n-1},    \label{dec}
   \end{equation}
   where the superscripts denote the eigenvalue of $r$ and the subscripts denote
the dimension of the corresponding eigenspace. 
\end{enumerate}
\end{pro}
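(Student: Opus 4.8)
The plan is to exploit the fact, noted just above the proposition, that $r$ is a \emph{combinatorial} (permutation) matrix on the standard basis $\big\{x\otimes y : x,y\in\mathbb{Z}_n\big\}$ of $V_n^{\otimes 2}$. By (\ref{rr234}) the operator $r$ sends $x\otimes y$ to $y\otimes(2y-x)$, so it is the linearization of the bijection $\sigma:\mathbb{Z}_n\times\mathbb{Z}_n\to\mathbb{Z}_n\times\mathbb{Z}_n$, $\sigma(x,y)=(y,2y-x)$. A permutation matrix is diagonalizable and its spectrum is determined entirely by the cycle type of $\sigma$: a cycle of length $\ell$ acts on its $\ell$-dimensional span as a cyclic shift, i.e.\ as a circulant whose eigenvalues are exactly the $\ell$-th roots of unity, each with multiplicity one. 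Hence both claims reduce to counting the cycles of $\sigma$ and recording their lengths.

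The key step is therefore the orbit analysis of $\sigma$. First, the diagonal vectors are fixed: since $x\triangleright x=x$, we have $\sigma(x,x)=(x,x)$, giving $n$ one-cycles. For the off-diagonal orbits I would track the recurrence hidden in $\sigma$: if the orbit of $x\otimes y$ passes through a state $(a_k,a_{k+1})$, then the next state is $(a_{k+1},2a_{k+1}-a_k)$, so the first coordinates obey $a_{k+2}=2a_{k+1}-a_k$ in $\mathbb{Z}_n$. This linear recurrence is solved by the arithmetic progression $a_k=x+k\,d$ with $d:=y-x$, whence the orbit of $x\otimes y$ is
\[
\big\{\,(x+kd,\;x+(k+1)d) : k\geq 0\,\big\},
\]
and the common difference $d$ is constant along the orbit. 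Consequently the orbit closes after the least $k>0$ with $kd\equiv 0 \pmod n$, that is, after $n/\gcd(n,d)$ steps.

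The clean decomposition of Figure 6 corresponds to all $n(n-1)$ off-diagonal basis vectors splitting into $n-1$ cycles of length $n$; by the orbit-length formula this amounts to $\gcd(n,d)=1$ for each admissible difference $d$, so that every off-diagonal orbit has full length $n$. Granting this, the spectral assembly is immediate: each length-$n$ cycle contributes every $n$-th root of unity $\Lambda_1,\dots,\Lambda_n$ exactly once, while each of the $n$ diagonal fixed points contributes the eigenvalue $\Lambda_n=1$ once. Summing multiplicities gives $\Lambda_k$ with multiplicity $n-1$ for $1\le k\le n-1$ and $\Lambda_n=1$ with multiplicity $(n-1)+n=2n-1$, which is exactly the asserted decomposition (\ref{dec}); the dimensions add up consistently as $(n-1)^2+(2n-1)=n^2=\dim V_n^{\otimes 2}$.

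The main obstacle is the orbit-counting step: one must pin down the lengths of the off-diagonal orbits and their number, and it is precisely here that the arithmetic of $\mathbb{Z}_n$ enters through $\gcd(n,d)$. The uniform picture of $n-1$ cycles of length $n$ is what forces the stated eigenspace dimensions, and the delicate bookkeeping is in seeing how the $n$ fixed diagonal states combine with the one eigenvalue-$1$ contribution of each off-diagonal cycle to produce the extra multiplicity yielding $V^{(\Lambda_n)}_{2n-1}$. By contrast, the passage from cycle lengths to root-of-unity eigenvalues, via the circulant structure of a cyclic permutation, is entirely routine.
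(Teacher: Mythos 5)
Your reduction to the cycle structure of the permutation $\sigma(x,y)=(y,2y-x)$ is in substance the same argument as the paper's: the paper reads the relation $r^{n}=1$ off the cycles of the automaton in Figure 6 and then diagonalizes along each cycle with the discrete-Fourier projectors $G_k=\frac{1}{\sqrt n}\big(1+\Lambda_k^{-1}r+\dots+\Lambda_k^{-(n-1)}r^{n-1}\big)$ applied to the states $x_1\otimes x_m$, which is precisely your ``circulant on each cycle'' step, while the $n$ diagonal fixed points supply the extra eigenvalue-$1$ contributions making up $V^{(\Lambda_n)}_{2n-1}$. Your derivation of the invariant difference $d=y-x$ and of the orbit $\{(x+kd,\,x+(k+1)d)\}$ via the recurrence $a_{k+2}=2a_{k+1}-a_k$ is correct, and is in fact more explicit than anything written in the paper.

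However, the step you flag with ``granting this'' is a genuine gap, and it cannot be closed at the stated level of generality. Your own orbit-length formula gives length $n/\gcd(n,d)$, and $\gcd(n,d)=1$ for \emph{every} nonzero $d\in\mathbb{Z}_n$ if and only if $n$ is prime; oddness of $n$ does not suffice. Concretely, for $n=9$ and $d=3$ the orbit of $x_1\otimes x_4$ (i.e.\ of $(0,3)$) is $(0,3)\to(3,6)\to(6,0)\to(0,3)$, of length $3$: the nine off-diagonal states of difference $3$ split into three $3$-cycles, and likewise for $d=6$. A $3$-cycle contributes only cube roots of unity, so for $n=9$ the eigenvalue $e^{2\pi i/9}$ has multiplicity $6$ rather than $n-1=8$, while $e^{2\pi i/3}$ has multiplicity $12$ and $1$ has multiplicity $21$; the decomposition (\ref{dec}) therefore fails for odd composite $n$, even though the dimension check $(n-1)^2+(2n-1)=n^2$ holds formally and cannot detect this. (The paper's proof has the same tacit assumption: on a short orbit the projector $G_k\,x_1\otimes x_m$ vanishes whenever $\Lambda_k$ is not a root of unity of the cycle length, so the claimed $n-1$ eigenstates per $\Lambda_k$ also presuppose the full-length cycles asserted in Figure 6.) With the $\gcd$ step made honest, your argument proves the proposition exactly for $n$ an odd prime; for general $n$ it instead yields the corrected multiplicities, with each divisor $g=\gcd(n,d)$ contributing $g$ cycles of length $n/g$ and correspondingly redistributed eigenvalue multiplicities among the $(n/g)$-th roots of unity.
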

\begin{proof}

$ $

\begin{enumerate}
\item The first part immediately follows from the dihedral quandle automaton in Figure 8, where we observe that $r^n=1$ it then follows that the eigenvalues of $r$ are the $n$-roots of unity, $\Lambda_k=  e^{\frac{2\pi k i}{n}},$ $k\in [n].$ To identify the corresponding eigenstates we consider the state set of states $\big \{x_1\otimes x_2, x_1 \otimes x_3, \ldots , x_1\otimes x_n\big \}$ as initial states. 
We define $G_k: = \frac{1}{\sqrt{n}}(1 + \Lambda_k^{-1} r + \Lambda_k^{-2} r^2 + \ldots + \Lambda_k^{-(n-1)}r^{n-1})$ and 
\begin{equation}
\hat b_{1m}^{(k)} := G_k\ x_1 \otimes x_m,~~2\leq m \leq n,~~ k \in [n], \quad \mbox{and} \quad  \hat b^{(n)}_{m} := x_m \otimes x_m, ~~ m\in [n]. \nonumber
\end{equation}
It then follows that, $r\ \hat b_{1m}^{(k)} =\Lambda^{(k)} \hat b^{(k)}_{1m}$ and  $r\ \hat b^{(n)}_{m} = \hat b^{(n)}_m.$ We show for instance
\begin{eqnarray}
r\ \hat b_{12}^{(k)} = \frac{1}{\sqrt{n}} \big (\Lambda_k^{-(n-1)}x_1 \otimes x_2 + \ldots +x_n\otimes x_1\big )
= \Lambda_k\hat b^{(k)}_{12}.  \nonumber
\end{eqnarray}

\item The geometric multiplicities are described below. 
\begin{itemize}
    \item For $\Lambda_k = e^{\frac{2k\pi i }{n}},$ $k\in [n-1]$    
    there are $n-1$ eigenstates $\hat b_{1m}^{(k)},$ $~k\in [n-1],$ $2\leq m \leq n,$ orthogonal to each other, i.e. they form the basis of an $n-1$ dimensional vector space.

\item For $\Lambda_n =1,$ there are $n-1$ eigenstates $\hat b^{(n)}_{1m},$ $2\leq m\leq n$ and $n$ eigenstates $\hat b^{(n)}_m$, $m\in [n],$ i.e. there are $2n-1$ eigenstates in total, which are orthogonal to each other forming the basis of an $2n-1$ dimensional vector space.
\end{itemize}
Notice also that all eigenspaces are orthogonal to each other, because $r$ is an orthogonal matrix (this is also easily explicitly checked). That is, the decomposition of the $V_n^{\otimes}$ space (\ref{dec}) holds. \qedhere
\end{enumerate}
\end{proof}

We will work out an explicit example ($n=3$) for the quandle automaton using the dihedral quandle. 
\begin{exa} \label{exaq1}
We first consider as an illustrative example the dihedral quandle for $n=3,$ i.e. $r = \underset{a,b \in {\mathbb Z}_3}{\sum} e_{b,a} \otimes e_{b\triangleright a, b},$ such that for all $a,b \in {\mathbb Z}_3,$ $a\triangleright b = 2a-b,$ where the addition and subtraction are defined $mod3.$ The corresponding combinatorial automaton, showing the action of $r$ on the states $x_i \otimes x_j,$ $x_i, x_j \in {\mathbb Z}_3,$  consists of the following disconnected graphs ($x \in {\mathbb Z}_3,$ specifically, $x_1 =0, x_2 =1, x_3=2$):
\begin{center}
\begin{figure}[ht]
\footnotesize
\begin{tikzpicture}[shorten >=1pt,node distance=2.2cm,on grid,auto] 
\node[state, inner sep=0, minimum size=2.0em, initial, right of=q_4] (q1) {${ x_1x_2}$};
\node[state,inner sep=0, minimum size=2.0em, right of=q1] (q2) {${ x_2x_3}$};
\node[state,inner sep=0, minimum size=2.0em, right of=q2] (q3) {$x_3x_1$};
\node[state,inner sep=0, minimum size=2.0em, initial, right of=q3] (q4) {${ xx}$};
\path[->] 
(q1) edge[left, above] node{$s$} (q2)
(q2) edge[left, above] node{$s$} (q3)
(q3) edge[bend left, below] node{$s$} (q1)
(q4) edge[loop above] node{$s$} (q4);
\end{tikzpicture}
\end{figure}
\end{center}
\begin{center}
\begin{figure}[ht]
\footnotesize
\begin{tikzpicture}[shorten >=1pt,node distance=2.2cm,on grid,auto] 
\node[state, inner sep=0, minimum size=2.0em, initial] (q1) {${ x_1x_3}$};
\node[state,inner sep=0, minimum size=2.0em, right of=q1] (q2) {${x_3x_2}$};
\node[state,inner sep=0, minimum size=2.0em, right of=q2] (q3) {$x_2x_1$};
\path[->] 
(q1) edge[left, above] node{$s$} (q2)
(q2) edge[left, above] node{$s$} (q3)
(q3) edge[bend left, below] node{$s$} (q1);
\end{tikzpicture}
\end{figure}
\end{center}
In this case $r^3 =1,$  the eigenvalues of $r$ are $\Lambda_k = e^{\frac{2k\pi  i}{3}},$ $k \in \big \{1,2,3\big \}$ and the
 generator of eigenstates is defined then as $G_k := \frac{1}{\sqrt{3}} (1 + \Lambda^{-1}_k r + \Lambda_k^{-2} r^2).$ We then define, $\hat b^{(k)}_{1m}:= G_k \ x_1 \otimes  x_m,$ $m \in \big \{2,3 \big \}$ and $\hat b^{(3)}_{m}:= x_m \otimes x_m,$  $m\in [3].$ 

The eigenvalues and the corresponding normalized eigenstates of $r$ are given below.
\begin{enumerate}
    \item $\Lambda_k =e^{\frac{2 k\pi  i}{3}},$ $k\in \big \{1,2\big \}$ and the corresponding eigenstates are,
    \begin{eqnarray}
 && \hat b_{12}^{(k)} = \frac{1}{\sqrt{3}}(x_1 \otimes  x_2 + \Lambda_k^{-1}  x_2 \otimes x_3 + \Lambda_k^{-2}  x_3 \otimes  x_1),  
 ~~~\hat b_{13}^{(k)} = \frac{1}{\sqrt{3}}(x_1 \otimes x_3 + \Lambda_k^{-1} x_3 \otimes x_2 + \Lambda_k^{-2}   x_2 \otimes  x_1). \nonumber
 \end{eqnarray}
Then $\big \{\hat b^{(k)}_{12}, \hat b_{13}^{(k)}\ \big \}$ is the orthonormal basis of a two
dimensional vector space denoted $V_2^{(\Lambda_k)}.$ The two vector spaces are orthogonal to each other.
\item $\Lambda_3 =1 ,$ and eigenstates,
    \begin{eqnarray}
    && \hat b_1^{(3)} =x_1 \otimes x_1, ~~~~\hat b_{2}^{(3)}  = x_2 \otimes x_2, ~~~\hat b_3^{(3)}= x_3 \otimes x_3\nonumber\\    
    && \hat b_{12}^{(3)} =\frac{1}{\sqrt{3}}(x_1 \otimes  x_2 +  x_2 \otimes x_3 +  x_3  \otimes x_1), ~~~\hat b_{13}^{(3)} =\frac{1}{\sqrt{3}}(x_1 \otimes x_3 +   x_3 \otimes x_2 + x_2 \otimes  x_1)\nonumber
    \end{eqnarray}
     $\big \{\hat b_1^{(3)},\hat b_2^{(3)},\hat b_2^{(3)}, \hat b_{12}^{(3)}, \hat b_{13}^{(3)}\big \}$ is an orthonormal basis of a five dimensional vector space denoted $V_5^{(\Lambda_3)}.$     
     \end{enumerate}
And as expected we deduce
\begin{equation}
V_3^{\otimes 2 } = V_2^{(\Lambda_1)} \oplus V^{(\Lambda_2)}_2 \oplus V_5^{(\Lambda_3)}.
\end{equation}
\end{exa}


\subsection{Centralizers}
We first introduce the rack or quandle group which is a subset of the rack or quandle algebra introduced in \cite{DoRySt}.
\begin{defn} \label{groupq}
Let $X$ be a non-empty set and $(X, \triangleright)$ be a rack. The group ${\cal G}$ generated by elements 
$q_a, q_a^{-1},$ $1_{\cal G}$ (unit element) and relations for all $a,b \in X,$ 
\begin{equation}
q_aq_a^{-1} = q_a^{-1}q_a= 1_{\cal G},~~q_aq_b = q_{b} q_{b\triangleright a.} \nonumber
    \end{equation}
is called a {\it rack group}. If $ (X, \triangleright )$ is a quandle then the group is called a quandle group.
\end{defn}
As any group, the rack group ${\cal G}$ is also a Hopf algebra $k{\cal G}$ over some field $k,$ equipped with a coproduct, 
counit (group homomorphisms) and antipode (anti-homomorphism) (see also \cite{DoRySt}), for all $a\in X$:
\begin{enumerate}
\item $\Delta: {\cal G} \to {\cal G} \otimes {\cal G},$ $\Delta(q_a) = q_a \otimes q_a.$
\item $\epsilon: {\cal G}\to k,$ $\epsilon(q_a) = 1.$
\item  $s: {\cal G} \to {\cal G},$ $s(q_a)=q_a^{-1}.$
\end{enumerate}
Coassociativity holds $(\Delta \otimes \id)\Delta = ( \id\otimes \Delta)\Delta,$ hence $\Delta^{(N)}(q_a) = q_a \otimes \ldots \otimes q_a,$ $a\in X.$

We consider the fundamental representation of a rack group (focus on finite sets of cardinality $n$),
let $\rho: {\cal G} \to \End({\mathbb C}^n),$ such that
\begin{equation}
q_a \mapsto M_a := \sum_{b\in X}e_{b,a\triangleright b}. \label{mm}
\end{equation}
We also define $\Delta^{(N)}(M_a) := \rho^{\otimes N} \Delta^{(N)}(q_a) \in \End(({\mathbb C}^n)^{\otimes N}).$
\begin{lemma} 
Let $r= \underset{b\in X}{\sum} e_{b,a} \otimes e_{b\triangleright a, b} \in \End(({\mathbb C}^n)^{\otimes 2})$ 
be an invertible solution of the braid equation. Recall also that for all $a\in X,$ $M_a \in \End({\mathbb C}^n)$ is defined in (\ref{mm}) and  $\Delta^{(N)}(M_a) : = M_a \otimes \ldots \otimes M_a.$
\begin{enumerate}
\item Then,
\begin{equation}
r \Delta(M_a) = \Delta(M_a) r, ~~\forall a\in X.\nonumber
\end{equation}
\item Let also $r_j := 1 \otimes \ldots \otimes \underbrace{r}_{j, j+1\ positions} \otimes \ldots \otimes 1,$ $j\in [N-1],$ then 
\begin{equation}
r_j \Delta^{(N)}(M_a) = \Delta^{(N)}(M_a) r_j,~~~\forall a\in X.
\end{equation}
\item Let $\psi \in {\mathbb C}^n \otimes {\mathbb C}^n,$ such that $r\ \psi = \Lambda \psi,$ $\Lambda \in {\mathbb C},$ 
and define $\psi'_a := \Delta(M_a)\psi$ for all $a \in X,$ then $r\ \psi'_a = \Lambda \psi'_a.$ 
\end{enumerate}
\end{lemma}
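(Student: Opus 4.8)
The plan is to prove part (1) first, since parts (2) and (3) are immediate corollaries. I would work directly on the standard basis $\{\hat e_x\}_{x\in X}$ of ${\mathbb C}^n$. The first observation is a convenient reformulation of $M_a$: from $M_a = \sum_{b\in X} e_{b,\,a\triangleright b}$ one gets $M_a\hat e_c = \sum_{b} \delta_{a\triangleright b,\,c}\,\hat e_b$, and because $(X,\triangleright)$ is a rack the map $L_a(b)=a\triangleright b$ is a bijection, so there is a unique such $b$, namely $b=a\triangleright^{-1}c$. Hence $M_a\hat e_c = \hat e_{a\triangleright^{-1}c}$, i.e. $M_a$ simply implements $L_a^{-1}$ on basis vectors. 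Recalling $\Delta(M_a)=M_a\otimes M_a$ and the basis action $r(\hat e_x\otimes\hat e_y)=\hat e_y\otimes\hat e_{y\triangleright x}$ (from \eqref{rr23}), I would then evaluate both compositions on an arbitrary $\hat e_c\otimes\hat e_d$.

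A short computation gives
\[
r\,\Delta(M_a)\,(\hat e_c\otimes\hat e_d)=\hat e_{a\triangleright^{-1}d}\otimes\hat e_{(a\triangleright^{-1}d)\triangleright(a\triangleright^{-1}c)},\qquad
\Delta(M_a)\,r\,(\hat e_c\otimes\hat e_d)=\hat e_{a\triangleright^{-1}d}\otimes\hat e_{a\triangleright^{-1}(d\triangleright c)}.
\]
So the whole identity reduces to the single set-theoretic equation $(a\triangleright^{-1}d)\triangleright(a\triangleright^{-1}c)=a\triangleright^{-1}(d\triangleright c)$, i.e. $L_a^{-1}(d)\triangleright L_a^{-1}(c)=L_a^{-1}(d\triangleright c)$. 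This is the genuine content of the lemma and the step I expect to be the crux: it says $L_a^{-1}$ is an endomorphism of $(X,\triangleright)$. I would argue it as follows. Self-distributivity \eqref{eq:shelf} reads exactly $L_a(b\triangleright c)=L_a(b)\triangleright L_a(c)$, so each $L_a$ is a homomorphism, and since it is bijective (rack axiom) it is an automorphism; applying $L_a^{-1}$ to $L_a\big(L_a^{-1}(x)\triangleright L_a^{-1}(y)\big)=x\triangleright y$ then shows $L_a^{-1}$ is also a homomorphism, which is precisely the required identity. The only subtlety to handle carefully is bookkeeping with the inverse operation $\triangleright^{-1}$ and confirming the homomorphism property is genuinely inherited by $L_a^{-1}$ rather than assumed.

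Part (2) I would obtain by factorizing over tensor slots: writing $r_j=\id^{\otimes(j-1)}\otimes r\otimes\id^{\otimes(N-j-1)}$ and $\Delta^{(N)}(M_a)=M_a^{\otimes N}$, the commutation is trivial on every slot other than $j,j+1$ (where one factor is the identity) and is exactly part (1) on slots $j,j+1$; since both operators are tensor products, the commutators multiply slotwise and vanish. Finally, part (3) is immediate: with $\psi'_a=\Delta(M_a)\psi$ and $r\psi=\Lambda\psi$, part (1) gives $r\psi'_a=r\Delta(M_a)\psi=\Delta(M_a)r\psi=\Lambda\,\Delta(M_a)\psi=\Lambda\psi'_a$, so $\Delta(M_a)$ preserves each eigenspace of $r$. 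I would remark that this is the set-theoretic analogue of the quantum-group invariance \eqref{symm1} used earlier, with the diagonal rack-group action $M_a\otimes M_a$ playing the role of $\pi^{\otimes 2}\Delta(y)$.
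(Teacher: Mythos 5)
Your proof is correct and follows essentially the same route as the paper, which simply asserts part (1) ``by direct matrix multiplication'' and derives (2) and (3) as immediate consequences — exactly your slotwise and eigenvector arguments. Your computation supplies the detail the paper leaves implicit, correctly reducing the matrix identity to the set-theoretic statement that $L_a^{-1}$ is a homomorphism of $(X,\triangleright)$, which follows from self-distributivity (each $L_a$ is an automorphism by the rack axiom), and this matches the paper's own observation after the lemma that $\Delta(M_a)\, (a\triangleright b) \otimes (a\triangleright c) = b \otimes c$.
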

\begin{proof} The first part is shown by direct matrix multiplication, 
whereas the second and third part are immediate consequences of the first part.
\end{proof}
We now focus on the fundamental representation of the rack (or quandle) group, $M_a =\underset{b \in X}{\sum} e_{b, a\triangleright b},$ and try to work out tensor representations of the group. In general, we observe that $\Delta(M_a)\ a\triangleright b \otimes a\triangleright c =b \otimes c, $ for all $a, b, c\in X.$  We work out a specific example below for $(X, \triangleright)$ being the dihedral quandle for $n=3.$ A general study on the representation theory of rack and quandle groups will be presented in a separate investigation.

\begin{exa}
We consider the dihedral quandle for $n=3,$ (see also Example \ref{exaq1}). We then identify the three generators (recall $x_1 =0, x_2 =1, x_3 =2.$)
$$M_{x_1} = \begin{pmatrix} 1 &0 &0\\ 0&0 & 1\\ 0& 1 &0 \end{pmatrix},
~~ M_{x_2} = \begin{pmatrix} 0 &0 &1\\ 0&1 & 0\\ 1& 0 &0 \end{pmatrix},~~M_{x_3} = \begin{pmatrix} 0 &1 &0\\ 1&0 & 0\\ 0& 0 &1 \end{pmatrix}$$
Let ${\mathrm M}_a := \Delta(M_a),$ $a\in X$  and recall from Example \ref{exaq1}, $\Lambda_k = e^{\frac{2k\pi i}{3}},$ $k\in [3]$ are the eigenvalues of $r,$ whereas $\hat b^{(k)}_{1m},$ $k\in [3],\ m\in \big \{2, 3\big \}$ and $ \hat b^{(3)}_{m},$ $m\in [3]$ are the corresponding eigenstates. We then observe that $M_a^2 = 1_3,$ $a \in {\mathbb Z}_3$ and 
\begin{eqnarray}
&& {\mathrm M}_{x_1} \hat b^{(k)}_{12} = \hat b^{(k)}_{13},~~~{\mathrm M}_{x_2} \hat b^{(k)}_{12} = \Lambda_k\hat b^{(k)}_{13},
~~~{\mathrm M}_{x_3} \hat b^{(k)}_{12} = \Lambda_k^{-1}\hat b^{(k)}_{13}\nonumber \\ 
&&{\mathrm M}_{x_1} \hat b^{(3)}_1 = \hat b_1^{(3)}, ~~~{\mathrm M}_{x_1} \hat b^{(3)}_2 = \hat b_3^{(3)}, ~~~{\mathrm M}_{x_1} \hat b^{(3)}_3 = \hat b_2^{(3)} \nonumber \\
&&{\mathrm M}_{x_2} \hat b^{(3)}_1 = \hat b_3^{(3)}, ~~~{\mathrm M}_{x_2} \hat b^{(3)}_2 = \hat b_2^{(3)}, ~~~{\mathrm M}_{x_2} \hat b^{(3)}_3 = \hat b_1^{(3)} \nonumber \\
&&{\mathrm M}_{x_3} \hat b^{(3)}_1 = \hat b_2^{(3)}, ~~~{\mathrm M}_{x_3} \hat b^{(3)}_2 = \hat b_1^{(3)}, ~~~{\mathrm M}_{x_3} \hat b^{(3)}_3 = \hat b_3^{(3)}. \label{qua3}
\end{eqnarray}
\end{exa}
In general, we can define the dihedral quandle group automaton, which consists of an alphabet $\Sigma  = {\mathbb Z}_n,$ a set of states ${\mathbb B} = \big \{ b^{(k)}_{1m}, b^{(n)}_{m} \big \},$ $k \in [n]$ (the states are given in Example \ref{exaq1}) and transition matrices ${\mathrm M}_a:= \Delta(M_a): {\mathbb B} \to {\mathbb B},$ $a \in \Sigma,$ given in (\ref{qua3}). Figures 9 and 10 below depict the dihedral quandle group automaton for $n=3$ (see equations \ref{qua3}). Let $x_j\in {\mathbb Z}_3,$ $c_{1,k} = 1,$ $c_{2,k} =\Lambda_k$ and $c_{3,k} =\Lambda_k^{-1},$ $k \in [2].$
\begin{center}
\begin{figure}[ht]
\footnotesize
\begin{tikzpicture}[shorten >=1pt,node distance=3.0cm,on grid,auto] 
\node[state, inner sep=0, minimum size=2.0em, initial] (q1) {${ \hat b^{(k)}_{12}}$};
\node[state,inner sep=0, minimum size=2.0em, right of=q1] (q2) {$ \hat b^{(k)}_{13}$};
\path[->] 
(q1) edge[bend left, above] node{${x_j};c_{j,k}$} (q2)
(q2) edge[bend left, below] node{${x_j};c^{-1}_{j,k}$} (q1);
\end{tikzpicture}
\caption{Two dimensional representation} \label{2d}
\end{figure}
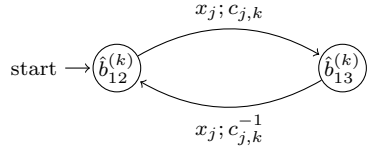
\end{center} 
\begin{center}
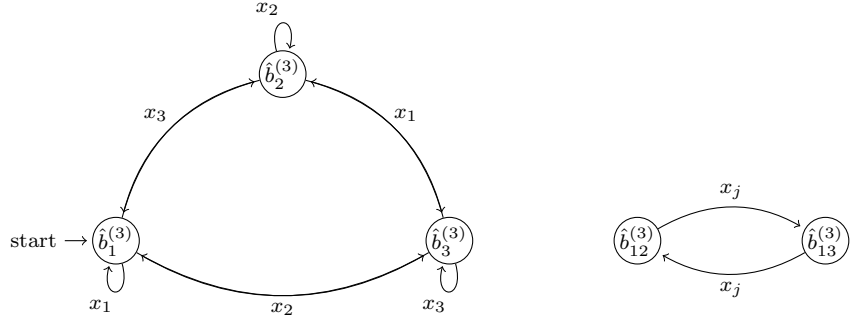
\begin{figure}[ht]
\footnotesize
\begin{tikzpicture}[shorten >=2pt,node distance=2.5cm] 
   \node[state,inner sep=0, minimum size=2.0em, initial] (q_0)   {$\hat b^{(3)}_1$}; 
   \node[state,inner sep=0, minimum size=2.0em] (q_1) [above right=of q_0] {$\hat b^{(3)}_2$}; 
   \node[state,inner sep=0, minimum size=2.0em] (q_2) [below right=of q_1] {$\hat b^{(3)}_3$}; 
   \node[state,inner sep=0, minimum size=2.0em, right of=q_2] (q_3) {$\hat b^{(3)}_{12}$};
   \node[state,inner sep=0, minimum size=2.0em, right of=q_3] (q_4)  {$\hat b^{(3)}_{13}$};   
   \path[->] 
    (q_0) edge[bend left, above]  node {$x_3~~~~$} (q_1)  
    (q_0) edge[loop below]  node {$x_1~~~~$} (q_0)
    (q_1) edge[loop above]  node {$x_2~~~~$} (q_1) 
    (q_2) edge[loop below]  node {$x_3~~~~$} (q_2)    
    (q_1) edge[bend right, below]  node {$ $} (q_0) 
    (q_1) edge[bend left, above]  node [swap]  {$~~~x_1$} (q_2)
    (q_2) edge[bend right, below]  node  [swap] {$ $} (q_1)          
    (q_0) edge[bend right, above]  node  {$ $} (q_2)
 (q_2) edge[bend left, below]  node  {$x_2$} (q_0) 
 (q_3) edge[bend left, above] node{${x_j}$} (q_4)
(q_4) edge[bend left, below] node{${x_j}$} (q_3);
\end{tikzpicture}
\caption{Three and two dimensional representations} \label{23d}
\end{figure}
\end{center}
From equations (\ref{qua3}) and the automaton graph above we read of the two and three dimensional irreducible representation of the quandle dihedral group for $n=3$ with generators $q_a,$ $a\in X$  and relations given in Definition \ref{groupq} ($(X,\triangleright)$ is the dihedral quandle).

The action of the dihedral quandle group, generated by $q_a,$ $a\in {\mathbb Z}_3$ on the five dimensional vector with basis $\big \{\hat b^{(3)}_{12}, \hat b^{(3)}_{13}, \hat b^{(3)}_{1}, \hat b^{(3)}_{2}, \hat b^{(3)}_{3} \big \}$ (see Proposition \ref{dih0} and Example \ref{exaq1}) decomposes into two irreducible representations of dimension three and two, as illustrated in Figure 10. This structure is reflected in the corresponding automaton, which contains two disconnected graphs (Figure 10). Generally, a reducible representation is characterized by such disconnected graphs, where each component corresponds directly to a distinct block in the block-diagonal decomposition. Conversely, irreducible representations yield fully connected graphs. Furthermore, the two-dimensional representation shown in Figure 9 is non-faithful, as all generators \(q_{a}\) (\(a \in {\mathbb Z}_3\)) map to the identical \(2 \times 2\) anti-diagonal matrix, \(\operatorname{antidiag}(1,1)\).

{
This process can be generalized for \(N > 2\) to find higher tensor representations of \(q_{a}\) (\(a \in X\)) by deriving the eigenvalues and eigenstates of the open spin-chain Hamiltonian, \(\mathcal{H}^{\pm }=\underset{j\in [N-1]}{\sum}\left(r_{j}\pm r_{j}^{-1}\right).\) This operator represents specific linear combinations of all length-one words of the braid group, analogous to the Hecke algebra discussion in the previous section. In the special case of the quandle solution on \(X = \{x_1, x_2, \ldots, x_n\}\), recall that the linear maps \(r, r^{-1}: {\mathbb C}^n \otimes {\mathbb C}^n \to {\mathbb C}^n \otimes {\mathbb C}^n\) act on the basis elements as: \(r\ e_{a}\otimes e_{b}=e_{b}\otimes e_{b\triangleright a}.\) and $r^{-1} = r^{T}.$ Specific examples  are detailed in Proposition \ref{dih0} and Examples \ref{exx}, \ref{exaq1}. Under these conditions, the Hamiltonian expands explicitly as:
\begin{equation}{\cal H}^{\pm} = \sum_{j\in [N-1]} \sum_{a,b \in X} \big ( e^{(j)}_{b,a} e^{(j+1)}_{b\triangleright a, b} \pm  e^{(j)}_{a,b} e^{(j+1)}_{b, b\triangleright a} \big ), \label{new}
\end{equation}
where $e^{(j)}_{a,b} = \id^{j-1} \otimes e_{a,b} \otimes \id^{N-j}.$
The spectral decomposition of this new class of Hamiltonians (\ref{new}) for arbitrary \(N > 2\) is considerably more complex and will be investigated separately. Nonetheless, we observe that \(\mathcal{H}^{\pm }\) are (skew-)symmetric matrices and are therefore unitarily diagonalizable. In general, these \(r\)-matrices satisfy a higher-order condition, \(r^m = \id\), \(m > 2\). Consequently, several critical challenges—such as Baxterisation, the derivation of associated quantum algebras, extending the findings in \cite{DoRySt}, and the explicit construction of local integrable Hamiltonians of type (\ref{new})—remain highly significant and will be addressed in future work.}

\subsection*{Acknowledgments}
\noindent 
I am indebted to M.V. Lawson for numerous illuminating discussions on finite state automata.
I am also thankful to D. Johnston and F. Tesolin for useful discussions and comments.

\end{document}